\renewcommand{\algocf@captiontext}[2]{#1\algocf@typo. \AlCapFnt{}#2} 
\def\@algocf@capt@plain{top}
\renewcommand{\algocf@makecaption}[2]{%
  \addtolength{\hsize}{\algomargin}%
  \sbox\@tempboxa{\algocf@captiontext{#1}{#2}}%
  \ifdim\wd\@tempboxa >\hsize
    \hskip .5\algomargin%
    \parbox[t]{\hsize}{\algocf@captiontext{#1}{#2}}
  \else%
    \global\@minipagefalse%
    \hbox to\hsize{\box\@tempboxa}
  \fi%
  \addtolength{\hsize}{-\algomargin}%
}
\DeclareMathOperator*{\argmin}{arg\,min}
\renewcommand{\b}[1]{\bm{#1}} 
\newtheorem{lemma}{Lemma}
\newtheorem{theorem}{Theorem}
\newtheorem{corollary}{Corollary}
\newtheorem{proposition}{Proposition}
\begin{document}

\title{\textbf{Semi-Exact Control Functionals From Sard's Method}}
\author{Leah F. South$^1$, Toni Karvonen$^2$, Chris Nemeth$^3$, \\
Mark Girolami$^{4,2}$, Chris. J. Oates$^{5,2}$ \\ \small
$^1$Queensland University of Technology, Australia \\ \small
$^2$The Alan Turing Institute, UK \\ \small
$^3$Lancaster University, UK \\ \small
$^4$University of Cambridge, UK \\ \small
$^5$Newcastle University, UK }
\maketitle

\begin{abstract}
The numerical approximation of posterior expected quantities of interest is considered.  
A novel control variate technique is proposed for post-processing of Markov chain Monte Carlo output, based both on Stein's method and an approach to numerical integration due to Sard. 
The resulting estimators are proven to be polynomially exact in the Gaussian context, while empirical results suggest the estimators approximate a Gaussian cubature method near the Bernstein-von-Mises limit. 
The main theoretical result establishes a bias-correction property in settings where the Markov chain does not leave the posterior invariant.
Empirical results are presented across a selection of Bayesian inference tasks. 
All methods used in this paper are available in the \texttt{R} package \texttt{ZVCV}.

\vspace{5pt}
\noindent
\textit{Keywords:} control variate; Stein operator; variance reduction.
\end{abstract}

\section{Introduction}
This paper focuses on the numerical approximation of integrals of the form
$$
I(f) = \int f(\b{x}) p(\b{x}) \mathrm{d} \b{x} ,
$$
where $f$ is a function of interest and $p$ is a positive and continuously differentiable probability density on $\mathbb{R}^d$, under the restriction that $p$ and its gradient can only be evaluated pointwise up to an intractable normalisation constant.
The standard approach to computing $I(f)$ in this context is to simulate the first $n$ steps of a $p$-invariant Markov chain $(\b{x}^{(i)})_{i=1}^\infty$, possibly after an initial burn-in period, and to take the average along the sample path as an approximation to the integral:
\begin{equation}\label{eqn:MC}
I(f) \approx I_{\text{MC}}(f) = \frac{1}{n}\sum_{i=1}^n f(\b{x}^{(i)}).
\end{equation}
See Chapters 6--10 of \cite{Robert2013} for background.
In this paper $\mathbb{E}$, $\mathbb{V}$ and $\mathbb{C}$ respectively denote expectation, variance and covariance with respect to the law $\mathbb{P}$ of the Markov chain.
Under regularity conditions on $p$ that ensure the Markov chain $(\b{x}^{(i)})_{i=1}^\infty$ is aperiodic, irreducible and reversible, the convergence of $I_{\text{MC}}(f)$ to $I(f)$ as $n \rightarrow \infty$ is described by a central limit theorem 
\begin{equation}
\sqrt{n} (I_{\text{MC}}(f) - I(f)) \rightarrow \mathcal{N}(0,\sigma(f)^2) \label{eqn: clt}
\end{equation}
where convergence occurs in distribution and, if the chain starts in stationarity,
\begin{equation*}
\sigma(f)^2 = \mathbb{V}[f(\b{x}^{(1)})] + 2 \sum_{i=2}^\infty \mathbb{C}[f(\b{x}^{(1)}), f(\b{x}^{(i)})]
\end{equation*}
is the asymptotic variance of $f$ along the sample path.
See Theorem 4.7.7 of \cite{Robert2013} and more generally \cite{meyn2012markov} for theoretical background.
Note that for all but the most trivial function $f$ we have $\sigma(f)^2 > 0$ and hence, to achieve an approximation error of $O_P(\epsilon)$, a potentially large number $O(\epsilon^{-2})$ of calls to $f$ and $p$ are required. 

One approach to reduce the computational cost is to employ control variates \citep{Hammersley1964,Ripley1987}, which involves finding an approximation $f_n$ to $f$ that can be exactly integrated under $p$, such that $\sigma(f - f_n)^2 \ll \sigma(f)^2$. 
Given a choice of $f_n$, the standard estimator \eqref{eqn:MC} is replaced with 
\begin{equation}\label{eqn:CV}
I_{\text{CV}}(f) =   \frac{1}{n} \sum_{i=1}^n [f(\b{x}^{(i)}) - f_n(\b{x}^{(i)})] + \underbrace{ \int f_n(\b{x})p(\b{x}) \mathrm{d}\b{x} }_{(*)} ,
\end{equation}
where $(*)$ is exactly computed.
This last requirement makes it challenging to develop control variates for general use, particularly in Bayesian statistics where often the density $p$ can only be accessed in a form that is un-normalised.
In the Bayesian context, \citet{Assaraf1999,Mira2013} and \citet{Oates2017} addressed this challenge by using $f_n = c_n + \mathcal{L}g_n$ where $c_n \in \mathbb{R}$,
$g_n$ is a user-chosen parametric or non-parametric function and $\mathcal{L}$ is an operator, for example the Langevin Stein operator \citep{Stein1972,Gorham2015}, that depends on $p$ through its gradient and satisfies $\int (\mathcal{L} g_n)(\b{x})p(\b{x}) \mathrm{d}\b{x} = 0$ under regularity conditions (see Lemma \ref{lemma:assum}). 
Convergence of $I_{\text{CV}}(f)$ to $I(f)$ has been studied under (strong) regularity conditions and, in particular (i) if $g_n$ is chosen parametrically, then in general $\lim\inf \sigma(f - f_n)^2 > 0$ so that, even if asymptotic variance is reduced, convergence rates are unaffected; (ii) if $g_n$ is chosen in an appropriate non-parametric manner then $\lim \sup \sigma(f - f_n)^2 = 0$ and a smaller number $O(\epsilon^{-2 + \delta})$, $0 < \delta < 2$, of calls to $f$, $p$ and its gradient are required to achieve an approximation error of $O_P(\epsilon)$ for the integral \citep[see][]{Oates2019,Mijatovic2018,Barp2018,Belomestny2017,Belomestny2019a,Belomestny2019}.
In the parametric case $\mathcal{L} g_n$ is called a \emph{control variate} while in the non-parametric case it is called a \emph{control functional}.

Practical parametric approaches to the choice of $g_n$ have been well-studied in the Bayesian context, typically based on polynomial regression models \citep{Assaraf1999,Mira2013,Papamarkou2014,Oates2016,Brosse2019}, but neural networks have also been proposed recently \citep{Zhu2018,Si2020}.
In particular, existing control variates based on polynomial regression have the attractive property of being \emph{semi-exact}, meaning that there is a well-characterized set of functions $f \in \mathcal{F}$ for which $f_n$ can be shown to exactly equal $f$ after a finite number of samples $n$ have been obtained.
For the control variates of \cite{Assaraf1999} and \cite{Mira2013} the set $\mathcal{F}$ contains certain low order polynomials when $p$ is a Gaussian distribution on $\mathbb{R}^d$.
Those authors term their control variates ``zero variance'', but we prefer the term ``semi-exact'' since a general integrand $f$ will not be an element of $\mathcal{F}$. 
Regardless of terminology, semi-exactness of the control variate is an appealing property because it implies that the approximation $I_{\text{CV}}(f)$ to $I(f)$ is exact on $\mathcal{F}$.
Intuitively, the performance of the control variate method is related to the richness of the set $\mathcal{F}$ on which it is exact.
For example, polynomial exactness of cubature rules is used to establish their high order convergence rates using a Taylor expansion argument \citep[e.g.][Chapter~8]{Hildebrand1987}.

The development of non-parametric approaches to the choice of $g_n$ has to-date focused on kernel methods \citep{Oates2017,Barp2018}, piecewise constant approximations \citep{Mijatovic2018} and non-linear approximations based on selecting basis functions from a dictionary \citep{Belomestny2017,South2019}. 
Theoretical analysis of non-parametric control variates was provided in the papers cited above, but compared to parametric methods, practical implementations of non-parametric methods are less well-developed.

In this paper we propose a semi-exact control functional method.
This constitutes the ``best of both worlds'', where at small $n$ the semi-exactness property promotes stability and robustness of the estimator $I_{\text{CV}}(f)$, while at large $n$ the non-parametric regression component can be used to accelerate the convergence of $I_{\text{CV}}(f)$ to $I(f)$. 
In particular we argue that, in the Bernstein-von-Mises limit, the set $\mathcal{F}$ on which our method is exact is precisely the set of low order polynomials, so that our method can be considered as an approximately polynomially-exact cubature rule developed for the Bayesian context.
Furthermore, we establish a bias-correcting property, which guarantees the approximations produced using our method are consistent in certain settings where the Markov chain is not $p$-invariant. 

Our motivation comes from the approach to numerical integration due to \cite{Sard1949}. 
Many numerical integration methods are based on constructing an approximation $f_n$ to the integrand $f$ that can be exactly integrated.
In this case the integral $I(f)$ is approximated using $(*)$ in \eqref{eqn:CV}. 
In Gaussian and related cubatures, the function $f_n$ is chosen in such a way that polynomial exactness is guaranteed \citep[Section~1.4]{Gautschi2004}.
On the other hand, in kernel cubature and related approaches, $f_n$ is an element of a reproducing kernel Hilbert space chosen such that an error criterion is minimised \citep{Larkin1970}.
The contribution of Sard was to combine these two concepts in numerical integration by choosing $f_n$ to enforce exactness on a low-dimensional space $\mathcal{F}$ of functions and use the remaining degrees of freedom to find a minimum-norm interpolant to the integrand.

The remainder of the paper is structured as follows:
Section~\ref{subsec: saard's method} recalls Sard's approach to integration and Section~\ref{subsec: stein operators} how Stein operators can be used to construct a control functional.
The proposed semi-exact control functional estimator $I_{\text{SECF}}$ is presented in Section~\ref{ssec:proposedBSS} and its polynomial exactness in the Bernstein-von-Mises limit is discussed in Section~\ref{subsec: PE in BvM}.
A closed-form expression for the resulting estimator $I_{\text{CV}}$ is provided in Section~\ref{subsec: computation}. 
The statistical and computational efficiency of the proposed semi-exact control functional method is compared with that of existing control variates and control functionals using several simulation studies in Section~\ref{sec: Empirical}. 
Practical diagnostics for the proposed method are established in Section~\ref{sec: theory}.
The paper concludes with a discussion in Section~\ref{sec: Discussion}.

\section{Methods} \label{sec: methods}

In this section we provide background details on Sard's method and Stein operators before describing the semi-exact control functional method. 

\subsection{Sard's Method} \label{subsec: saard's method}

Many popular methods for numerical integration are based on either (i) enforcing \emph{exactness} of the integral estimator on a finite-dimensional set of functions $\mathcal{F}$, typically a linear space of polynomials, or on (ii) integration of a \emph{minimum-norm interpolant} selected from an infinite-dimensional set of functions $\mathcal{H}$. 
In each case, the result is a cubature method of the form
\begin{equation} \label{eq:quadrature-generic}
    I_{\text{NI}}(f) = \sum_{i=1}^n w_i f(\b{x}^{(i)}) 
\end{equation}
for weights $\{w_i\}_{i=1}^n \subset \mathbb{R}$ and points $\{\b{x}^{(i)}\}_{i=1}^n \subset \mathbb{R}^d$. Classical examples of methods in the former category are the univariate Gaussian quadrature rules~\citep[Section~1.4]{Gautschi2004}, which are determined by the unique $\{(w_i, \b{x}^{(i)})\}_{i=1}^n \subset \mathbb{R} \times \mathbb{R}^d$ such that $I_{\text{NI}}(f) = I(f)$ whenever $f$ is a polynomial of order at most $2n-1$, and Clenshaw--Curtis rules~\citep{ClenshawCurtis1960}. Methods of the latter category specify a suitable normed space $(\mathcal{H}, \|\cdot\|_{\mathcal{H}})$ of functions, construct an interpolant $f_n \in \mathcal{H}$ such that 
\begin{equation} \label{eq:minimum-norm}
    f_n \in \argmin_{ h \in \mathcal{H}} \big\{ \norm[0]{h}_{\mathcal{H}} \, \colon \, h(\b{x}^{(i)}) = f(\b{x}^{(i)}) \text{ for } i = 1, \ldots, n \big\}
\end{equation}
and use the integral of $f_n$ to approximate the true integral. 
Specific examples include splines~\citep{Wahba1990} and kernel or Gaussian process based methods~\citep{Larkin1970,OHagan1991,Briol2019}.

If the set of points $\{\b{x}^{(i)}\}_{i=1}^n$ is fixed, the cubature method in \eqref{eq:quadrature-generic} has $n$ degrees of freedom corresponding to the choice of the weights $\{w_i\}_{i=1}^n$.
The approach proposed by \citet{Sard1949} is a hybrid of the two classical approaches just described, calling for $m \leq n$ of these degrees of freedom to be used to ensure that $I_{\text{NI}}(f)$ is exact for $f$ in a given $m$-dimensional linear function space $\mathcal{F}$ and, if $m < n$, allocating the remaining $n-m$ degrees of freedom to select a minimal norm interpolant from a large class of functions $\mathcal{H}$. 
The approach of Sard is therefore exact for functions in the finite-dimensional set $\mathcal{F}$ and, at the same time, suitable for the integration of functions in the infinite-dimensional set $\mathcal{H}$.
Further background on Sard's method can be found in \cite{Larkin1974} and \cite{Karvonen2018}.

However, it is difficult to implement Sard's method, or indeed any of the classical approaches just discussed, in the Bayesian context, since
\begin{enumerate}
    \item the density $p$ can be evaluated pointwise only up to an intractable normalization constant;
    \item to construct weights one needs to evaluate the integrals of basis functions of $\mathcal{F}$ and of the interpolant $f_n$, which can be as difficult as evaluating the original integral.
\end{enumerate}
To circumvent these issues, in this paper we propose to combine Sard's approach to integration with Stein operators \citep{Stein1972,Gorham2015}, thus eliminating the need to access normalization constants and to exactly evaluate integrals.
A brief background on Stein operators is provided next.

\subsection{Stein Operators} \label{subsec: stein operators}

Let $\cdot$ denote the dot product $\b{a} \cdot \b{b}$ = $\b{a}^\top \b{b}$, $\nabla_{\b{x}}$ denote the gradient $\nabla_{\b{x}} = [\partial_{x_1},\dots,\partial_{x_d}]^\top$ and $\Delta_{\b{x}}$ denote the Laplacian $\Delta_{\b{x}} = \nabla_{\b{x}} \cdot \nabla_{\b{x}}$.
Let $\|\b{x}\| = (\b{x} \cdot \b{x})^{1/2}$ denote the Euclidean norm on~$\mathbb{R}^d$.
The construction that enables us to realize Sard's method in the Bayesian context is the Langevin Stein operator $\mathcal{L}$ \citep{Gorham2015} on $\mathbb{R}^d$, defined for sufficiently regular $g$ and $p$ as
\begin{align} \label{eq:stein-operator}
(\mathcal{L} g)(\b{x}) &= \Delta_{\b{x}} g(\b{x}) + \nabla_{\b{x}}  g(\b{x}) \cdot \nabla_{\b{x}}  \log{p(\b{x})}.
\end{align}
We refer to $\mathcal{L}$ as a Stein operator due to the use of equations of the form \eqref{eq:stein-operator} (up to a simple substitution) in the method of \citet{Stein1972} for assessing convergence in distribution and due to its property of producing functions whose integrals with respect to $p$ are zero under suitable conditions such as those described in Lemma~\ref{lemma:assum}.

\begin{lemma}\label{lemma:assum}
If $g \colon \mathbb{R}^d \rightarrow \mathbb{R}$ is twice continuously differentiable, $\log p : \mathbb{R}^d \rightarrow \mathbb{R}$ is continuously differentiable and $\| \nabla_{\b{x}} g(\b{x}) \| \leq C \| \b{x} \|^{-\delta} p(\b{x})^{-1}$ is satisfied for some $C \in \mathbb{R}$ and $\delta > d -1$, then
\begin{equation*}
\int (\mathcal{L} g)(\b{x}) p(\b{x}) \mathrm{d} \b{x}=0,
\end{equation*}
where $\mathcal{L}$ is the Stein operator in \eqref{eq:stein-operator}. 
\end{lemma}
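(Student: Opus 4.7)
The plan is to recognise that the integrand is a divergence and then apply the divergence theorem, bounding the resulting flux at infinity using the growth hypothesis on $\nabla g$.

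First I would verify the pointwise identity
\begin{equation*}
(\mathcal{L} g)(\b{x})\, p(\b{x}) = p(\b{x})\, \Delta_{\b{x}} g(\b{x}) + \nabla_{\b{x}} g(\b{x}) \cdot \nabla_{\b{x}} p(\b{x}) = \nabla_{\b{x}} \cdot \bigl( p(\b{x})\, \nabla_{\b{x}} g(\b{x}) \bigr),
\end{equation*}
which is immediate from the product rule together with $\nabla_{\b{x}} p = p \nabla_{\b{x}} \log p$; the regularity hypotheses ($g \in C^2$, $\log p \in C^1$) ensure that $p \nabla_{\b{x}} g$ is a $C^1$ vector field so the divergence theorem applies.

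Next I would apply the divergence theorem on the Euclidean ball $B_R = \{\b{x} : \|\b{x}\| \leq R\}$, with outward unit normal $\b{n}(\b{x}) = \b{x}/\|\b{x}\|$, to obtain
\begin{equation*}
\int_{B_R} (\mathcal{L} g)(\b{x})\, p(\b{x}) \, \mathrm{d}\b{x} = \int_{\partial B_R} p(\b{x})\, \nabla_{\b{x}} g(\b{x}) \cdot \b{n}(\b{x}) \, \mathrm{d}S(\b{x}).
\end{equation*}
Using Cauchy--Schwarz and the assumed bound $\|\nabla_{\b{x}} g(\b{x})\| \leq C\|\b{x}\|^{-\delta} p(\b{x})^{-1}$, the integrand on the right is dominated in absolute value by $C\|\b{x}\|^{-\delta} = C R^{-\delta}$ on $\partial B_R$. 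Since the surface area of $\partial B_R$ is $\omega_{d-1} R^{d-1}$ (with $\omega_{d-1}$ the area of the unit sphere), the flux is bounded by $C \omega_{d-1} R^{d-1-\delta}$, which tends to zero as $R \to \infty$ precisely because $\delta > d-1$.

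Finally I would conclude by showing the left-hand side converges to $\int_{\mathbb{R}^d} (\mathcal{L}g)\, p \, \mathrm{d}\b{x}$. The cleanest route is to note that the divergence-theorem identity above shows the net exists as an improper integral in the sense $\lim_{R\to\infty}\int_{B_R}(\mathcal{L}g)p\,\mathrm{d}\b{x}=0$, which is the statement of the lemma as written. The main obstacle, and the only delicate point, is making this last step rigorous: one must be careful that the lemma's hypotheses do not a priori guarantee absolute integrability of $(\mathcal{L}g)p$, so either the integral in the conclusion is interpreted as this principal-value-type limit over balls, or one supplements the argument by observing that the growth bound together with $\log p \in C^1$ typically yields absolute integrability of each piece $p\Delta g$ and $\nabla g \cdot \nabla p$ in the standard examples; beyond that, the calculation is routine.
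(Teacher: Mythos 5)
Your proof is correct and follows essentially the same route as the paper's: rewrite $(\mathcal{L} g)(\b{x})p(\b{x})$ as the divergence of the $C^1$ vector field $p(\b{x})\nabla_{\b{x}} g(\b{x})$, apply the divergence theorem on the ball of radius $R$, bound the resulting flux by a constant times $R^{d-1-\delta}$ using the growth hypothesis and the surface area of the sphere, and let $R \to \infty$. The integrability caveat you raise at the end is a genuine subtlety, but the paper leaves it equally implicit, tacitly interpreting the integral as the limit over expanding balls.
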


\noindent
The proof is provided in Appendix \ref{app:ProofZero}.
Although our attention is limited to \eqref{eq:stein-operator}, the choice of Stein operator is not unique and other Stein operators can be derived using the generator method of \cite{Barbour1988} or using Schr\"odinger Hamiltonians \citep{Assaraf1999}. Contrary to the standard requirements for a Stein operator, the operator $\mathcal{L}$ in control functionals does not need to fully characterize convergence and, as a consequence, a broader class of functions $g$ can be considered than in more traditional applications of Stein's method \citep{Stein1972}. 

It follows that, if the conditions of Lemma \ref{lemma:assum} are satisfied by $g_n : \mathbb{R}^d \rightarrow \mathbb{R}$, the integral of a function of the form $f_n = c_n + \mathcal{L} g_n$ is simply $c_n$, the constant. 
The main challenge in developing control variates, or functionals, based on Stein operators is therefore to find a function $g_n$ such that the asymptotic variance $\sigma(f - f_n)^2$ is small. 
To explicitly minimize asymptotic variance, \cite{Mijatovic2018,Belomestny2019} and \cite{Brosse2019} restricted attention to particular Metropolis--Hastings or Langevin samplers for which asymptotic variance can be explicitly characterized.
The minimization of empirical variance has also been proposed and studied in the case where samples are independent \citep{Belomestny2017} and dependent \citep{Belomestny2019,Belomestny2019a}. 
For an approach that is not tied to a particular Markov kernel, authors such as \cite{Assaraf1999} and \cite{Mira2013} proposed to minimize mean squared error along the sample path, which corresponds to the case of an independent sampling method.
In a similar spirit, the constructions in \cite{Oates2017,Oates2019} and \cite{Barp2018} were based on a minimum-norm interpolant, where the choice of norm is decoupled from the mechanism from where the points are sampled.

In this paper we combine Sard's approach to integration with a minimum-norm interpolant construction in the spirit of \cite{Oates2017} and related work; this is described next.

\subsection{The Proposed Method}\label{ssec:proposedBSS}

In this section we first construct an infinite-dimensional space $\mathcal{H}$ and a finite-dimensional space $\mathcal{F}$ of functions; these will underpin the proposed semi-exact control functional method.

For the infinite-dimensional component, let $k \colon \mathbb{R}^d \times \mathbb{R}^d \to \mathbb{R}$ be a positive-definite \emph{kernel}, meaning that (i) $k$ is symmetric, with $k(\b{x},\b{y}) = k(\b{y},\b{x})$ for all $\b{x},\b{y} \in \mathbb{R}^d$, and (ii) the \emph{kernel matrix} $[\b{K}]_{i,j} = k(\b{x}^{(i)}, \b{x}^{(j)})$ is positive-definite for any distinct points $\{ \b{x}^{(i)} \}_{i=1}^n \subset \mathbb{R}^d$ and any $n \in \mathbb{N}$.
Recall that such a $k$ induces a unique \emph{reproducing kernel Hilbert space} $\mathcal{H}(k)$. 
This is a Hilbert space that consists of functions $g \colon \mathbb{R}^d \to \mathbb{R}$ and is equipped with an inner product $\langle \cdot , \cdot \rangle_{\mathcal{H}(k)}$. 
The kernel $k$ is such that $k(\cdot,\b{x}) \in \mathcal{H}(k)$ for all $\b{x} \in \mathbb{R}^d$ and it is \emph{reproducing} in the sense that $\langle g , k(\cdot, \b{x}) \rangle_{\mathcal{H}(k)} = g(\b{x})$ for any $g \in \mathcal{H}(k)$ and $\b{x} \in \mathbb{R}^d$.
For $\b{\alpha} \in \mathbb{N}_0^d$ the multi-index notation $\b{x}^{\b{\alpha}} := x_1^{\alpha_1} \cdots x_d^{\alpha_d}$ and $|\bm{\alpha}| = \alpha_1 + \dots + \alpha_d$ will be used.
If $k$ is twice continuously differentiable in the sense of \citet[][Definition~4.35]{Steinwart2008}, meaning that the derivatives 
\begin{equation*}
    \partial_{\b{x}}^{\b{\alpha}} \partial_{\b{y}}^{\b{\alpha}} k(\b{x}, \b{y}) = \frac{\partial^{2\abs[0]{\b{\alpha}}}}{\partial \b{x}^{\b{\alpha}} \partial \b{y}^{\b{\alpha}}} k(\b{x}, \b{y})
\end{equation*}
exist and are continuous for every multi-index $\b{\alpha} \in \mathbb{N}_0^d$ with $\abs[0]{ \b{\alpha}} \leq 2$, then 
\begin{equation} \label{eq:stein-kernel}
k_0(\b{x}, \b{y}) = \mathcal{L}_{\b{x}} \mathcal{L}_{\b{y}} k(\b{x}, \b{y}),
\end{equation}
where $\mathcal{L}_{\b{x}}$ stands for application of the Stein operator defined in~\eqref{eq:stein-operator} with respect to variable $\b{x}$, is a well-defined and positive-definite kernel~\citep[][Lemma 4.34]{Steinwart2008}. 
The kernel in \eqref{eq:stein-kernel} can be written as
\begin{equation} \label{eq:stein-kernel2}
\begin{split}
k_0(\b{x}, \b{y}) ={}& \Delta_{\b{x}} \Delta_{\b{y}} k(\b{x}, \b{y}) + \b{u}(\b{x})^\top \nabla_{\b{x}} \Delta_{\b{y}} k(\b{x},\b{y}) \\
&+ \b{u}(\b{y})^\top \nabla_{\b{y}} \Delta_{\b{x}} k(\b{x}, \b{y}) + \b{u}(\b{x})^\top \big[ \nabla_{\b{x}} \nabla_{\b{y}}^\top k(\b{x},\b{y}) \big] \b{u}(\b{y}),
\end{split}
\end{equation}
where $\nabla_{\b{x}} \nabla_{\b{y}}^\top k(\b{x}, \b{y})$ is the $d \times d$ matrix with entries $[\nabla_{\b{x}} \nabla_{\b{y}}^\top k(\b{x}, \b{y})]_{i,j} = \partial_{x_i} \partial_{y_j} k(\b{x}, \b{y})$ and $\b{u}(\b{x}) = \nabla_{\b{x}} \log p(\b{x})$. 
If $k$ is radial then \eqref{eq:stein-kernel2} can be simplified; see Appendix~\ref{appendix:kernels}.
Lemma~\ref{lem:boundary-kernel} establishes conditions under which the functions $\b{x} \mapsto k_0(\b{x},\b{y})$, $\b{y} \in \mathbb{R}^d$, and hence elements of the Hilbert space $\mathcal{H}(k_0)$ reproduced by $k_0$, have zero integral.
Let $\|\b{M}\|_{\text{OP}} = \sup_{\|\b{x}\| = 1} \|\b{M} \b{x}\|$ denote the operator norm of a matrix $\b{M} \in \mathbb{R}^{d \times d}$.

\begin{lemma} \label{lem:boundary-kernel} 
If $k \colon \mathbb{R}^d \times \mathbb{R}^d \rightarrow \mathbb{R}$ is twice continuously differentiable in each argument, $\log p : \mathbb{R}^d \rightarrow \mathbb{R}$ is continuously differentiable, $\| \nabla_{\b{x}}\nabla_{\b{y}}^\top k(\b{x},\b{y}) \|_{\textsc{OP}} \leq C(\b{y}) \| \b{x} \|^{-\delta} p(\b{x})^{-1}$ and $\| \nabla_{\b{x}}\Delta_{\b{y}}k(\b{x},\b{y}) \| \leq C(\b{y}) \| \b{x} \|^{-\delta} p(\b{x})^{-1}$ are satisfied for some $C: \mathbb{R}^d \rightarrow (0,\infty)$, and $\delta > d -1$, then 
\begin{equation} \label{eq: kernel ints to 0}
    \int k_0(\b{x}, \b{y}) p(\b{x}) \dif \b{x} = 0
\end{equation}
for every $\b{y} \in \mathbb{R}^d$, where $k_0$ is defined in \eqref{eq:stein-kernel}.
\end{lemma}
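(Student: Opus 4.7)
\medskip

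\noindent\textbf{Proof proposal.} The plan is to reduce the claim to Lemma~\ref{lemma:assum} by fixing $\b{y}$ and viewing $k_0(\cdot, \b{y})$ as $\mathcal{L}$ applied to a single function of $\b{x}$. Concretely, for each fixed $\b{y} \in \mathbb{R}^d$ I would set
\begin{equation*}
    g_{\b{y}}(\b{x}) \;:=\; (\mathcal{L}_{\b{y}} k(\b{x}, \cdot))(\b{y}) \;=\; \Delta_{\b{y}} k(\b{x}, \b{y}) + \b{u}(\b{y}) \cdot \nabla_{\b{y}} k(\b{x}, \b{y}),
\end{equation*}
so that, by the definition \eqref{eq:stein-kernel} of $k_0$, the identity $k_0(\b{x}, \b{y}) = (\mathcal{L} g_{\b{y}})(\b{x})$ holds. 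If I can verify that $g_{\b{y}}$ satisfies the hypotheses of Lemma~\ref{lemma:assum}, then that lemma yields $\int k_0(\b{x}, \b{y}) p(\b{x}) \dif\b{x} = \int (\mathcal{L} g_{\b{y}})(\b{x}) p(\b{x}) \dif \b{x} = 0$, which is exactly \eqref{eq: kernel ints to 0}.

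The first hypothesis to check is the regularity of $g_{\b{y}}$ as a function of $\b{x}$. Since $k$ is twice continuously differentiable in each argument, the mixed partial derivatives $\partial_{\b{x}}^{\b{\alpha}} \partial_{\b{y}}^{\b{\beta}} k(\b{x}, \b{y})$ with $|\b{\alpha}|, |\b{\beta}| \leq 2$ exist and are continuous, so $g_{\b{y}}$ is itself twice continuously differentiable in $\b{x}$ (with $\b{u}(\b{y})$ a mere constant vector at fixed $\b{y}$). The regularity of $\log p$ is assumed directly. The only substantive hypothesis is therefore the pointwise growth bound on $\nabla_{\b{x}} g_{\b{y}}$, which I would obtain by differentiating the expression above to get
\begin{equation*}
    \nabla_{\b{x}} g_{\b{y}}(\b{x}) \;=\; \nabla_{\b{x}} \Delta_{\b{y}} k(\b{x}, \b{y}) + \bigl[ \nabla_{\b{x}} \nabla_{\b{y}}^\top k(\b{x}, \b{y}) \bigr] \b{u}(\b{y}),
\end{equation*}
and then applying the triangle inequality together with the definition of the operator norm to obtain
\begin{equation*}
    \| \nabla_{\b{x}} g_{\b{y}}(\b{x}) \| \;\leq\; \| \nabla_{\b{x}} \Delta_{\b{y}} k(\b{x}, \b{y}) \| + \| \nabla_{\b{x}} \nabla_{\b{y}}^\top k(\b{x}, \b{y}) \|_{\textsc{OP}}\, \| \b{u}(\b{y}) \|.
\end{equation*}
The two assumed bounds on $\| \nabla_{\b{x}} \Delta_{\b{y}} k(\b{x}, \b{y}) \|$ and $\| \nabla_{\b{x}} \nabla_{\b{y}}^\top k(\b{x}, \b{y}) \|_{\textsc{OP}}$ both have the form $C(\b{y}) \|\b{x}\|^{-\delta} p(\b{x})^{-1}$ with the same $\delta > d-1$, so combining them yields a bound of the form $\tilde{C}(\b{y}) \|\b{x}\|^{-\delta} p(\b{x})^{-1}$ with $\tilde{C}(\b{y}) := C(\b{y})(1 + \|\b{u}(\b{y})\|)$, which is a finite constant for each fixed $\b{y}$.

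The hypotheses of Lemma~\ref{lemma:assum} are thus satisfied by $g_{\b{y}}$, and the conclusion follows. There is no real obstacle here, since the claim is effectively a corollary of Lemma~\ref{lemma:assum}; the only care required is in matching notation (reading the inner application of $\mathcal{L}_{\b{y}}$ as freezing a parameter) and in seeing that the two separate growth conditions on the mixed derivatives of $k$ combine cleanly into the single growth condition of Lemma~\ref{lemma:assum}.
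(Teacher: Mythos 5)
Your proposal is correct and follows essentially the same route as the paper: fix $\b{y}$, set $g(\b{x}) = \mathcal{L}_{\b{y}} k(\b{x},\b{y})$ so that $k_0(\cdot,\b{y}) = \mathcal{L} g$, and verify the gradient tail condition of Lemma~\ref{lemma:assum} using the two assumed bounds on $\nabla_{\b{x}}\Delta_{\b{y}}k$ and $\nabla_{\b{x}}\nabla_{\b{y}}^\top k$. The only cosmetic difference is that you bound $\|\nabla_{\b{x}} g\|$ directly via the triangle inequality, whereas the paper expands $\|\nabla_{\b{x}} g\|^2$ and applies Cauchy--Schwarz to the cross term; the two are equivalent.
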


\noindent The proof is provided in Appendix \ref{app:ProofZeroKernel}.
The infinite-dimensional space $\mathcal{H}$ used in this work is exactly the reproducing kernel Hilbert space $\mathcal{H}(k_0)$.
The basic mathematical properties of $k_0$ and the Hilbert space it reproduces are contained in Appendix \ref{app: basic results on H} and these can be used to inform the selection of an appropriate kernel.

For the finite-dimensional component, let $\Phi$ be a linear space of twice-continuously differentiable functions with dimension $m-1$, $m \in \mathbb{N}$, and a basis $\{\phi_i\}_{i=1}^{m-1}$. 
Define then the space obtained by applying the differential operator~\eqref{eq:stein-operator} to $\Phi$ as $\mathcal{L} \Phi = \mathrm{span}\{ \mathcal{L} \phi_1, \ldots, \mathcal{L} \phi_{m-1} \}$.
If the pre-conditions of Lemma \ref{lemma:assum} are satisfied for each basis function $g = \phi_i$ then linearity of the Stein operator implies that $ \int (\mathcal{L}\phi) \mathrm{d}p = 0$ for every $\phi \in \Phi$.
Typically we will select $\Phi = \mathcal{P}^r$ as the polynomial space $\mathcal{P}^r = \mathrm{span} \{ \b{x}^{\b{\alpha}} : \, \b{\alpha} \in \mathbb{N}_0^d, \, 0 < \abs[0]{\b{\alpha}} \leq r \}$ for some non-negative integer $r$. 
Note that constant functions are excluded from $\mathcal{P}^r$ since they are in the null space of $\mathcal{L}$; when required we let $\mathcal{P}_0^r = \text{span}\{1\} \oplus \mathcal{P}^r$ denote the larger space with the constant functions included. 
The finite-dimensional space $\mathcal{F}$ is then taken to be $    \mathcal{F} = \mathrm{span} \{1\} \oplus \mathcal{L} \Phi = \mathrm{span} \{1, \mathcal{L} \phi_1, \ldots \mathcal{L} \phi_{m-1} \}$.

It is now possible to state the proposed method.
Following Sard, we approximate the integrand $f$ with a function $f_n$ that interpolates $f$ at the locations $\b{x}^{(i)}$, is exact on the $m$-dimensional linear space $\mathcal{F}$, and minimises a particular (semi-)norm subject to the first two constraints. 
It will occasionally be useful to emphasise the dependence of $f_n$ on $f$ using the notation $f_n(\cdot) = f_n(\cdot; f)$. 
The proposed interpolant takes the form
\begin{equation} \label{eq:interpolant}
    f_n(\b{x}) = b_1 + \sum_{i=1}^{m-1} b_{i+1} (\mathcal{L} \phi_i) (\b{x}) + \sum_{i=1}^n a_i k_0(\b{x}, \b{x}^{(i)}), 
\end{equation}
where the coefficients $\b{a} = (a_1,\ldots,a_n) \in \mathbb{R}^n$ and $\b{b} = (b_1,\ldots,b_m) \in \mathbb{R}^m$ are selected such that the following two conditions hold:
\begin{enumerate}
    \item $f_n(\b{x}^{(i)} ; f)  = f(\b{x}^{(i)})$ for $i = 1, \ldots, n$ (interpolation);
    \item $f_n(\cdot;f) = f(\cdot)$ whenever $f \in \mathcal{F}$ (semi-exactness).
\end{enumerate}
Since $\mathcal{F}$ is $m$-dimensional, these requirements correspond to the total of $n+m$ constraints.
Under weak conditions, discussed in Section \ref{subsec: computation}, the total number of degrees of freedom due to selection of $\b{a}$ and $\b{b}$ is equal to $n+m$ and the above constraints can be satisfied. 
Furthermore, the corresponding function $f_n$ can be shown to minimise a particular (semi-)norm on a larger space of functions, subject to the interpolation and exactness constraints \citep[to limit scope, we do not discuss this characterisation further but the semi-norm is defined in \eqref{eq: semi norm mt} and the reader can find full details in][Theorem 13.1]{Wendland2004}. 
Figure~\ref{fig:example-interpolation} illustrates one such interpolant.
The proposed estimator of the integral is then
\begin{equation} \label{eq:BSS-def}
    I_{\textsc{SECF}}(f) = \int f_n(\b{x}) p(\b{x}) \dif \b{x} ,
\end{equation}
a special case of \eqref{eqn:CV} (the interpolation condition causes the first term in \eqref{eqn:CV} to vanish) that we call a \emph{semi-exact control functional}.
The following is immediate from \eqref{eq:interpolant} and \eqref{eq:BSS-def}:

\begin{corollary} \label{cor: well defined}
Under the hypotheses of Lemma \ref{lemma:assum} for each $g = \phi_i$, $i = 1,\dots,m-1$, and Lemma \ref{lem:boundary-kernel}, it holds that, whenever the estimator $I_{\textsc{SECF}}(f)$ is well-defined, $I_{\textsc{SECF}}(f) = b_1$, where $b_1$ is the constant term in \eqref{eq:interpolant}.
\end{corollary}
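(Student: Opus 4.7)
The plan is to simply integrate the expression for $f_n$ in \eqref{eq:interpolant} against $p$ term by term and observe that all but one contribution vanishes. Because $f_n$ is a finite linear combination, once we have justified that integration can be exchanged with the finite sum, the result reduces to computing three types of integrals: that of the constant $b_1$, of each $b_{i+1}(\mathcal{L}\phi_i)$, and of each $a_i k_0(\cdot,\b{x}^{(i)})$.

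First I would write
\begin{equation*}
I_{\textsc{SECF}}(f) \;=\; \int f_n(\b{x})\,p(\b{x})\,\dif\b{x} \;=\; b_1 \int p(\b{x})\,\dif\b{x} \;+\; \sum_{i=1}^{m-1} b_{i+1}\!\int (\mathcal{L}\phi_i)(\b{x})\,p(\b{x})\,\dif\b{x} \;+\; \sum_{i=1}^n a_i\!\int k_0(\b{x},\b{x}^{(i)})\,p(\b{x})\,\dif\b{x},
\end{equation*}
which is legitimate because $I_{\textsc{SECF}}(f)$ is assumed well-defined (so each of the finitely many summands is integrable). The first term equals $b_1$ since $p$ is a probability density. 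For each $i \in \{1,\dots,m-1\}$, the hypotheses of Lemma~\ref{lemma:assum} are assumed to hold with $g = \phi_i$, so the corresponding integral is zero. For each $i \in \{1,\dots,n\}$, Lemma~\ref{lem:boundary-kernel} applied with $\b{y} = \b{x}^{(i)}$ yields $\int k_0(\b{x},\b{x}^{(i)})\,p(\b{x})\,\dif\b{x} = 0$. Summing these contributions gives $I_{\textsc{SECF}}(f) = b_1$.

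The only mild subtlety is verifying that everything is integrable so that the linearity step is justified; this is essentially subsumed by the ``whenever $I_{\textsc{SECF}}(f)$ is well-defined'' qualifier in the statement, and the hypotheses of Lemmas~\ref{lemma:assum} and~\ref{lem:boundary-kernel} already supply the decay needed to ensure each individual integrand is absolutely integrable against $p$. There is no genuine obstacle here: the corollary is truly a one-line consequence of the two preceding lemmas together with the normalization $\int p = 1$, and the only reason it merits a separate statement is to record explicitly that the only surviving coefficient from the representation \eqref{eq:interpolant} is the constant $b_1$, which is what makes the estimator practically computable.
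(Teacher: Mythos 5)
Your proof is correct and is precisely the argument the paper has in mind: the paper declares the corollary ``immediate from \eqref{eq:interpolant} and \eqref{eq:BSS-def}'', and your term-by-term integration using $\int p = 1$, Lemma~\ref{lemma:assum} for each $\mathcal{L}\phi_i$, and Lemma~\ref{lem:boundary-kernel} for each $k_0(\cdot,\b{x}^{(i)})$ is exactly that one-line consequence spelled out. No gaps.
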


\noindent The earlier work of \cite{Assaraf1999} and \cite{Mira2013} corresponds to $\b{a} = \b{0}$ and $\b{b} \neq \b{0}$, while setting $\b{b} = \b{0}$ in~\eqref{eq:interpolant} and ignoring the semi-exactness requirement recovers the unique minimum-norm interpolant in the Hilbert space $\mathcal{H}(k_0)$ where $k_0$ is reproducing, in the sense of~\eqref{eq:minimum-norm}.
The work of \cite{Oates2017} corresponds to $b_i = 0$ for $i = 2,\dots,m$.
It is therefore clear that the proposed approach is a strict generalization of existing work and can be seen as a compromise between semi-exactness and minimum-norm interpolation.

\begin{figure}[t]
  \centering
    \includegraphics[width=\textwidth]{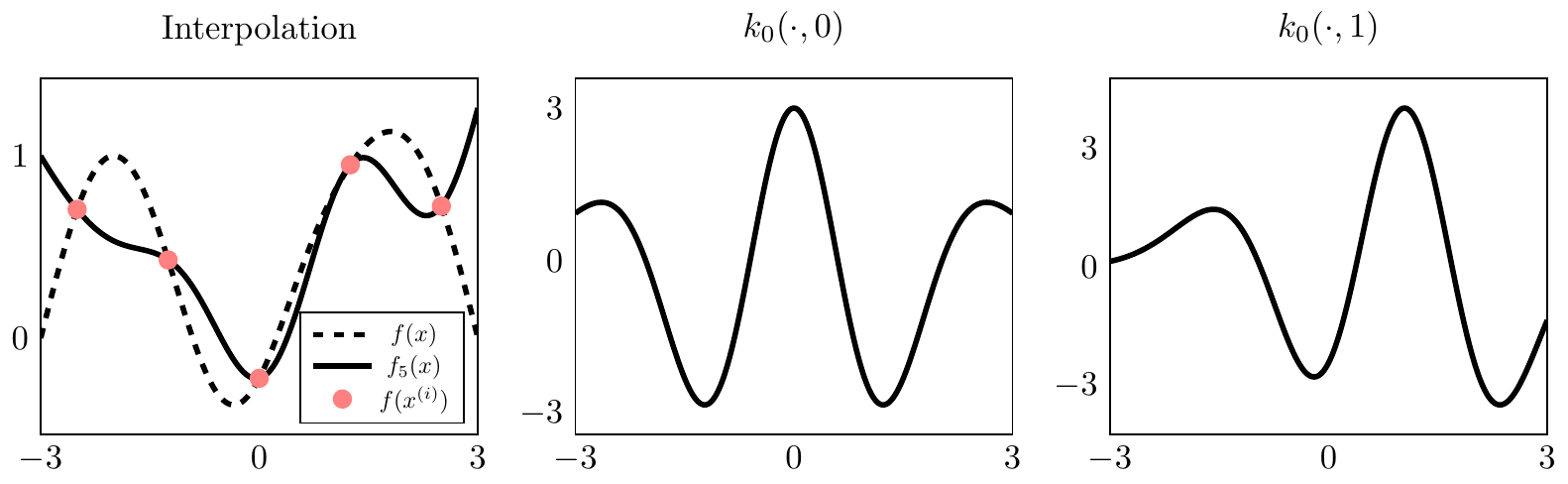}
    \caption{\emph{Left}: The interpolant $f_n$ from \eqref{eq:interpolant} at $n=5$ points to the function $f(x) = \sin(0.5\pi (x-1)) + \exp(-(x-0.5)^2)$ for the Gaussian density $p(x) = \mathcal{N}(x; 0 ,1)$. The interpolant uses the Gaussian kernel $k(x,y) = \exp(-(x-y)^2)$ and a polynomial parametric basis with $r=2$. \emph{Center} \& \emph{right}: Two translates $k_0(\cdot,y)$, $y \in \{0,1\}$, of the kernel~\eqref{eq:stein-kernel}.
    }
    \label{fig:example-interpolation}
\end{figure}

\subsection{Polynomial Exactness in the Bernstein-von-Mises Limit} \label{subsec: PE in BvM}

A central motivation for our approach is the prototypical case where $p$ is the density of a posterior distribution $P_{\b{x} \mid y_1,\dots,y_N}$ for a latent variable $\b{x}$ given independent and identically distributed data $y_1,\dots,y_N \sim P_{y_1,\dots,y_N \mid \b{x}}$.
Under regularity conditions discussed in Section 10.2 of  \cite{VanDerVaart1998}, the Bernstein-von-Mises theorem states that
\begin{equation*}
\Big\| P_{\b{x} \mid y_1,\dots,y_N} - \mathcal{N}\big(\hat{\b{x}}_N , N^{-1} I(\hat{\b{x}}_N)^{-1}\big) \Big\|_{\text{TV}} \rightarrow 0
\end{equation*}
where $\hat{\b{x}}_N$ is a maximum likelihood estimate for $\b{x}$, $I(\b{x})$ is the Fisher information matrix evaluated at $\b{x}$, $\|\cdot\|_{\text{TV}}$ is the total variation norm and convergence is in probability as $N \rightarrow \infty$ with respect to the law $P_{y_1,\dots,y_N \mid \b{x}}$ of the dataset.
In this limit, polynomial exactness of the proposed method can be established.
Indeed, for a Gaussian density $p$ with mean $\hat{\b{x}}_N \in \mathbb{R}^d$ and precision $N I(\hat{\b{x}}_N)$, if $\phi(\b{x}) = \b{x}^{\b{\alpha}}$ for a multi-index $\b{\alpha} \in \mathbb{N}_0^d$, then
\begin{equation*}
    (\mathcal{L} \phi)(\b{x}) =  \sum_{i=1}^d \alpha_i\left[ (\alpha_i-1) x_i^{\alpha_i-2} -\frac{N}{2} P_i(\b{x}) x_i^{\alpha_i-1} \right] \prod_{j \neq i} x_j^{\alpha_j},
\end{equation*}
where $P_i(\b{x}) = 2\b{e}_i^\top I(\hat{\b{x}}_N) (\b{x} - \hat{\b{x}}_N)$ and $\b{e}_i$ is the $i$th coordinate vector in $\mathbb{R}^d$. 
This allows us to obtain the following result, whose proof is provided in Appendix~\ref{app: proof of polyexact}:
\begin{lemma}\label{lem: polyexact}
Consider the Bernstein-von-Mises limit and suppose that the Fisher information matrix $I(\hat{\b{x}}_N)$ is non-singular.
Then, for the choice $\Phi = \mathcal{P}^r$, $r \in \mathbb{N}$, the estimator $I_\textsc{SECF}$ is exact on $\mathcal{F} = \mathcal{P}_0^r$.
\end{lemma}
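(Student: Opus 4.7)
The plan is to reduce the claim to the set equality $\mathcal{F} = \mathcal{P}_0^r$. If this holds, then the semi-exactness condition~2 of the interpolant in Section~\ref{ssec:proposedBSS} gives $f_n(\cdot;f) = f$ for every $f \in \mathcal{P}_0^r$, so \eqref{eq:BSS-def} yields $I_{\textsc{SECF}}(f) = \int f_n(\b{x}) p(\b{x})\, \mathrm{d}\b{x} = I(f)$. Since $\mathcal{F} = \mathrm{span}\{1\} \oplus \mathcal{L}\mathcal{P}^r$ and $\mathcal{P}_0^r = \mathrm{span}\{1\} \oplus \mathcal{P}^r$, this set equality reduces to the operator identity $\mathcal{L}(\mathcal{P}^r) = \mathcal{P}^r$, i.e., to bijectivity of $\mathcal{L}|_{\mathcal{P}^r}$.

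To establish the bijectivity I would exploit the explicit Gaussian score $\nabla_{\b{x}} \log p(\b{x}) = -N A(\b{x} - \hat{\b{x}}_N)$ with $A := I(\hat{\b{x}}_N)$ to split the Stein operator as
\begin{equation*}
\mathcal{L} = D + R, \qquad D := -N(A\b{x}) \cdot \nabla_{\b{x}}, \qquad R := \Delta_{\b{x}} + N(A\hat{\b{x}}_N) \cdot \nabla_{\b{x}}.
\end{equation*}
The operator $D$ preserves total polynomial degree, while $R$ strictly lowers it, so $\mathcal{L}$ sends $\mathcal{P}^r$ into itself and is triangular with respect to the degree filtration; bijectivity of $\mathcal{L}|_{\mathcal{P}^r}$ therefore reduces to bijectivity of $D$ on each homogeneous component $H_k \subset \mathcal{P}^r$, $k = 1, \ldots, r$. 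Because Fisher information is positive semi-definite and $A$ is non-singular by hypothesis, $A$ is positive definite; diagonalising $A = Q\, \mathrm{diag}(d_1, \ldots, d_d)\, Q^\top$ with $d_i > 0$ and changing variables to $\b{z} = Q^\top \b{x}$, a short calculation using orthogonality of $Q$ yields $D = -N \sum_{i=1}^d d_i z_i \partial_{z_i}$ in the new coordinates. This operator is diagonal in the monomial basis $\b{z}^{\b{\alpha}}$ with eigenvalue $-N \sum_{i=1}^d d_i \alpha_i$, which is non-zero for every $|\b{\alpha}| \geq 1$. Hence $D$, and therefore $\mathcal{L}$, is bijective on $\mathcal{P}^r$.

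The final step is to invoke Corollary~\ref{cor: well defined}, for which one must verify the hypotheses of Lemmas~\ref{lemma:assum} and~\ref{lem:boundary-kernel}. For each polynomial basis function $\phi_i$, the bound $\| \nabla \phi_i(\b{x}) \| \leq C \| \b{x} \|^{-\delta} p(\b{x})^{-1}$ is immediate from the super-polynomial Gaussian decay of $p^{-1}$ dominating any polynomial growth of $\|\nabla \phi_i\|$, so the hypotheses of Lemma~\ref{lemma:assum} are routinely satisfied. The main obstacle in the proof is the bijectivity step: this is where the non-singularity hypothesis on $I(\hat{\b{x}}_N)$ is essential, since otherwise some $d_i$ could vanish and the eigenvalues of $D$ on monomials in the corresponding coordinate would be zero, destroying the triangularity argument.
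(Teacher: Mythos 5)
Your argument is sound in substance and reaches the same conclusion as the paper, but by a different route. The paper first uses the non-singularity of $I(\hat{\b{x}}_N)$ to build an invertible affine map $\b{t}(\b{x}) = \b{W}(\b{x}-\hat{\b{x}}_N)$ with $\b{W}^\top\b{W} = NI(\hat{\b{x}}_N)$, reducing everything to the standard Gaussian, and then proves $\mathrm{span}\{1\}\oplus\mathcal{L}\mathcal{P}^r = \mathcal{P}_0^r$ by induction on $r$, exhibiting each top-degree monomial explicitly as $\varphi - (\b{1}^\top\b{\alpha})^{-1}\phi$ with $\phi = \mathcal{L}\b{x}^{\b{\alpha}}$ and $\varphi$ of lower order. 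Your diagonalisation of $A = I(\hat{\b{x}}_N)$ and the eigenvalue computation $D\b{z}^{\b{\alpha}} = -N(\sum_i d_i\alpha_i)\b{z}^{\b{\alpha}}$ is really the same mechanism — in the paper's normalised coordinates all $d_i = 1$ and the eigenvalue is $-\b{1}^\top\b{\alpha}$, which is exactly the scalar they divide by — but your packaging as a triangular-plus-invertible-diagonal argument avoids the change-of-variables bookkeeping (the paper spends some effort verifying that the affine map preserves $\dim\mathcal{P}_0^r$) and makes the role of the non-singularity hypothesis more transparent. Both approaches buy the same thing; yours is arguably the cleaner linear-algebra formulation, the paper's is more elementary and self-contained.

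One imprecision you should repair: the claims that ``$\mathcal{L}$ sends $\mathcal{P}^r$ into itself'' and that the set equality reduces to ``$\mathcal{L}(\mathcal{P}^r) = \mathcal{P}^r$'' are literally false, because $R$ lowers degree all the way down to constants, which are excluded from $\mathcal{P}^r$. For instance, with $p$ standard Gaussian, $\mathcal{L}(x_1^2) = 2 - 2x_1^2 \notin \mathcal{P}^r$. The correct target is $\mathrm{span}\{1\}\oplus\mathcal{L}\mathcal{P}^r = \mathcal{P}_0^r$, i.e.\ bijectivity of the map induced by $\mathcal{L}$ on $\mathcal{P}_0^r/\mathrm{span}\{1\}$. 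Your triangularity argument proves exactly this once the lower-order terms produced by $R$ are allowed to land in $\mathcal{P}_0^{k-1}$ (with the constant part absorbed by the $\mathrm{span}\{1\}$ summand of $\mathcal{F}$), so the fix is purely one of statement, not of substance.
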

Thus the proposed estimator is polynomially exact up to order $r$ in the Bernstein-von-Mises limit.
At finite $N$, when the limit has not been reached, the above argument can only be expected to approximately hold.

\subsection{Computation for the Proposed Method} \label{subsec: computation}

The purpose of this section is to discuss when the proposed estimator is well-defined and how it can be computed.
Define the $n \times m$ matrix
\begin{equation}
    \b{P} = \begin{bmatrix} 1 & \mathcal{L} \phi_1(\b{x}^{(1)}) & \cdots & \mathcal{L} \phi_{m-1}(\b{x}^{(1)}) \\ \vdots & \vdots & \ddots & \vdots \\ 1 & \mathcal{L} \phi_1 (\b{x}^{(n)}) & \cdots & \mathcal{L} \phi_{m-1}( \b{x}^{(n)}) \end{bmatrix}, \label{eq: def for P}
\end{equation}
which is sometimes called a \emph{Vandermonde} (or \emph{alternant}) matrix corresponding to the linear space $\mathcal{F}$.
Let $\b{K}_0$ be the $n \times n$ matrix with entries $[\b{K}_0]_{i,j} = k_0(\b{x}^{(i)}, \b{x}^{(j)})$ and let $\b{f}$ be the $n$-dimensional column vector with entries $[\b{f}]_i = f(\b{x}^{(i)})$. 

\begin{lemma} \label{lem: comput etc}
Let the $n \geq m$ points $\b{x}^{(i)}$ be distinct and $\mathcal{F}$-\emph{unisolvent}, meaning that the matrix $\b{P}$ in \eqref{eq: def for P} has full rank.
Let $k_0$ be a positive-definite kernel for which \eqref{eq: kernel ints to 0} is satisfied.
Then $I_{\textsc{SECF}}(f)$ is well-defined and the coefficients $\b{a}$ and $\b{b}$ are given by the solution of the linear system
\begin{equation} \label{eq:block-system}
    \begin{bmatrix} \b{K}_0 & \b{P} \\ \b{P}^\top & \b{0} \end{bmatrix} \begin{bmatrix} \b{a} \\ \b{b} \end{bmatrix} = \begin{bmatrix} \b{f} \\ \b{0} \end{bmatrix}.
\end{equation}
In particular,
\begin{equation}\label{eqn:BSS}
    I_{\textsc{SECF}}(f) = \b{e}_1^\top ( \b{P}^\top \b{K}_0^{-1} \b{P} )^{-1} \b{P}^\top \b{K}_0^{-1} \b{f}.
\end{equation}
\end{lemma}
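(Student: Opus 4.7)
The plan is to recognize the block system \eqref{eq:block-system} as the standard linear system arising in kernel interpolation with polynomial reproduction, and then derive \eqref{eqn:BSS} by block elimination (Schur complement). I will proceed in four steps: (i) translate the interpolation and semi-exactness requirements into linear constraints on $(\b{a},\b{b})$; (ii) show that the resulting block matrix is invertible under the hypotheses of the lemma; (iii) verify that the solution of \eqref{eq:block-system} actually produces the proposed interpolant \eqref{eq:interpolant} with the claimed properties; (iv) combine the invertibility of $\b{K}_0$ and of $\b{P}^\top \b{K}_0^{-1}\b{P}$ to solve the system in closed form and combine with Corollary~\ref{cor: well defined}.

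For step (i), evaluating \eqref{eq:interpolant} at each $\b{x}^{(i)}$ and stacking gives exactly $\b{K}_0\b{a} + \b{P}\b{b} = \b{f}$, the top block of \eqref{eq:block-system}. The bottom block $\b{P}^\top\b{a} = \b{0}$ encodes the semi-exactness/minimum-seminorm side condition: if $f \in \mathcal{F}$ then $\b{f} = \b{P}\b{b}_f$ for a unique $\b{b}_f \in \mathbb{R}^m$ (by $\mathcal{F}$-unisolvency), and $(\b{a},\b{b}) = (\b{0}, \b{b}_f)$ trivially satisfies both blocks; if the block system has a unique solution, this will force $f_n \equiv f$ on $\mathcal{F}$ as required. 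This reduces semi-exactness to uniqueness of \eqref{eq:block-system}, which is the content of step (ii).

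For step (ii), I will use the standard RKHS argument. Since $k_0$ is positive-definite and the $\b{x}^{(i)}$ are distinct, the Gram matrix $\b{K}_0$ is symmetric positive-definite. Suppose $(\b{a},\b{b})$ lies in the kernel of the block matrix. Left-multiplying the first block by $\b{a}^\top$ gives $\b{a}^\top \b{K}_0 \b{a} + \b{a}^\top\b{P}\b{b} = 0$, and the second block $\b{P}^\top\b{a} = \b{0}$ makes the cross term vanish, leaving $\b{a}^\top \b{K}_0 \b{a} = 0$, hence $\b{a} = \b{0}$. Then $\b{P}\b{b} = \b{0}$, and $\mathcal{F}$-unisolvency ($\b{P}$ has full column rank) forces $\b{b} = \b{0}$. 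This establishes invertibility of the block matrix, gives uniqueness in \eqref{eq:block-system}, and completes step (iii) via the reasoning above.

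For step (iv), I invert $\b{K}_0$ in the top block to obtain $\b{a} = \b{K}_0^{-1}(\b{f} - \b{P}\b{b})$. Substituting into the side condition $\b{P}^\top\b{a} = \b{0}$ yields the normal equation $\b{P}^\top \b{K}_0^{-1}\b{P}\,\b{b} = \b{P}^\top\b{K}_0^{-1}\b{f}$; the matrix $\b{P}^\top\b{K}_0^{-1}\b{P}$ is symmetric positive-definite because $\b{K}_0^{-1}$ is and $\b{P}$ has full column rank, so $\b{b} = (\b{P}^\top\b{K}_0^{-1}\b{P})^{-1}\b{P}^\top\b{K}_0^{-1}\b{f}$. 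By Corollary~\ref{cor: well defined}, $I_\textsc{SECF}(f) = b_1 = \b{e}_1^\top \b{b}$, yielding \eqref{eqn:BSS}. The only step that requires any care is (ii); steps (i) and (iv) are essentially bookkeeping. I expect the main subtlety to be making the equivalence between the two defining conditions (interpolation + semi-exactness) and the algebraic system \eqref{eq:block-system} fully rigorous — specifically, justifying the side condition $\b{P}^\top\b{a} = \b{0}$ as the correct closure of the $n$ interpolation equations, which I will do by appealing to the minimum-seminorm characterisation referenced after \eqref{eq:interpolant} (Wendland, Theorem~13.1) and then checking semi-exactness via the uniqueness argument above.
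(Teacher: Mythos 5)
Your proposal is correct and follows essentially the same route as the paper's proof: the same reduction of interpolation and semi-exactness to the two blocks $\b{K}_0\b{a}+\b{P}\b{b}=\b{f}$ and $\b{P}^\top\b{a}=\b{0}$, the same $\b{a}^\top\b{K}_0\b{a}=0$ argument (yours phrased as triviality of the kernel of the block matrix, the paper's phrased for $f\in\mathcal{F}$ directly), and the same Schur-complement elimination to obtain \eqref{eqn:BSS}. The only cosmetic difference is that you invoke Corollary~\ref{cor: well defined} explicitly for the final identification $I_{\textsc{SECF}}(f)=b_1$, which the paper leaves implicit.
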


\noindent The proof is provided in Appendix \ref{app:Computation proof}.
Notice that \eqref{eqn:BSS} is a linear combination of the values in~$\b{f}$ and therefore the proposed estimator is recognized as a cubature method of the form \eqref{eq:quadrature-generic} with weights 
\begin{equation} \label{eq: weights}
\b{w} = \b{K}_0^{-1} \b{P} ( \b{P}^\top \b{K}_0^{-1} \b{P} )^{-1} \b{e}_1.
\end{equation}

The requirement in Lemma \ref{lem: comput etc} for the $\b{x}^{(i)}$ to be distinct precludes, for example, the direct use of Metropolis--Hastings output. 
However, as emphasized in \cite{Oates2017} for control functionals and studied further in \cite{liu2017black,Hodgkinson2020}, the consistency of $I_{\text{SECF}}$ does \emph{not} require that the Markov chain is $p$-invariant.
It is therefore trivial to, for example, filter out duplicate states from Metropolis--Hastings output.

The solution of linear systems of equations defined by an $n \times n$ matrix $\b{K}_0$ and an $m \times m$ matrix $\b{P}^\top \b{K}_0^{-1} \b{P}$ entails a computational cost of $O(n^3 + m^3)$.
In some situations this cost may yet be smaller than the cost associated with evaluation of $f$ and $p$, but in general this computational requirement limits the applicability of the method just described.
In Appendix \ref{app: Nystrom} we therefore propose a computationally efficient approximation, $I_{\text{ASECF}}$, to the full method, based on a combination of the Nystr\"{o}m approximation \citep{williams2001using} and the well-known conjugate gradient method, inspired by the recent work of \cite{rudi2017falkon}.
All proposed methods are implemented in the \verb+R+ package \verb+ZVCV+ \citep{rZVCV}.

\section{Empirical Assessment}\label{sec: Empirical}

A detailed comparison of existing and proposed control variate and control functional techniques was performed.
Three examples were considered; Section \ref{sec: gaussian assessment} considers a Gaussian target, representing the Bernstein-von-Mises limit; Section \ref{subsec: capture} considers a setting where non-parametric control functional methods perform well; Section \ref{subsec: sonar} considers a setting where parametric control variate methods are known to be successful. 
In each case we determine whether or not the proposed semi-exact control functional method is competitive with the state-of-the-art.

Specifically, we compared the following estimators, which are all instances of $I_{\text{CV}}$ in \eqref{eqn:CV} for a particular choice of $f_n$, which may or may not be an interpolant:
\begin{itemize}
    \item Standard Monte Carlo integration, \eqref{eqn:MC}, based on Markov chain output.
    \item The control functional estimator recommended in \cite{Oates2017}, $I_{\text{CF}}(f) = (\b{1}^\top\b{K}_0^{-1}\b{1})^{-1} \b{1}^{\top} \b{K}_0^{-1}\b{f}$.
    \item The ``zero variance'' polynomial control variate method of \cite{Assaraf1999} and \cite{Mira2013}, $I_{\text{ZV}}(f) = \b{e}_1^\top (\b{P}^\top \b{P})^{-1}\b{P}^\top \b{f}$.
    \item The ``auto zero variance'' approach of \citet{South2019}, which uses 5-fold cross validation to automatically select (a) between the ordinary least squares solution $I_{\text{ZV}}$ and an $\ell_1$-penalised alternative (where the penalisation strength is itself selected using 10-fold cross-validation within the test dataset), and (b) the polynomial order.
    \item The proposed semi-exact control functional estimator, \eqref{eqn:BSS}.
    \item An approximation, $I_{\text{ASECF}}$, of \eqref{eqn:BSS} based on the Nystr\"{o}m approximation and the conjugate gradient method, described in Appendix \ref{app: Nystrom}.
\end{itemize}
Open-source software for implementing all of the above methods is available in the \texttt{R} package \texttt{ZVCV} \citep{rZVCV}. 
The same sets of $n$ samples were used for all estimators, in both the construction of $f_n$ and the evaluation of $I_{\text{CV}}$. 
For methods where there is a fixed polynomial basis we considered only orders $r=1$ and $r=2$, following the recommendation of \cite{Mira2013}.
For kernel-based methods, duplicate values of $\b{x}_i$ were removed (as discussed in Section~\ref{subsec: computation}) and Frobenius regularization was employed whenever the condition number of the kernel matrix $\b{K}_0$ was close to machine precision \citep{Higham1988}.
Several choices of kernel were considered, but for brevity in the main text we focus on the rational quadratic kernel $k(\b{x},\b{y}; \lambda) = (1+\lambda^{-2} \|\b{x}-\b{y}\|^2)^{-1}$. 
This kernel was found to provide the best performance across a range of experiments; a comparison to the Mat\'{e}rn and Gaussian kernels is provided in Appendix~\ref{app: effect of the kernel}.
The parameter $\lambda$ was selected using $5$-fold cross-validation, based again on performance across a spectrum of experiments; a comparison to the median heuristic \citep{Garreau2017} is presented in Appendix \ref{app: effect of the kernel}.

To ensure that our assessment is practically relevant, the estimators were compared on the basis of both statistical and computational efficiency relative to the standard Monte Carlo estimator. 
Statistical efficiency $\mathcal{E}(I_\text{CV})$ and computational efficiency $\mathcal{C}(I_\text{CV})$ of an estimator $I_\text{CV}$ of the integral $I$ are defined as
\begin{eqnarray*}
\mathcal{E}(I_\text{CV}) = \frac{\mathbb{E}\left[ (I_{\text{MC}} - I)^2 \right]}{\mathbb{E}\left[ (I_\text{CV} - I)^2 \right]}, \qquad
\mathcal{C}(I_\text{CV}) = \mathcal{E}(I_\text{CV}) \frac{T_{\text{MC}}}{T_\text{CV}}
\end{eqnarray*}
where $T_\text{CV}$ denotes the combined wall time for sampling the $\b{x}^{(i)}$ and computing the estimator $I_\text{CV}$.
For the results reported below, $\mathcal{E}$ and $\mathcal{C}$ were approximated using averages $\hat{\mathcal{E}}$ and $\hat{\mathcal{C}}$ over 100 realizations of the Markov chain output.

\subsection{Gaussian Illustration} \label{sec: gaussian assessment}

Here we consider a Gaussian integral that serves as an analytically tractable caricature of a posterior near to the Bernstein-von-Mises limit.
This enables us to assess the effect of the sample size $n$ and dimension $d$ on each estimator, in a setting that is not confounded by the idiosyncrasies of any particular MCMC method. 
Specifically, we set $p(\b{x}) = (2\pi)^{-d/2} \exp(-\|\b{x}\|^2 / 2)$ where $\b{x} \in \mathbb{R}^d$. 
For the parametric component we set $\Phi = \mathcal{P}^r$, so that (from Lemma \ref{lem: polyexact}) $I_{\text{SECF}}$ is exact on polynomials of order at most $r$; this holds also for $I_{\text{ZV}}$.
For the integrand $f : \mathbb{R}^d \rightarrow \mathbb{R}$, $d \geq 3$, we took
\begin{equation}\label{eqn:Gaussian_combin}
f(\b{x}) = 1 + x_2 + 0.1 x_1 x_2 x_3 + \sin(x_1) \exp[-(x_2 x_3)^2]
\end{equation}
in order that the integral is analytically tractable ($I(f) = 1$) and that no method will be exact. 

Figure \ref{fig:Gaussian} displays the statistical efficiency of each estimator for $10 \leq n \leq 1000$ and $3 \leq d \leq 100$. 
Computational efficiency is not shown since exact sampling from $p$ in this example is trivial. 
The proposed semi-exact control functional method performs consistently well compared to its competitors for this non-polynomial integrand. 
Unsurprisingly, the best improvements are for high $n$ and small $d$, where the proposed method results in a statistical efficiency over 100 times better than the baseline estimator and up to 5 times better than the next best method. 

\begin{figure}[t]
  \centering
    \includegraphics[width=0.8\textwidth]{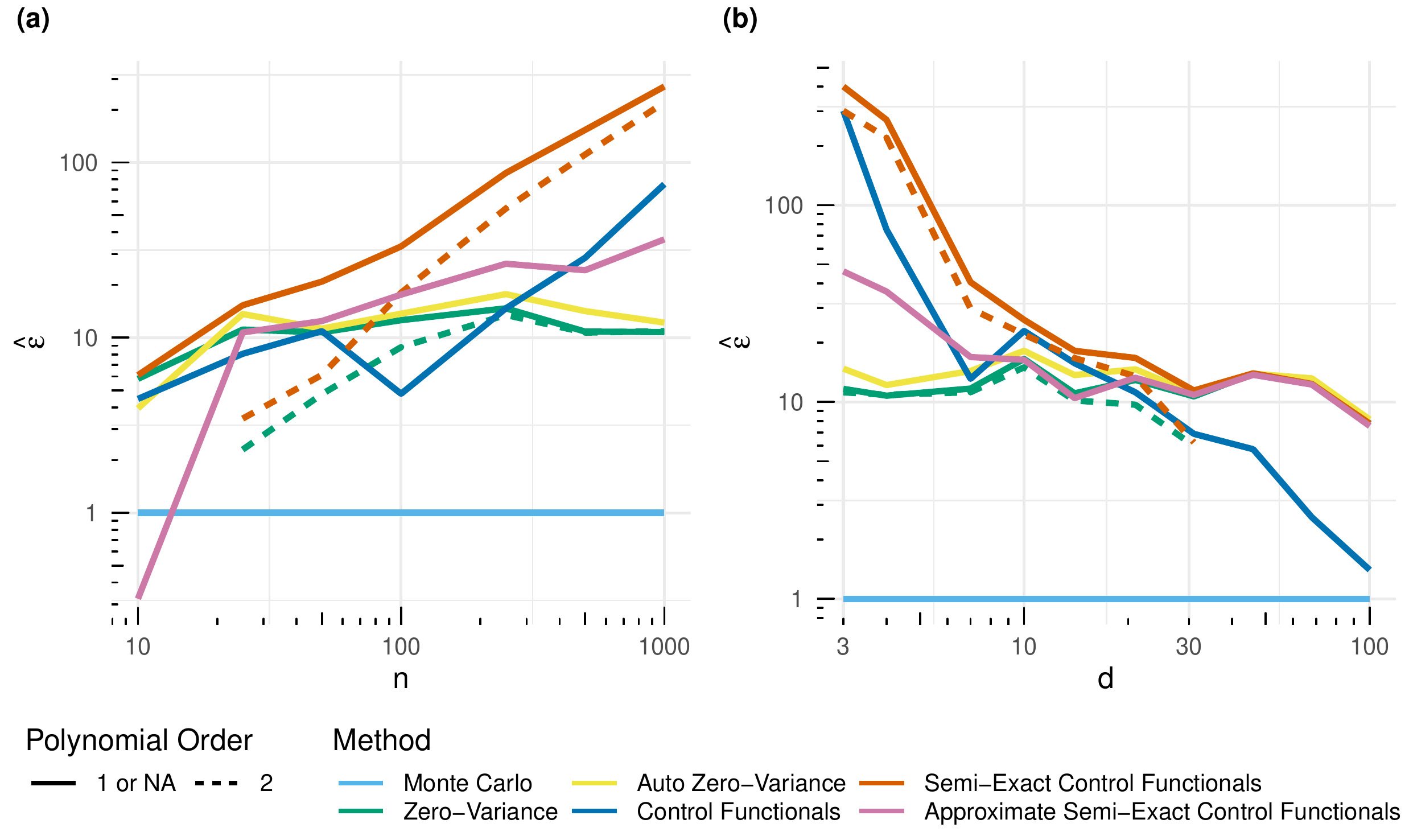}
    \caption{Gaussian example (a) estimated \textit{statistical efficiency} with $d=4$ and (b) estimated \textit{statistical efficiency} with $n=1000$ for integrand \eqref{eqn:Gaussian_combin}.
    }
    \label{fig:Gaussian}
\end{figure}

\subsection{Capture-Recapture Example} \label{subsec: capture}

The two remaining examples, here and in Section \ref{subsec: sonar}, are applications of Bayesian statistics described in \cite{South2019}.
In each case the aim is to estimate expectations with respect to a posterior distribution $P_{\b{x} \mid \b{y}}$ of the parameters $\b{x}$ of a statistical model based on $\b{y}$, an observed dataset. 
Samples $\b{x}^{(i)}$ were obtained using the Metropolis-adjusted Langevin algorithm \citep{Roberts1996}, which is a Metropolis-Hastings algorithm with proposal $\mathcal{N} ( \b{x}^{(i-1)} + h^2 \frac{1}{2}\b{\Sigma}\nabla_{\b{x}} \log P_{\b{x} \mid \b{y}}(\b{x}^{(i-1)} \mid \b{y}) , h^2 \b{\Sigma})$. 
Step sizes of $h=0.72$ for the capture-recapture example  and $h=0.3$ for the sonar example (see Section \ref{subsec: sonar}) were selected and an empirical approximation of the posterior covariance matrix was used as the pre-conditioner $\b{\Sigma} \in \mathbb{R}^{d\times d}$. 
Since the proposed method does not rely on the Markov chain being $P_{\b{x} \mid \b{y}}$-invariant we also repeated these experiments using the unadjusted Langevin algorithm \citep{Parisi1981,Ermak1975}, with similar results reported in Appendix~\ref{app: ULA results}.

In this first example, a Cormack--Jolly--Seber capture-recapture model \citep{Lebreton1992} is used to model data on the capture and recapture of the bird species \textit{Cinclus Cinclus} \citep{Marzolin1988}. The integrands of interest are the marginal posterior means $f_i(\b{x}) = x_i$ for $i=1,\ldots,11$, where $\b{x}=(\phi_1,\ldots,\phi_5,p_2,\ldots,p_6,\phi_6 p_7)$, $\phi_j$ is the probability of survival from year $j$ to $j+1$ and $p_j$ is the probability of being captured in year $j$. 
The likelihood is
\begin{align*}
\ell(\b{y}|\b{x}) \propto \prod_{i=1}^{6}\chi_i^{d_i} \prod_{k=i+1}^{7} \left[ \phi_i p_k \prod_{m=i+1}^{k-1} \phi_m (1-p_m) \right]^{y_{ik}},
\end{align*}
where $d_i=D_i-\sum_{k=i+1}^{7}y_{ik}$, $\chi_i=1-\sum_{k=i+1}^{7} \phi_i p_k \prod_{m=i+1}^{k-1} \phi_m (1-p_m)$ and the data $\b{y}$ consists of $D_i$, the number of birds released in year $i$, and $y_{ik}$, the number of animals caught in year $k$ out of the number released in year $i$, for $i=1,\ldots,6$ and $k=2,\ldots,7$. Following \citet{South2019}, parameters are transformed to the real line using $\tilde{x}_j=\log(x_j/(1-x_j))$ and the adjusted prior density for $\tilde{x}_j$ is $\exp(\tilde{x}_j)/(1+\exp(\tilde{x}_j))^2$, for $j=1,\ldots,11$.

\citet{South2019} found that non-parametric methods outperform standard parametric methods for this 11-dimensional example. 
The estimator $I_{\text{SECF}}$ combines elements of both approaches, so there is interest in determining how the method performs.
It is clear from Figure~\ref{fig:Recapture} that all variance reduction approaches are helpful in improving upon the vanilla Monte Carlo estimator in this example. The best improvement in terms of statistical and computational efficiency is offered by $I_{\text{SECF}}$, which also has similar performance to $I_{\text{CF}}$. 

\begin{figure}[t]
  \centering
    \includegraphics[width=0.8\textwidth]{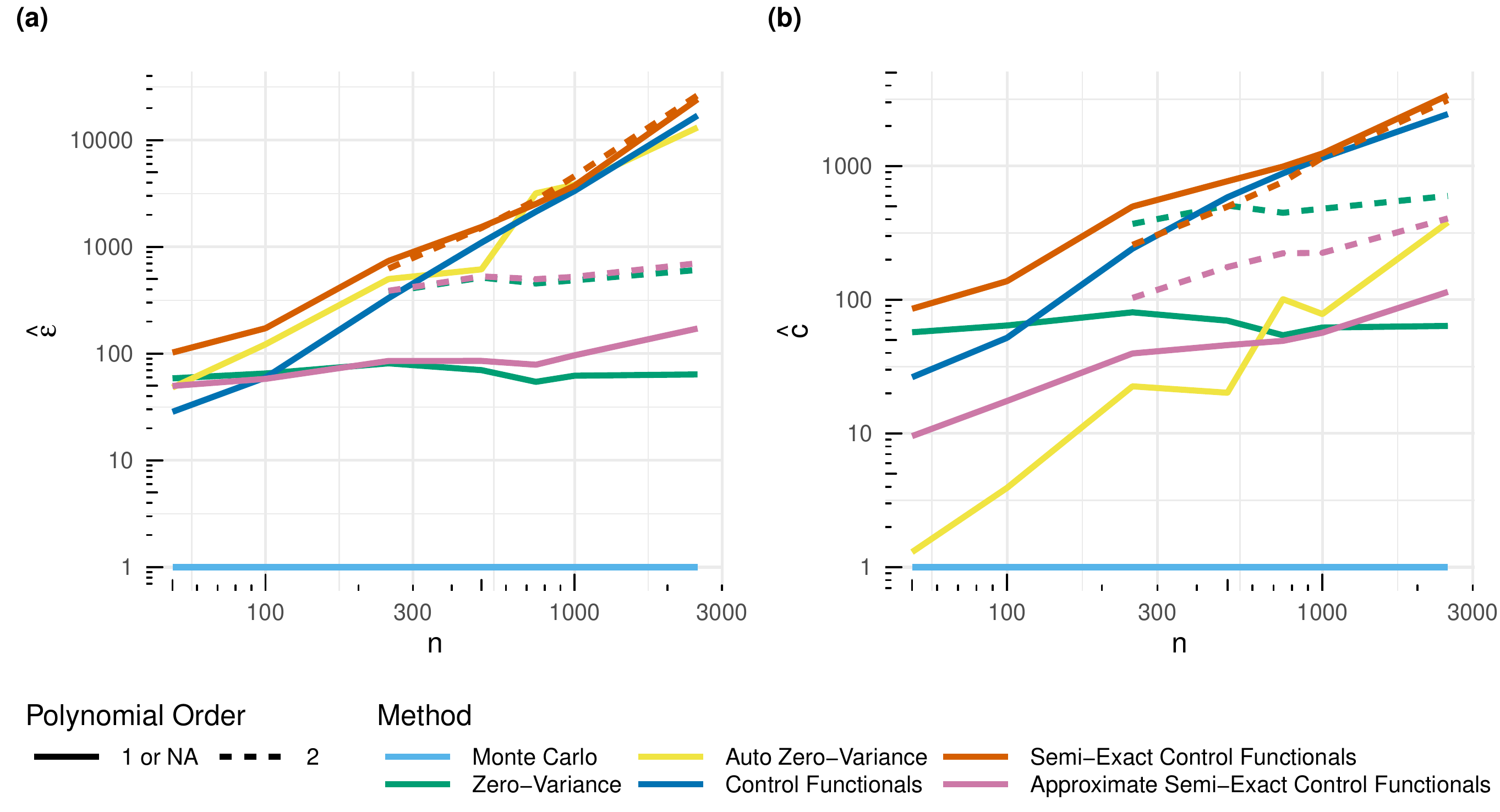}
    \caption{Capture-recapture example (a) estimated \textit{statistical efficiency} and (b) estimated \textit{computational efficiency}. Efficiency here is reported as an average over the 11 expectations of interest.
    }
    \label{fig:Recapture}
\end{figure}

\subsection{Sonar Example} \label{subsec: sonar}

Our final application is a 61-dimensional logistic regression example using data from \citet{Gorman1988} and \citet{Dheeru2017}. To use standard regression notation, the parameters are denoted $\b{\beta} \in \mathbb{R}^{61}$, the matrix of covariates in the logistic regression model is denoted $\b{X} \in \mathbb{R}^{208 \times 61}$ where the first column is all 1's to fit an intercept and the response is denoted $\b{y} \in \mathbb{R}^{208}$. In this application, $\b{X}$ contains information related to energy frequencies reflected from either a metal cylinder ($y=1$) or a rock ($y=0$). 
The log likelihood for this model is
\begin{equation*}\label{eqn:logistic}
\log \ell(\b{y},\b{X}|\b{\beta}) = \sum_{i=1}^{208} \left(y_i \b{X}_{i,\cdot}\b{\beta} - \log (1+\exp(\b{X}_{i,\cdot}\b{\beta}) )\right).
\end{equation*}
We use a $\mathcal{N}(0,5^2)$ prior for the predictors (after standardising to have standard deviation of 0.5) and $\mathcal{N}(0,20^2)$ prior for the intercept, following \citet{South2019,Chopin2017}, 
but we focus on estimating the more challenging integrand $f(\b{\beta}) = ( 1+\exp(-\tilde{\b{X}}\b{\beta}) )^{-1}$, which can be interpreted as the probability that observed covariates $\tilde{\b{X}}$ emanate from a metal cylinder. 
The gold standard of $I \approx 0.4971$ was obtained from a 10 million iteration Metropolis-Hastings \citep{Hastings1970} run with multivariate normal random walk proposal.

Figure \ref{fig:Sonar} illustrates the statistical and computational efficiency of estimators for various $n$ in this example. 
It is interesting to note that $I_{\text{SECF}}$ and $I_{\text{ASECF}}$ offer similar statistical efficiency to $I_{\text{ZV}}$, especially given the poor relative performance of $I_{\text{CF}}$. Since it is inexpensive to obtain the $m$ samples using the Metropolis-adjusted Langevin algorithm in this example, $I_{\text{ZV}}$ and $I_{\text{ASECF}}$ are the only approaches which offer improvements in computational efficiency over the baseline estimator for the majority of $n$ values considered, and even in these instances the improvements are marginal.

\begin{figure}[t]
  \centering
    \includegraphics[width=0.8\textwidth]{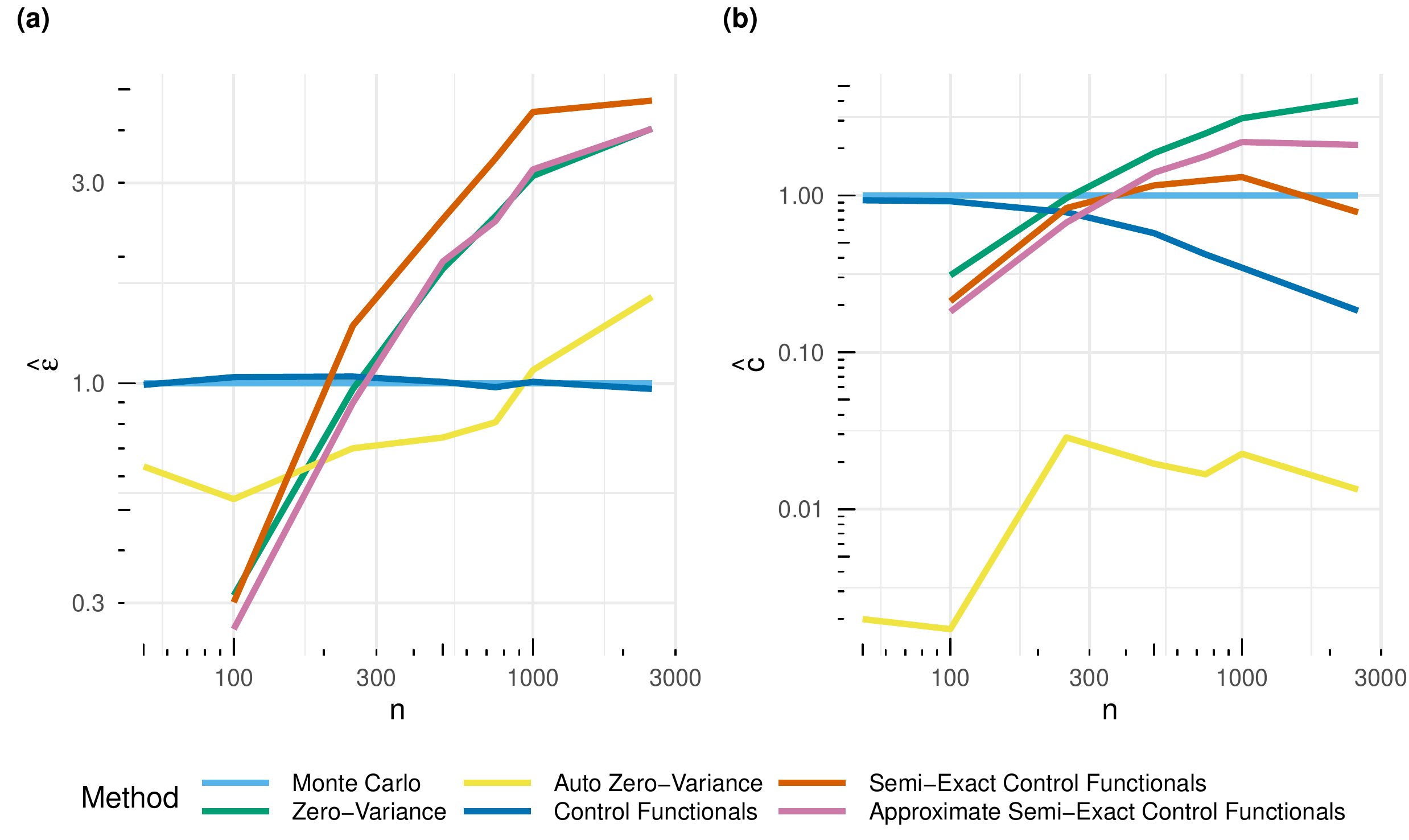}
    \caption{Sonar example (a) estimated \textit{statistical efficiency} and (b) estimated \textit{computational efficiency}.
    }
    \label{fig:Sonar}
\end{figure}

\section{Theoretical Properties and Convergence Assessment} \label{sec: theory}

In this section we provide a convergence result and discuss diagnostics that can be used to monitor the performance of the proposed method.
To this end, we introduce the semi-norm
\begin{align}
|f|_{k_0,\mathcal{F}} = \inf_{\substack{f = h + g \\ h \in \mathcal{F}, \, g \in \mathcal{H}(k_0) }} \|g\|_{\mathcal{H}(k_0)} , \label{eq: semi norm mt}
\end{align}
which is well-defined when the infimum is taken over a non-empty set, otherwise we define $|f|_{k_0,\mathcal{F}} = \infty$.

\subsection{Finite Sample Error and a Practical Diagnostic}

The following proposition provides a finite sample error bound:
\begin{proposition} \label{lem: KSD bound}
Let the hypotheses of Corollary \ref{cor: well defined} hold.
Then the integration error satisfies the bound
\begin{equation}
    \lvert I(f) - I_{\textsc{SECF}}(f) \rvert \leq |f|_{k_0,\mathcal{F}} \, (\b{w}^\top \b{K}_0 \b{w})^{1/2}  \label{eq: error bound KSD}
\end{equation}
where the weights $\b{w}$, defined in \eqref{eq: weights}, satisfy
\begin{equation*}
    \b{w} = \argmin_{ \b{v} \in \mathbb{R}^n} (\b{v}^\top \b{K}_0 \b{v})^{1/2} \quad \text{ s.t. } \quad \sum_{i=1}^n v_i h(\b{x}^{(i)}) = \int h(\b{x}) p(\b{x}) \dif \b{x} \quad \text{ for every } h \in \mathcal{F}.
\end{equation*}
\end{proposition}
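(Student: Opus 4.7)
The plan is to first derive the error bound by exploiting the semi-exactness of the cubature on $\mathcal{F}$ combined with a standard reproducing-kernel Cauchy--Schwarz argument, and then to characterise the weights $\b{w}$ as the solution of a constrained quadratic minimisation using Lagrange multipliers.

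For the error bound, I would start from the representation $I_{\textsc{SECF}}(f) = \sum_i w_i f(\b{x}^{(i)})$ established in Lemma \ref{lem: comput etc} and fix any decomposition $f = h + g$ with $h \in \mathcal{F}$ and $g \in \mathcal{H}(k_0)$ (if no such decomposition exists the bound is vacuous). By semi-exactness the error from $h$ cancels, giving $I(f) - I_{\textsc{SECF}}(f) = I(g) - \sum_i w_i g(\b{x}^{(i)})$. Lemma \ref{lem:boundary-kernel} ensures that $I(k_0(\cdot, \b{y})) = 0$ for every $\b{y}$, and extending this to all of $\mathcal{H}(k_0)$ (see below) shows $I(g) = 0$. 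The reproducing property then turns the pointwise evaluations into inner products, and Cauchy--Schwarz yields
\begin{equation*}
|I(f) - I_{\textsc{SECF}}(f)| = \left| \Big\langle g, \sum_{i=1}^n w_i k_0(\cdot, \b{x}^{(i)}) \Big\rangle_{\mathcal{H}(k_0)} \right| \leq \|g\|_{\mathcal{H}(k_0)} \, (\b{w}^\top \b{K}_0 \b{w})^{1/2},
\end{equation*}
where the final factor follows from $\|\sum_i w_i k_0(\cdot, \b{x}^{(i)})\|_{\mathcal{H}(k_0)}^2 = \b{w}^\top \b{K}_0 \b{w}$ via the reproducing property. Taking the infimum over all admissible decompositions on the right delivers \eqref{eq: error bound KSD}.

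For the variational characterisation I would note that, given the basis $\{1, \mathcal{L}\phi_1, \dots, \mathcal{L}\phi_{m-1}\}$ of $\mathcal{F}$ and the fact (from Lemma \ref{lemma:assum}) that $\int \mathcal{L}\phi_i \, \dif p = 0$, the cubature-exactness constraint $\sum_i v_i h(\b{x}^{(i)}) = \int h \, \dif p$ for every $h \in \mathcal{F}$ collapses to the single linear equation $\b{P}^\top \b{v} = \b{e}_1$. Minimising the strictly convex quadratic $\b{v}^\top \b{K}_0 \b{v}$ subject to this affine constraint has a unique minimiser, obtained from the Lagrangian stationarity conditions $\b{K}_0 \b{v} = \b{P} \bm{\lambda}$ and $\b{P}^\top \b{v} = \b{e}_1$. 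Eliminating $\bm{\lambda}$ recovers $\b{v} = \b{K}_0^{-1} \b{P} (\b{P}^\top \b{K}_0^{-1} \b{P})^{-1} \b{e}_1$, matching \eqref{eq: weights}.

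The one genuinely subtle step is the claim that $I(g) = 0$ for every $g \in \mathcal{H}(k_0)$, not merely for finite linear combinations of kernel sections $k_0(\cdot, \b{y})$. The cleanest route is to verify that integration against $p$ defines a bounded linear functional on $\mathcal{H}(k_0)$, which reduces (via Cauchy--Schwarz in the RKHS and the reproducing property) to checking that $\b{y} \mapsto k_0(\b{y}, \b{y})^{1/2}$ is $p$-integrable under the hypotheses of Lemma \ref{lem:boundary-kernel}; then continuity of this functional plus its vanishing on a dense subspace (the closed span of $\{k_0(\cdot, \b{y}) : \b{y} \in \mathbb{R}^d\}$) extends the zero-integral property to all of $\mathcal{H}(k_0)$. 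I expect this measurability/integrability check to be the main technical point, and the basic facts catalogued in Appendix \ref{app: basic results on H} should supply what is needed.
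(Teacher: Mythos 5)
Your proposal is correct and is essentially the paper's own argument: the paper also decomposes $f = h + g$ with $h \in \mathcal{F}$, $g \in \mathcal{H}(k_0)$, cancels the $h$ contribution via the integral-exactness constraint $\b{P}^\top \b{w} = \b{e}_1$, bounds the remaining term by $\|g\|_{\mathcal{H}(k_0)}\,(\b{w}^\top \b{K}_0 \b{w})^{1/2}$, and takes the infimum over decompositions, while the variational characterisation is obtained from the same constrained quadratic programme (the paper cites the block linear system of Karvonen et al.\ where you write out the Lagrangian explicitly). The only presentational difference is that the paper packages the Cauchy--Schwarz step and the extension of $\int k_0(\cdot,\b{y})\,\dif p = 0$ to all of $\mathcal{H}(k_0)$ inside the standard worst-case-error formalism of its Appendix~\ref{app: wce}, a point you treat somewhat more explicitly.
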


The proof is provided in Appendix \ref{app: finite bound}.
The first quantity $|f|_{k_0,\mathcal{F}}$ in \eqref{eq: error bound KSD} can be approximated by $|f_n|_{k_0,\mathcal{F}}$ when $f_n$ is a reasonable approximation for $f$ and this can in turn can be bounded as $|f_n|_{k_0,\mathcal{F}} \leq (\b{a}^\top \b{K}_0 \b{a})^{1/2}$. 
The finiteness of $|f|_{k_0,\mathcal{F}}$ ensures the existence of a solution to the Stein equation, sufficient conditions for which are discussed in \cite{Mackey2016,Si2020}. 
The second quantity $(\b{w}^\top \b{K}_0 \b{w})^{1/2}$ in \eqref{eq: error bound KSD} is computable and is recognized as a \emph{kernel Stein discrepancy} between the empirical measure $\sum_{i=1}^n w_i \delta(\b{x}^{(i)})$ and the distribution whose density is $p$, based on the Stein operator $\mathcal{L}$ \citep{Chwialkowski2016,Liu2016}.
Note that our choice of Stein operator differs to that in \cite{Chwialkowski2016} and \citet{Liu2016}.
There has been substantial recent research into the use of kernel Stein discrepancies for assessing algorithm performance in the Bayesian computational context \citep{Gorham2017,Chen2018,Chen2019,singhal2019kernelized,Hodgkinson2020} and one can also exploit this discrepancy as a diagnostic for the performance of the semi-exact control functional.
The diagnostic that we propose to monitor is the product $(\b{w}^\top \b{K}_0 \b{w})^{1/2} (\b{a}^\top \b{K}_0 \b{a})^{1/2}$. This approach to error estimation was also suggested (outside the Bayesian context) in Section~5.1 of \cite{Fasshauer2011}.

Empirical results in Figure \ref{fig:KSD} suggest that this diagnostic provides a conservative approximation of the actual error. 
Further work is required to establish whether this diagnostic detects convergence and non-convergence in general.

\begin{figure}[!t]
\centering
\includegraphics[width=0.4\textwidth]{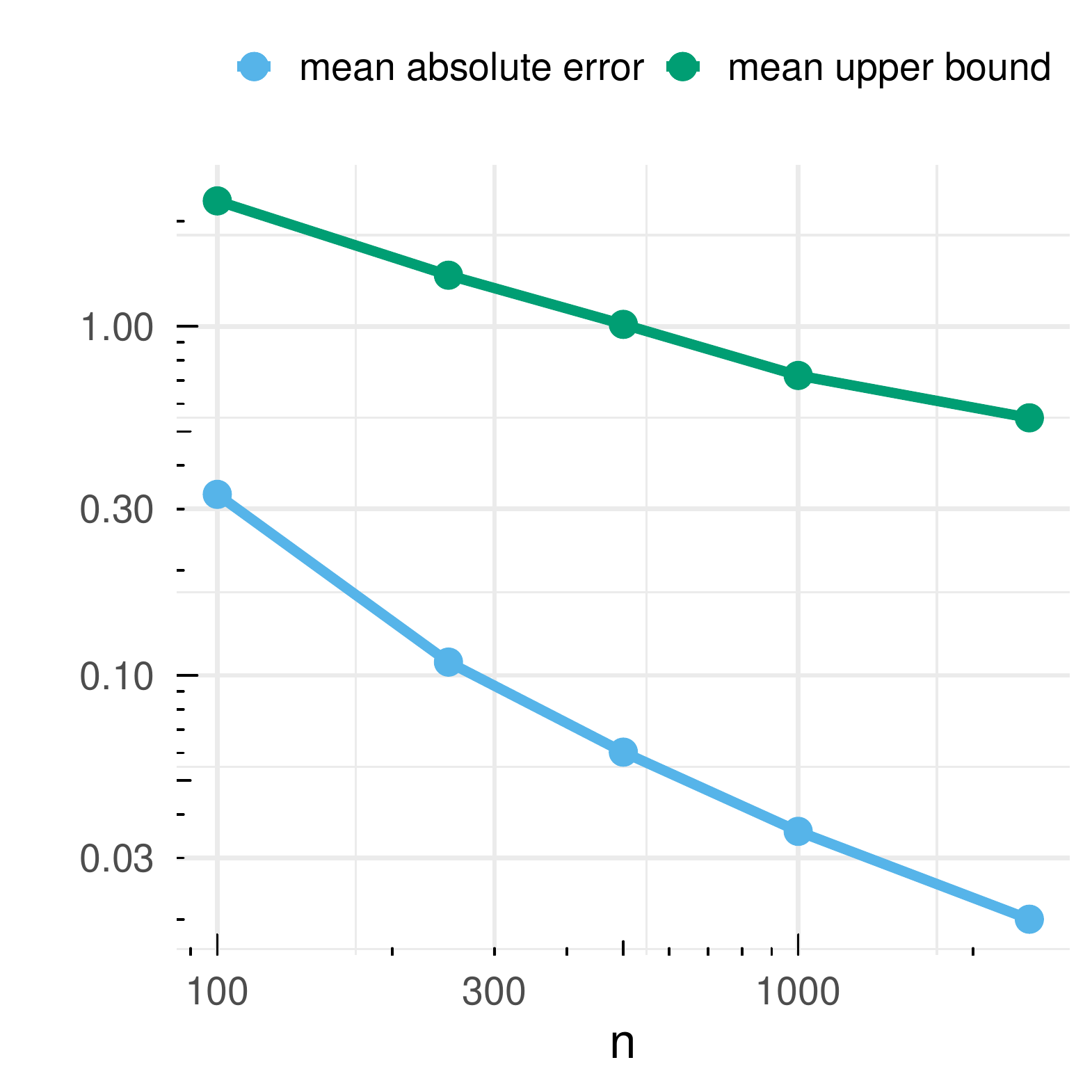}
\caption{The mean absolute error and mean of the approximate upper bound $(\b{w}^\top \b{K}_0 \b{w})^{1/2} (\b{a}^\top \b{K}_0 \b{a})^{1/2}$, for different values of $n$ in the sonar example of Section~\ref{subsec: sonar}. Both are based on the semi-exact control functional method with $\Phi = \mathcal{P}^1$.
}
\label{fig:KSD}
\end{figure}

\subsection{Consistency of the Estimator}

In this section we establish that, even in the biased sampling setting, the proposed estimator is consistent.
In what follows we consider an increasing number $n$ of samples $\bm{x}^{(i)}$ whilst the finite-dimensional space $\Phi$, with basis $\{\phi_1,\dots,\phi_{m-1}\}$, is held fixed. 
The samples $\bm{x}^{(i)}$ will be assumed to arise from a $V$-uniformly ergodic Markov chain; the reader is referred to Chapter 16 of \cite{meyn2012markov} for the relevant background.
Recall that the points $(\b{x}^{(i)})_{i=1}^n$ are called \emph{$\mathcal{F}$-unisolvent} if the matrix in \eqref{eq: def for P} has full rank.
It will be convenient to introduce an inner product $\langle \bm{u} , \bm{v} \rangle_n = \bm{u}^\top \bm{K}_0^{-1} \bm{v}$ and associated norm $\|\bm{u}\|_n = \langle \bm{u}, \bm{u} \rangle_n^{1/2}$.
Let $\Pi$ be the matrix that projects orthogonally onto the columns of $[\bm{\Psi}]_{i,j} := \mathcal{L} \phi_j(\bm{x}^{(i)})$ with respect to the $\langle \cdot , \cdot \rangle_n$ inner product. 

\begin{theorem} \label{thm: consistency}
Let the hypotheses of Corollary \ref{cor: well defined} hold and let $f$ be any function for which $|f|_{k_0,\mathcal{F}} < \infty$. 
Let $q$ be a probability density with $p/q > 0$ on $\mathbb{R}^d$ and consider a $q$-invariant Markov chain $(\bm{x}^{(i)})_{i=1}^n$, assumed to be $V$-uniformly ergodic for some $V : \mathbb{R}^d \rightarrow [1,\infty)$, such that 
\begin{enumerate}
    \item[A1.] $\sup_{\b{x} \in \mathbb{R}^d}  \; V(\b{x})^{-r} \; (p(\bm{x}) / q(\bm{x}))^4 \; k_0(\b{x},\b{x})^2   < \infty $ for some $0 < r < 1$;
    \item[A2.] the points $(\b{x}^{(i)})_{i=1}^n$ are almost surely distinct and $\mathcal{F}$-unisolvent;
    \item[A3.] $\lim\sup_{n \rightarrow \infty} \| \Pi \bm{1} \|_n / \| \bm{1} \|_n < 1$ almost surely.
\end{enumerate}
Then $| I_{\textsc{SECF}}(f) - I(f)| = O_P(n^{1/2})$.
\end{theorem}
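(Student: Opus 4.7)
The plan is to start from Proposition~\ref{lem: KSD bound}, which gives
$|I(f) - I_{\textsc{SECF}}(f)| \leq |f|_{k_0,\mathcal{F}} \, (\b{w}^\top \b{K}_0 \b{w})^{1/2}$.
Since the prefactor $|f|_{k_0,\mathcal{F}}$ is finite by hypothesis, the problem collapses to bounding the kernel Stein discrepancy $(\b{w}^\top \b{K}_0 \b{w})^{1/2}$ at the claimed rate (namely $n^{-1/2}$).

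First I would carry out an algebraic simplification of $\b{w}^\top \b{K}_0 \b{w}$. Writing $\b{P} = [\b{1}, \b{\Psi}]$ with $[\b{\Psi}]_{i,j} = (\mathcal{L}\phi_j)(\b{x}^{(i)})$, the identity $\b{w}^\top \b{K}_0 \b{w} = [(\b{P}^\top \b{K}_0^{-1}\b{P})^{-1}]_{1,1}$ follows directly from \eqref{eq: weights}. A Schur-complement calculation, combined with the fact that $\Pi$ is the $\langle\cdot,\cdot\rangle_n$-orthogonal projection onto the columns of $\b{\Psi}$ (so Pythagoras applies), yields
\begin{equation*}
\b{w}^\top \b{K}_0 \b{w} \;=\; \frac{1}{\|\b{1}\|_n^2 - \|\Pi\b{1}\|_n^2} \;=\; \frac{1}{\|(I-\Pi)\b{1}\|_n^2}.
\end{equation*}
Assumption A3 supplies some $c < 1$ with $\|(I-\Pi)\b{1}\|_n^2 \geq (1-c^2)\|\b{1}\|_n^2$ almost surely for all large $n$, so it is enough to prove that $\|\b{1}\|_n^2 = \b{1}^\top \b{K}_0^{-1}\b{1}$ grows at least linearly in $n$ with probability tending to one.

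The key step is a lower bound on $\|\b{1}\|_n^2$ obtained via Cauchy--Schwarz in the form $(\b{1}^\top \b{v})^2 \leq (\b{1}^\top \b{K}_0^{-1}\b{1})(\b{v}^\top \b{K}_0 \b{v})$, applied to the importance-sampling weights $v_i = p(\b{x}^{(i)})/(nq(\b{x}^{(i)}))$. The numerator $\b{1}^\top \b{v} = n^{-1}\sum_i (p/q)(\b{x}^{(i)})$ converges in probability to $\int (p/q)\,\mathrm{d}q = 1$ by the law of large numbers for $V$-uniformly ergodic chains, applicable because A1 dominates $p/q$ by a power of $V$. For the denominator write $\b{v}^\top \b{K}_0 \b{v} = n^{-2}\sum_{i,j} H(\b{x}^{(i)}, \b{x}^{(j)})$ with $H(\b{x},\b{y}) = (p/q)(\b{x})(p/q)(\b{y})\, k_0(\b{x},\b{y})$, and exploit the crucial cancellation
\begin{equation*}
\int H(\b{x},\b{y})\, q(\b{x})\,\mathrm{d}\b{x} \;=\; (p/q)(\b{y})\int k_0(\b{x},\b{y})\, p(\b{x})\,\mathrm{d}\b{x} \;=\; 0
\end{equation*}
supplied by Lemma~\ref{lem:boundary-kernel}. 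Since $H$ is thus degenerate with respect to $q\otimes q$, a $V$-statistic concentration bound for $V$-uniformly ergodic chains gives $\b{v}^\top \b{K}_0 \b{v} = O_P(n^{-1})$. Substituting back yields $\b{1}^\top \b{K}_0^{-1}\b{1} \geq (\b{1}^\top \b{v})^2/(\b{v}^\top \b{K}_0 \b{v})$ of order $n$, hence $(\b{w}^\top \b{K}_0 \b{w})^{1/2} = O_P(n^{-1/2})$, which delivers the conclusion via Proposition~\ref{lem: KSD bound}.

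The principal obstacle is the last step: establishing the degenerate $V$-statistic estimate $n^{-2}\sum_{i,j} H(\b{x}^{(i)},\b{x}^{(j)}) = O_P(n^{-1})$ when the samples come from a (possibly non-$p$-invariant) $V$-uniformly ergodic Markov chain and $H$ is unbounded. Without the cancellation from Lemma~\ref{lem:boundary-kernel} the typical rate for such a bilinear form would be $O_P(1)$; it is precisely the Stein zero-integration property that buys the extra factor of $n^{-1}$. The quantitative moment control in A1---that $V^{-r}(p/q)^4 k_0^2$ is uniformly bounded for some $r < 1$---provides just enough integrability to invoke a Rosenthal-type inequality or an $L^2$ ergodic bound for degenerate second-order $V$-statistics of geometrically ergodic Markov chains, and this is where the bulk of the technical work lies.
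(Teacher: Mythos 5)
Your reduction is exactly the one the paper uses: Proposition~\ref{lem: KSD bound} turns the problem into bounding the kernel Stein discrepancy $(\b{w}^\top \b{K}_0 \b{w})^{1/2}$, the identity $\b{w}^\top \b{K}_0 \b{w} = [(\b{P}^\top \b{K}_0^{-1}\b{P})^{-1}]_{1,1} = (\|\b{1}\|_n^2 - \|\Pi\b{1}\|_n^2)^{-1}$ is the paper's block-inversion step, and A3 is used in the same way to reduce everything to showing that $\b{1}^\top \b{K}_0^{-1}\b{1}$ grows linearly in $n$. Your Cauchy--Schwarz step with the importance weights $v_i = p(\b{x}^{(i)})/(n q(\b{x}^{(i)}))$ is also sound: after normalisation it exhibits a feasible point for the sum-to-one-constrained minimisation of the worst-case error, which is exactly the role played by the weights $\bar{\b{w}}$ and $\tilde{\b{w}}$ in the paper's Corollary~\ref{cor: apply Hodge}.

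The gap is the final estimate $\b{v}^\top \b{K}_0 \b{v} = n^{-2}\sum_{i,j} H(\b{x}^{(i)},\b{x}^{(j)}) = O_P(n^{-1})$. You correctly identify that $H$ is degenerate with respect to $q$ (via Lemma~\ref{lem:boundary-kernel}) and that this is where the difficulty lies, but you do not prove the estimate; you appeal to ``a Rosenthal-type inequality or an $L^2$ ergodic bound for degenerate second-order $V$-statistics'' without stating or verifying one. For a merely $V$-uniformly ergodic, possibly non-reversible chain and an unbounded kernel $H$, obtaining the $O_P(n^{-1})$ rate amounts to a second-moment bound for the Hilbert-space-valued additive functional $n^{-1}\sum_i (p/q)(\b{x}^{(i)})\, k_0(\cdot,\b{x}^{(i)})$, together with the verification that A1 (boundedness of $V^{-r}(p/q)^4 k_0(\b{x},\b{x})^2$ for some $r<1$) supplies precisely the drift-dominated moment condition this requires. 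That is the content of Theorem~1 of \citet{Hodgkinson2020}, which the paper imports wholesale as Theorem~\ref{thm: hodgkinson} and Corollary~\ref{cor: apply Hodge} rather than reproving. (A secondary loose end: your appeal to the law of large numbers for $n^{-1}\sum_i (p/q)(\b{x}^{(i)})$ needs a moment condition on $p/q$ relative to $V$, which does not follow from A1 alone unless $k_0(\b{x},\b{x})$ is bounded away from zero.) So your argument is structurally identical to the paper's up to the reduction to $\b{1}^\top\b{K}_0^{-1}\b{1} \gtrsim n$, but the one genuinely hard analytic step is asserted rather than established; to close the proof you should either cite the Hodgkinson result directly or supply the Markov-chain degenerate $V$-statistic bound in full.
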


The proof is provided in Appendix \ref{ap: consistency proof} and exploits a recent theoretical contribution in \cite{Hodgkinson2020}. 
Assumption A1 serves to ensure that $q$ is similar enough to $p$ that a $q$-invariant Markov chain will also explore the high probability regions of $p$, as discussed in \cite{Hodgkinson2020}.
Sufficient conditions for $V$-uniform ergodicity are necessarily Markov chain dependent. 
The case of the Metropolis-adjusted Langevin algorithm is discussed in \cite{Roberts1996,Chen2019} and, in particular, Theorem 9 of \cite{Chen2019} provides sufficient conditions for $V$-uniform ergodicity with $V(\bm{x}) = \exp(s \|\bm{x}\|)$ for all $s > 0$. 
Under these conditions, and with the rational quadratic kernel $k$ considered in Section \ref{sec: Empirical}, we have $k_0(\bm{x},\bm{x}) = O(\|\nabla_{\bm{x}} \log p(\bm{x})\|^2)$ and therefore A1 is satisfied whenever $(p(\bm{x}) / q(\bm{x}) ) \|\nabla_{\bm{x}} \log p(\bm{x})\| = O(\exp( t \|\bm{x}\|))$ for some $t > 0$; a weak requirement. 
Assumption A2 ensures that the finite sample size bound \eqref{eq: error bound KSD} is almost surely well-defined. 
Assumption A3 ensures the points in the sequence $(\bm{x}^{(i)})_{i=1}^n$ distinguish (asymptotically) the constant function from the functions $\{\phi_i\}_{i=1}^{m-1}$, which is a weak technical requirement.

\section{Discussion}\label{sec: Discussion}

The problem of approximating posterior expectations is well-studied and powerful control variate and control functional methods exist to improve the accuracy of Monte Carlo integration.
However, it is \textit{a priori} unclear which of these methods is most suitable for any given task.
This paper demonstrates how both parametric and non-parametric approaches can be combined into a single estimator that remains competitive with the state-of-the-art in all regimes we considered.
Moreover, we highlighted polynomial exactness in the Bernstein-von-Mises limit as a useful property that we believe can confer robustness of the estimator in a broad range of applied contexts.
The multitude of applications for these methods, and their availability in the \verb+ZVCV+ package \citep{rZVCV}, suggests they are well-placed to have a practical impact.

Several possible extensions of the proposed method can be considered.
For example, the parametric component $\Phi$ could be adapted to the particular $f$ and $p$ using a dimensionality reduction method.
Likewise, extending cross-validation to encompass the choice of kernel and even the choice of control variate or control functional estimator may be useful. The potential for alternatives to the Nystr\"{o}m approximation to further improve scalability of the method can also be explored. 
In terms of the points $\b{x}^{(i)}$ on which the estimator is defined, these could be optimally selected to minimize the error bound in \eqref{eq: error bound KSD}, for example following the approaches of \cite{Chen2018,Chen2019}.
Finally, we highlight a possible extension to the case where only stochastic gradient information is available, following \cite{Friel2016} in the parametric context.

\FloatBarrier

\paragraph{Acknowledgements}
CJO is grateful to Yvik Swan for discussion of Stein's method.
TK was supported by the Aalto ELEC Doctoral School and the Vilho, Yrj\"{o} and Kalle V\"{a}is\"{a}l\"{a} Foundation.
MG was supported by a Royal Academy of Engineering Research Chair and by the Engineering and Physical Sciences Research Council grants EP/T000414/1, EP/R018413/2, EP/P020720/2, EP/R034710/1, EP/R004889/1.
TK, MG and CJO were supported by the Lloyd's Register Foundation programme on data-centric engineering at the Alan Turing Institute, UK. CN and LFS were supported by the Engineering and Physical Sciences Research Council grants EP/S00159X/1 and EP/V022636/1. 
The authors are grateful for feedback from three anonymous Reviewers, an Associate Editor and an Editor, which led to an improved manuscript.

\bibliography{references}

\newpage
\appendix
\section*{Supplementary Material}

\section{Proof of Lemma \ref{lemma:assum}}\label{app:ProofZero}

Lemmas \ref{lemma:assum} and \ref{lem:boundary-kernel} are stylised versions of similar results that can be found in earlier work, such as \cite{Chwialkowski2016,Liu2016,Oates2017}.
Our presentation differs in that we provide a convenient explicit sufficient condition, on the tails of $\|\nabla g\|$ for Lemma \ref{lemma:assum}, and on the tails of $\|\nabla_{\bm{x}} \nabla_{\bm{y}}^\top k(\b{x},\b{y})\|$ and $\|\nabla_{\b{x}} \Delta_{\b{y}} k(\b{x},\b{y})\|$ for Lemma \ref{lem:boundary-kernel}, for their conclusions to hold.

\begin{proof}
The stated assumptions on the differentiability of $p$ and $g$ imply that the vector field $p(\b{x}) \nabla_{\b{x}} g(\b{x})$ is continuously differentiable on $\mathbb{R}^d$. 
The divergence theorem can therefore be applied, over any compact set $D \subset \mathbb{R}^d$ with piecewise smooth boundary $\partial D$, to reveal that
\begin{align*}
\int_{D} (\mathcal{L} g)(\b{x}) p(\b{x}) \mathrm{d} \b{x} &= \int_D [ \Delta_{\b{x}} g(\b{x}) + \nabla_{\b{x}} g(\b{x}) \cdot \nabla_{\b{x}} \log p(\b{x}) ] p(\b{x}) \mathrm{d}\b{x} \\
&= \int_{D} \left[ \frac{1}{p(\b{x})} \nabla_{\b{x}} \cdot (p(\b{x}) \nabla_{\b{x}} g(\b{x})) \right] p(\b{x}) \mathrm{d} \b{x} \\
&= \int_{D} \nabla_{\b{x}} \cdot (p(\b{x}) \nabla_{\b{x}} g(\b{x}))\mathrm{d} \b{x} \\
&=\oint_{\partial D} p(\b{x})\nabla_{\b{x}}  g(\b{x})\cdot \b{n}(\b{x}) \sigma(\mathrm{d}\b{x}),
\end{align*}
where $\b{n}(\b{x})$ is the unit normal vector at $\b{x} \in \partial D$ and $\sigma(\mathrm{d}\b{x})$ is the surface element at $\b{x} \in \partial D$. 
Next, we let $D = D_R = \{\b{x} : \|\b{x}\| \leq R\}$ be the ball in $\mathbb{R}^d$ with radius $R$, so that $\partial D_R$ is the sphere $S_R = \{\b{x} : \|\b{x}\| = R\}$.
The assumption $\|\nabla_{\b{x}} g(\b{x})\| \leq C \|\b{x}\|^{-\delta} p(\b{x})^{-1}$ in the statement of the lemma allows us to establish the bound
\begin{align*}
\left| \oint_{S_R} p(\b{x})\nabla_{\b{x}}  g(\b{x})\cdot \b{n}(\b{x}) \sigma(\mathrm{d}\b{x}) \right| \leq \oint_{S_R} \left| p(\b{x})\nabla_{\b{x}}  g(\b{x})\cdot \b{n}(\b{x}) \right| \sigma(\mathrm{d}\b{x}) &\leq \oint_{S_R} p(\b{x})\left\| \nabla_{\b{x}}  g(\b{x})\right\|  \sigma(\mathrm{d}\b{x})\\
&\leq \oint_{S_R} C\left\| \b{x}\right\|^{-\delta}  \sigma(\mathrm{d}\b{x}) \\
&= C R^{-\delta} \oint_{S_R} \sigma(\mathrm{d}\b{x}) \\
&= C R^{-\delta} \frac{2 \pi^{d/2}}{\Gamma(d/2)} R^{d-1},
\end{align*}
where in the first and second inequalities we used Jensen's inequality and Cauchy--Schwarz, respectively, and in the final equality we have made use of the surface area of $S_R$.
The assumption that $\delta > d - 1$ is then sufficient to obtain the result:
\begin{equation*}
\left| \int (\mathcal{L} g)(\b{x}) p(\b{x}) \mathrm{d}\b{x} \right| = \lim_{R \rightarrow \infty} \left| \oint_{S_R} p(\b{x})\nabla_{\b{x}}  g(\b{x})\cdot \b{n}(\b{x}) \sigma(\mathrm{d}\b{x}) \right| \; \leq \; \lim_{R \rightarrow \infty} C \frac{2 \pi^{d/2}}{\Gamma(d/2)} R^{d-1 - \delta} \; = \; 0.
\end{equation*}
This completes the argument.
\end{proof}

\section{Differentiating the Kernel} \label{appendix:kernels}

This appendix provides explicit forms of~\eqref{eq:stein-kernel2} for kernels $k$ that are radial. 
First we present a generic result in Lemma \ref{lem: k0 for radial} before specialising to the cases of the rational quadratic (Section \ref{subsec: RQK}), Gaussian (Section \ref{subsec: GK}) and Mat\'{e}rn (Section \ref{sec: matern}) kernels.

\begin{lemma} \label{lem: k0 for radial}
Consider a radial kernel $k$, meaning that $k$ has the form
\begin{equation*}
k(\b{x}, \b{y}) = \Psi(z), \qquad z = \| \b{x} - \b{y} \|^2,
\end{equation*}
where the function $\Psi \colon [0, \infty) \to \mathbb{R}$ is four times differentiable and $\b{x},\b{y} \in \mathbb{R}^d$.
Then \eqref{eq:stein-kernel2} simplifies to 
\begin{equation} \label{eq:stein-kernel-radial}
\begin{split}
k_0(\b{x},\b{y}) ={}& 16 z^2 \Psi^{(4)}(z) + 16 (2+d) z \Psi^{(3)}(z) + 4(2+d) d \Psi^{(2)}(z) \\
& + 4[2z \Psi^{(3)}(z) + (2+d) \Psi^{(2)}(z)] [\b{u}(\b{x}) - \b{u}(\b{y})]^\top (\b{x}-\b{y}) \\
& - 4 \Psi^{(2)}(z) \b{u}(\b{x})^\top (\b{x}-\b{y})(\b{x}-\b{y})^\top \b{u}(\b{y}) - 2 \Psi^{(1)}(z) \b{u}(\b{x})^\top \b{u}(\b{y}),
\end{split}
\end{equation}
where $\b{u}(\b{x}) = \nabla_{\b{x}} \log p(\b{x})$.
\end{lemma}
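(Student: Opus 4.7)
The plan is to compute each of the four terms in the general expansion \eqref{eq:stein-kernel2} separately by direct application of the chain rule, exploiting the fact that $k(\b{x},\b{y})$ depends on $\b{x}$ and $\b{y}$ only through the scalar $z = \|\b{x}-\b{y}\|^2$. The key elementary identities I would record first are
\begin{equation*}
\nabla_{\b{x}} z = 2(\b{x}-\b{y}), \qquad \nabla_{\b{y}} z = -2(\b{x}-\b{y}), \qquad \|\nabla_{\b{x}} z\|^2 = 4z, \qquad \Delta_{\b{x}} z = \Delta_{\b{y}} z = 2d,
\end{equation*}
together with the standard radial-Laplacian identity $\Delta \bigl(G \circ z\bigr) = G''(z)\|\nabla z\|^2 + G'(z)\Delta z = 4z\,G''(z) + 2d\,G'(z)$, valid in either the $\b{x}$ or $\b{y}$ variable.

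Given these, the four components of \eqref{eq:stein-kernel2} reduce to book-keeping. First I would differentiate to obtain $\nabla_{\b{y}} k = -2(\b{x}-\b{y})\Psi'(z)$ and $\nabla_{\b{x}} \nabla_{\b{y}}^\top k = -2\Psi'(z) I_d - 4(\b{x}-\b{y})(\b{x}-\b{y})^\top \Psi''(z)$, which directly gives the $-2\Psi'(z) \b{u}(\b{x})^\top \b{u}(\b{y})$ and $-4\Psi''(z) \b{u}(\b{x})^\top (\b{x}-\b{y})(\b{x}-\b{y})^\top \b{u}(\b{y})$ contributions on the last line of \eqref{eq:stein-kernel-radial}. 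Next I would compute $\Delta_{\b{y}} k(\b{x},\b{y}) = 4z\Psi''(z) + 2d\Psi'(z)$ from the radial-Laplacian identity, and then apply $\nabla_{\b{x}}$ to this scalar function of $z$. Because $\nabla_{\b{x}} z = 2(\b{x}-\b{y})$, a single application of the chain rule yields $\nabla_{\b{x}}\Delta_{\b{y}} k = 4[(2+d)\Psi''(z) + 2z\Psi'''(z)](\b{x}-\b{y})$, with the analogous expression for $\nabla_{\b{y}}\Delta_{\b{x}} k$ differing only by a sign. Dotting these against $\b{u}(\b{x})$ and $\b{u}(\b{y})$ respectively and adding gives the middle line of \eqref{eq:stein-kernel-radial}.

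The most involved contribution is $\Delta_{\b{x}}\Delta_{\b{y}} k$. Here I would start from the already-computed $\Delta_{\b{y}} k = 4z\Psi''(z) + 2d\Psi'(z)$ and apply the radial-Laplacian identity a second time in the $\b{x}$ variable, taking $G_1(z) = z\Psi''(z)$ and $G_2(z) = \Psi'(z)$ in turn. The derivatives $G_1'(z) = \Psi''(z)+z\Psi'''(z)$, $G_1''(z) = 2\Psi'''(z) + z\Psi^{(4)}(z)$, and $G_2'(z) = \Psi''(z)$, $G_2''(z) = \Psi'''(z)$ feed into $\Delta_{\b{x}}(G_j \circ z) = 4z G_j''(z) + 2d G_j'(z)$. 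Collecting like terms in powers of $z$ and coefficients of $\Psi^{(2)}, \Psi^{(3)}, \Psi^{(4)}$ produces exactly $16z^2 \Psi^{(4)}(z) + 16(2+d)z\Psi^{(3)}(z) + 4d(2+d)\Psi^{(2)}(z)$, the first line of \eqref{eq:stein-kernel-radial}.

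There is no genuine obstacle in this proof; it is a purely mechanical calculation and the main risk is a sign or coefficient slip, particularly when combining the two $\nabla \Delta$ cross terms where the $-$ sign from $\nabla_{\b{y}} z$ must be tracked carefully, and when expanding $\Delta_{\b{x}}\Delta_{\b{y}} k$ where the products $8d z \Psi'''(z)$ arise in two places and must be added. A clean way to guard against such errors is to check dimensional homogeneity in $z$ and in the order of $\Psi$-derivatives at each stage, since each term in \eqref{eq:stein-kernel-radial} has a predictable form dictated by the chain rule.
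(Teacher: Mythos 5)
Your proposal is correct and follows essentially the same route as the paper: both compute the required derivatives $\nabla_{\b{y}} k$, $\nabla_{\b{x}}\nabla_{\b{y}}^\top k$, $\Delta_{\b{y}} k$, $\nabla_{\b{x}}\Delta_{\b{y}} k$ and $\Delta_{\b{x}}\Delta_{\b{y}} k$ by the chain rule applied to $z = \|\b{x}-\b{y}\|^2$ and then substitute into \eqref{eq:stein-kernel2}. The only cosmetic difference is that you organise the computation around the radial-Laplacian identity $\Delta(G\circ z) = 4zG''(z) + 2dG'(z)$, whereas the paper simply lists the resulting derivative formulae; the intermediate expressions and final coefficients agree.
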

\begin{proof}
The proof is direct and based on have the following applications of the chain rule:
\begin{align*}
\nabla_{\b{x}} k(\b{x},\b{y}) & = 2 \Psi^{(1)}(z) (\b{x}-\b{y}), \\
\nabla_{\b{y}} k(\b{x},\b{y}) & = - 2 \Psi^{(1)}(z) (\b{x}-\b{y}), \\
\Delta_{\b{x}} k(\b{x},\b{y}) & = 4 z \Psi^{(2)}(z) + 2 d \Psi^{(1)}(z), \\
\Delta_{\b{y}} k(\b{x},\b{y}) & = 4 z \Psi^{(2)}(z) + 2 d \Psi^{(1)}(z), \\
\partial_{x_i} \partial_{y_j} k(\b{x}, \b{y}) &= -4\Psi^{(2)}(z) (x_i - y_i) (x_j - y_j) - 2\Psi^{(1)}(z) \delta_{ij}, \\
\nabla_{\b{x}} \Delta_{\b{y}} k(\b{x},\b{y}) & = 8 z \Psi^{(3)}(z) (\b{x}-\b{y}) + 4(2 + d) \Psi^{(2)}(z) (\b{x}-\b{y}), \\
\nabla_{\b{y}} \Delta_{\b{x}} k(\b{x},\b{y}) & = -8 z \Psi^{(3)}(z) (\b{x}-\b{y}) - 4(2 + d) \Psi^{(2)}(z) (\b{x}-\b{y}), \\
\Delta_{\b{x}} \Delta_{\b{y}} k(\b{x},\b{y}) & = 16z^2 \Psi^{(4)}(z) + 16(2+d) z \Psi^{(3)}(z) + 4(2+d) d \Psi^{(2)}(z).
\end{align*}
Upon insertion of these formulae into~\eqref{eq:stein-kernel2}, the desired result is obtained.
\end{proof}

Thus for kernels that are radial, it is sufficient to compute just the derivatives $\Psi^{(j)}$ of the radial part.

\subsection{Rational Quadratic Kernel} \label{subsec: RQK}

The rational quadratic kernel, 
\begin{equation*}
    \Psi(z) = (1+\lambda^{-2} z)^{-1},
\end{equation*}
has derivatives $\Psi^{(j)}(z) = (-1)^j \lambda^{-2j} j! (1+\lambda^{-2} z)^{-j - 1}$ for $j\geq1$.

\subsection{Gaussian Kernel} \label{subsec: GK}

For the Gaussian kernel we have $\Psi(z) = \exp(-z/\lambda^2)$. Consequently,
\begin{align*}
    \Psi^{(j)}(z) &= (-1)^j\lambda^{-2j} \exp(-z/\lambda^2),
\end{align*}
for $j\geq 1$.

\subsection{Mat\'ern Kernels} \label{sec: matern}

For a Mat\'ern kernel of smoothness $\nu > 0$ we have
\begin{equation*}
    \Psi(z) = b c^\nu z^{\nu/2} \, \mathrm{K}_\nu( c \sqrt{z}\,), \quad b = \frac{2^{1-\nu}}{\Gamma(\nu)}, \quad c = \frac{\sqrt{2\nu}}{\lambda},
\end{equation*}
where $\Gamma$ the Gamma function and $\mathrm{K}_\nu$ the modified Bessel function of the second kind of order $\nu$.
By the use of the formula $\partial_z K_\nu(z) = - \mathrm{K}_{\nu-1}(z) -\frac{\nu}{z} \mathrm{K}_\nu(z)$ we obtain
\begin{align*}
    \Psi^{(j)}(z) &= (-1)^j\frac{ b c^{\nu+j} }{2^j} z^{(\nu-j)/2} \, \mathrm{K}_{\nu-j}(c \sqrt{z}\,),
\end{align*}
for $j = 1,\ldots,4$.
In order to guarantee that the kernel is twice continously differentiable, so that $k_0$ in~\eqref{eq:stein-kernel} is well-defined, we require that $\lceil \nu \rceil > 2$. As a Mat\'ern kernel induces a reproducing kernel Hilbert space that is norm-equivalent to the standard Sobolev space of order $\nu + \frac{d}{2}$ \citep[][Example 5.7]{fasshauer2011reproducing}, the condition $\lceil \nu \rceil > 2$ implies, by the Sobolev imbedding theorem \citep[][Theorem 4.12]{adams2003sobolev}, that the functions in $\mathcal{H}(k)$ are twice continuously differentiable. Notice that $\Psi^{(3)}(z)$ and $\Psi^{(4)}(z)$ may not be defined at $z=0$, in which case the terms $16z^2 \Psi^{(4)}(z)$, $16(2+d)z\Psi^{(3)}(z)$ and $8z\Psi^{(3)}(z)$ in~\eqref{eq:stein-kernel-radial} must be interpreted as limits as $z \to 0$ from the right.

\section{Properties of $\mathcal{H}(k_0)$} \label{app: basic results on H}

The purpose of this appendix is to establish basic properties of the reproducing kernel Hilbert space $\mathcal{H}(k_0)$ of the kernel $k_0$ in~\eqref{eq:stein-kernel}.
For convenience, in this appendix we abbreviate $\mathcal{H}(k_0)$ to $\mathcal{H}$.
In Lemma~\ref{lem: linear transform rkhs} we clarify the reproducing kernel Hilbert space structure of~$\mathcal{H}$.
Then in Lemma~\ref{lem: sq integ} we establish square integrability of the elements of $\mathcal{H}$ and in Lemma~\ref{lem: contained in Sobolev} we establish the local smoothness of the elements of $\mathcal{H}$. 

To state these results we require several items of notation:
The notation $C^s(\mathbb{R}^d)$ denotes the set of $s$-times continuously differentiable functions on $\mathbb{R}^d$; i.e. $\partial^{\b{\alpha}} f \in C^0(\mathbb{R}^d)$ for all $|\b{\alpha}| \leq s$ where $C^0(\mathbb{R}^d)$ denotes the set of continuous functions on $\mathbb{R}^d$.
For two normed spaces $V$ and $W$, let $V \hookrightarrow W$ denote that $V$ is continuously embedded in $W$, meaning that $\|v\|_W \leq C \|v\|_V$ for all $v \in V$ and some constant $C \geq 0$.
In particular, we write $V \simeq W$ if and only if $V$ and $W$ are equal as sets and both $V \hookrightarrow W$ and $W \hookrightarrow V$.
Let $\mathcal{L}^2(p)$ denote the vector space of square integrable functions with respect to $p$ and equip this with the norm $\|h\|_{\mathcal{L}^2(p)} = ( \int h(\b{x})^2 p(\b{x}) \mathrm{d}\b{x} )^{1/2}$.
For $h : \mathbb{R}^d \rightarrow \mathbb{R}$ and $D \subset \mathbb{R}^d$ we let $h|_D : D \rightarrow \mathbb{R}$ denote the restriction of $h$ to $D$.

First we clarify the reproducing kernel Hilbert space structure of $\mathcal{H}$:

\begin{lemma}[Reproducing kernel Hilbert space structure of $\mathcal{H}$] \label{lem: linear transform rkhs}
Let $k : \mathbb{R}^d \times \mathbb{R}^d \rightarrow \mathbb{R}$ be a positive-definite kernel such that the regularity assumptions of Lemma \ref{lem:boundary-kernel}  are satisfied.
Let $\mathcal{H}$ denote the normed space of real-valued functions on $\mathbb{R}^d$ with norm
$$
\|h\|_{\mathcal{H}} = \inf_{\substack{h = \mathcal{L} g \\ g \in \mathcal{H}(k)}}  \|g\|_{\mathcal{H}(k)} .
$$
Then $\mathcal{H}$ admits the structure of a reproducing kernel Hilbert space with kernel $\kappa : \mathbb{R}^d \times \mathbb{R}^d \rightarrow \mathbb{R}$ given by $\kappa(\b{x},\b{y}) = k_0(\b{x},\b{y})$.
That is, $\mathcal{H} = \mathcal{H}(k_0)$.
Moreover, for $D \neq \emptyset$, let $\mathcal{H} |_D$ denote the normed space of real-valued functions on $D$ with norm
$$
\|h'\|_{\mathcal{H} |_D} = \inf_{\substack{h|_D = h' \\ h \in \mathcal{H}}} \|h\|_{\mathcal{H}} .
$$
Then $\mathcal{H} |_D$ is a reproducing kernel Hilbert space with kernel $\kappa |_D : D \times D \rightarrow \mathbb{R}$ given by $\kappa |_D(\b{x},\b{y}) = k_0(\b{x},\b{y})$.
That is, $\mathcal{H}|_D = \mathcal{H}(\kappa|_D)$.
\end{lemma}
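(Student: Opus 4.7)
The plan is to apply the classical operator-transform theorem for reproducing kernel Hilbert spaces (sometimes attributed to Saitoh, essentially present in Aronszajn's 1950 paper): if $\mathcal{H}_1$ is an RKHS on a set $\mathcal{X}$ and $T : \mathcal{H}_1 \to \mathbb{R}^{\mathcal{Y}}$ is a linear map such that each point evaluation $g \mapsto (Tg)(\b{y})$ is continuous on $\mathcal{H}_1$ with Riesz representer $r_{\b{y}} \in \mathcal{H}_1$, then the image $T(\mathcal{H}_1)$ equipped with the quotient norm $\|Tg\| = \inf\{\|g'\|_{\mathcal{H}_1} : Tg' = Tg\}$ is a reproducing kernel Hilbert space on $\mathcal{Y}$ with reproducing kernel $(\b{x},\b{y}) \mapsto \langle r_{\b{x}}, r_{\b{y}} \rangle_{\mathcal{H}_1}$. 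Taking $\mathcal{H}_1 = \mathcal{H}(k)$, $\mathcal{Y} = \mathbb{R}^d$, and $T = \mathcal{L}$ will deliver the first claim, and a classical restriction theorem for RKHSs will deliver the second.

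First I would verify continuity of evaluation and identify the representer. Twice-continuous differentiability of $k$ together with Lemma 4.34 of \cite{Steinwart2008} imply that every $g \in \mathcal{H}(k)$ is twice continuously differentiable and that, for each multi-index $\b{\alpha}$ with $|\b{\alpha}| \leq 2$,
\[
\partial_{\b{y}}^{\b{\alpha}} g(\b{y}) = \langle g, \, \partial_{\b{y}}^{\b{\alpha}} k(\cdot, \b{y}) \rangle_{\mathcal{H}(k)},
\]
with representer $\partial_{\b{y}}^{\b{\alpha}} k(\cdot, \b{y}) \in \mathcal{H}(k)$. Combining the Laplacian and drift parts of the Stein operator then yields
\[
(\mathcal{L} g)(\b{y}) \,=\, \langle g, \, \Psi_{\b{y}} \rangle_{\mathcal{H}(k)}, \qquad \Psi_{\b{y}}(\b{x}) \,:=\, (\mathcal{L}_{\b{y}} k)(\b{x}, \b{y}),
\]
where $\mathcal{L}_{\b{y}}$ acts on the second argument of $k$. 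Hence $g \mapsto (\mathcal{L} g)(\b{y})$ is continuous on $\mathcal{H}(k)$ with representer $\Psi_{\b{y}} \in \mathcal{H}(k)$, and the operator-transform theorem guarantees that $\mathcal{H} = \mathcal{L}(\mathcal{H}(k))$ with the stated quotient norm is a Hilbert space of functions on $\mathbb{R}^d$ with continuous point evaluation.

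The next step is to identify the kernel. By the transform theorem it equals $\tilde{\kappa}(\b{x}, \b{y}) = \langle \Psi_{\b{x}}, \Psi_{\b{y}} \rangle_{\mathcal{H}(k)}$, and applying the reproducing identity for $\mathcal{L}$ displayed above to $g = \Psi_{\b{x}} \in \mathcal{H}(k)$ yields
\[
\tilde{\kappa}(\b{x}, \b{y}) \,=\, (\mathcal{L} \Psi_{\b{x}})(\b{y}) \,=\, (\mathcal{L}_{\b{y}} \mathcal{L}_{\b{x}} k)(\b{y}, \b{x}) \,=\, k_0(\b{x}, \b{y}),
\]
where the last equality uses the symmetry $k(\b{x},\b{y}) = k(\b{y},\b{x})$ and the fact that $\mathcal{L}_{\b{x}}$ and $\mathcal{L}_{\b{y}}$ commute because they act on different arguments. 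Uniqueness of the RKHS associated with a positive-definite kernel then identifies $\mathcal{H} = \mathcal{H}(k_0)$ as normed spaces, which is the first assertion.

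For the restriction part, I would invoke the standard restriction theorem for reproducing kernel Hilbert spaces: whenever $\kappa$ is the reproducing kernel of an RKHS on a set $\mathcal{X}$ and $D \subset \mathcal{X}$ is non-empty, the space $\{h|_D : h \in \mathcal{H}(\kappa)\}$ equipped with the quotient norm $\|h'\| = \inf\{\|h\|_{\mathcal{H}(\kappa)} : h|_D = h'\}$ is itself an RKHS on $D$ with reproducing kernel $\kappa|_{D \times D}$; this follows by exactly the same Saitoh-type argument applied to the (bounded) restriction operator. Specialising to $\kappa = k_0$ and using the first part yields $\mathcal{H}|_D = \mathcal{H}(k_0|_D)$. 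The main obstacle in the whole argument is the first step: the representer identification requires interchanging differentiation with the $\mathcal{H}(k)$ inner product, and this is precisely what the smoothness hypotheses on $k$ inherited from Lemma \ref{lem:boundary-kernel}, together with Lemma 4.34 of \cite{Steinwart2008}, are designed to guarantee; once the representers are in hand, the remainder is routine bookkeeping with the operator-transform and restriction theorems.
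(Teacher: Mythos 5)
Your proposal is correct and takes essentially the same route as the paper: the paper's proof simply cites the operator-transform theorem (Theorem 5, Section 4.1) and the restriction theorem (Theorem 6, Section 4.2) of \cite{Berlinet2011}, which are exactly the two classical results you invoke. Your additional verification that $g \mapsto (\mathcal{L}g)(\b{y})$ has representer $\Psi_{\b{y}} = \mathcal{L}_{\b{y}}k(\cdot,\b{y}) \in \mathcal{H}(k)$ via Lemma 4.34 of \cite{Steinwart2008}, and the resulting identification $\langle \Psi_{\b{x}}, \Psi_{\b{y}}\rangle_{\mathcal{H}(k)} = k_0(\b{x},\b{y})$, correctly fills in the hypotheses that the paper leaves implicit.
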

\begin{proof}
The first statement is an immediate consequence of Theorem 5 in Section 4.1 of \cite{Berlinet2011}.
The second statement is an immediate consequence of Theorem 6 in Section 4.2 of \cite{Berlinet2011}.
\end{proof}

Next we establish when the elements of $\mathcal{H}$ are square-integrable functions with respect to $p$.

\begin{lemma}[Square integrability of $\mathcal{H}$] \label{lem: sq integ}
Let $\kappa$ be a radial kernel satisfying the pre-conditions of Lemma~\ref{lem: k0 for radial}.
If $u_i = \nabla_{x_i} \log p(\bm{x}) \in \mathcal{L}^2(p)$ for each $i = 1,\dots,d$, then $\mathcal{H} \hookrightarrow \mathcal{L}^2(p)$.
\end{lemma}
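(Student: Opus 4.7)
The plan is to combine the reproducing property of $\mathcal{H}(k_0)$ with the explicit formula for $k_0$ on the diagonal provided by Lemma~\ref{lem: k0 for radial}. The key observation is that for any reproducing kernel Hilbert space, the Cauchy--Schwarz inequality applied to the reproducing identity yields a pointwise bound $|h(\bm{x})|^2 \leq \|h\|_{\mathcal{H}}^2 \, k_0(\bm{x},\bm{x})$. Integrating this against $p$ gives
\begin{equation*}
\|h\|_{\mathcal{L}^2(p)}^2 \leq \|h\|_{\mathcal{H}}^2 \int k_0(\bm{x},\bm{x}) p(\bm{x}) \, \mathrm{d}\bm{x},
\end{equation*}
so everything reduces to showing the integral on the right is finite.

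To evaluate $k_0(\bm{x},\bm{x})$, I would apply the radial formula \eqref{eq:stein-kernel-radial} with $\bm{y} = \bm{x}$, so $z = 0$ and all factors of $(\bm{x}-\bm{y})$ and $z$ drop out. What remains is
\begin{equation*}
k_0(\bm{x},\bm{x}) = 4(2+d)d\, \Psi^{(2)}(0) - 2 \Psi^{(1)}(0) \, \|\bm{u}(\bm{x})\|^2,
\end{equation*}
which is a constant plus a constant multiple of $\|\nabla_{\bm{x}} \log p(\bm{x})\|^2$. Integrating against $p$ and using $\|\bm{u}(\bm{x})\|^2 = \sum_{i=1}^d u_i(\bm{x})^2$,
\begin{equation*}
\int k_0(\bm{x},\bm{x}) p(\bm{x}) \, \mathrm{d}\bm{x} = 4(2+d)d\, \Psi^{(2)}(0) - 2\Psi^{(1)}(0) \sum_{i=1}^d \|u_i\|_{\mathcal{L}^2(p)}^2,
\end{equation*}
which is finite by the hypothesis $u_i \in \mathcal{L}^2(p)$ for each $i$. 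Denoting this finite constant by $C$, the bound $\|h\|_{\mathcal{L}^2(p)} \leq C^{1/2} \|h\|_{\mathcal{H}}$ holds for every $h \in \mathcal{H}$, which is exactly the statement $\mathcal{H} \hookrightarrow \mathcal{L}^2(p)$.

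The only mild subtlety is to note that positive-definiteness of $k_0$ forces $k_0(\bm{x},\bm{x}) \geq 0$, which guarantees that the signs in the closed-form expression combine into a nonnegative integrand (equivalently, one can bound the integrand by the sum of absolute values of the two terms and conclude directly). Measurability of $h$ is not an issue because $\bm{x} \mapsto k_0(\bm{x},\bm{y})$ is continuous for each $\bm{y}$, so elements of $\mathcal{H}(k_0)$ are continuous and hence measurable. I expect no serious obstacle: the argument is essentially a one-line Cauchy--Schwarz estimate, with the radial-kernel diagonal formula doing all the work of translating the tail condition on $\nabla \log p$ into finiteness of the trace integral.
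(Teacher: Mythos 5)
Your proof is correct and follows essentially the same route as the paper's: Cauchy--Schwarz applied to the reproducing identity, followed by the diagonal evaluation $k_0(\b{x},\b{x}) = 4(2+d)d\,\Psi^{(2)}(0) - 2\Psi^{(1)}(0)\|\b{u}(\b{x})\|^2$ from Lemma~\ref{lem: k0 for radial} and the hypothesis $u_i \in \mathcal{L}^2(p)$ to control the trace integral. The remarks on non-negativity of the diagonal and continuity of elements of $\mathcal{H}(k_0)$ are fine and consistent with the paper's treatment.
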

\begin{proof}
From the representer theorem and Cauchy--Schwarz we have
\begin{eqnarray}
\int h(\b{x})^2 p(\b{x}) \mathrm{d}\b{x} \; = \; \int \langle h , \kappa(\cdot,\b{x}) \rangle_{\mathcal{H}}^2 \, p(\b{x}) \mathrm{d}\b{x} \; \leq \; \|h\|_{\mathcal{H}}^2 \int \kappa(\b{x},\b{x}) p(\b{x}) \mathrm{d}\b{x} . \label{eqn: L2 bound}
\end{eqnarray}
Now, in the special case $k(\b{x},\b{y}) = \Psi(z)$, $z = \|\b{x}-\b{y}\|^2$, the conclusion of Lemma \ref{lem: k0 for radial} gives that $\kappa(\b{x},\b{x}) = 4 (2+d) d \Psi^{(2)}(0) - 2 \Psi^{(1)}(0) \|\b{u}(\b{x})\|^2$, from which it follows that 
\begin{eqnarray}
0 \leq \int \kappa(\b{x},\b{x}) p(\b{x}) \mathrm{d}\b{x} \; = \; 4 (2+d) d \Psi^{(2)}(0) - 2 \Psi^{(1)}(0) \int \|\b{u}(\b{x})\|^2 p(\b{x}) \mathrm{d}\b{x} \; = \; C^2 . \label{eq: kappa diag bound}
\end{eqnarray}
The combination of \eqref{eqn: L2 bound} and \eqref{eq: kappa diag bound} establishes that $\|h\|_{\mathcal{L}^2(p)} \leq C \|h\|_{\mathcal{H}}$, which is the claimed result.
\end{proof}

Finally we turn to the regularity of the elements of $\mathcal{H}$, as quantified by their smoothness over suitable bounded sets $D \subset \mathbb{R}^d$.
In what follows we will let $\mathcal{H}(k)$ be a reproducing kernel Hilbert space of functions in $\mathcal{L}^2(\mathbb{R}^d)$, the space of square Lebesgue integrable functions on $\mathbb{R}^d$, such that the norms
$$
\|h\|_{\mathcal{H}(k)} \simeq \|h\|_{W_2^r(\mathbb{R}^d)} = \Big( \textstyle \sum_{|\b{\alpha}| \leq r} \|\partial^{\b{\alpha}} h \|_{\mathcal{L}^2(\mathbb{R}^d)}^2 \Big)^{\frac{1}{2}}
$$
are equivalent.
The latter is recognized as the standard Sobolev norm; this space is denoted $W_2^r(\mathbb{R}^d)$.
For example, the Mat\'{e}rn kernel in Section~\ref{sec: matern} corresponds to $\mathcal{H}(k)$ with $r = \nu + \frac{d}{2}$.
The Sobolev embedding theorem implies that $W_2^r(\mathbb{R}^d) \subset C^0(\mathbb{R}^d)$ whenever $r > \frac{d}{2}$.

The following result establishes the smoothness of $\mathcal{H}$ in terms of the differentiability of its elements.
If the smoothness of $f$ is known then $k$ should be selected so that the smoothness of $\mathcal{H}$ matches it.

\begin{lemma}[Smoothness of $\mathcal{H}$] \label{lem: contained in Sobolev}
Let $r,s \in \mathbb{N}$ be such that $r > s + 2 + \frac{d}{2}$. If \sloppy{${\mathcal{H}(k) \simeq W_2^r(\mathbb{R}^d)}$} and $\log p \in C^{s+1}(\mathbb{R}^d)$, then, for any open and bounded set $D \subset \mathbb{R}^d$, we have $\mathcal{H}|_D \hookrightarrow W_2^s(D)$.
\end{lemma}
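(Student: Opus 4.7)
The plan is to represent any $h \in \mathcal{H}=\mathcal{H}(k_0)$ as $\mathcal{L}g$ for some $g \in \mathcal{H}(k)$ with $\|g\|_{\mathcal{H}(k)}$ approximating $\|h\|_{\mathcal{H}}$, invoking Lemma \ref{lem: linear transform rkhs}, and then to transfer smoothness from $g$ to $\mathcal{L}g$ using the Sobolev embedding theorem together with a Leibniz/multiplier bound that exploits the regularity of $\log p$ on the compact set $\overline{D}$.

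First I would fix $h' \in \mathcal{H}|_D$ and, given $\varepsilon > 0$, chain the two infimum-based norm definitions to produce an extension $h \in \mathcal{H}$ with $h|_D = h'$ and $\|h\|_{\mathcal{H}} \leq \|h'\|_{\mathcal{H}|_D}+\varepsilon$, and then a preimage $g \in \mathcal{H}(k)$ under $\mathcal{L}$ with $\mathcal{L}g = h$ and $\|g\|_{\mathcal{H}(k)} \leq \|h\|_{\mathcal{H}} + \varepsilon$. The assumed norm equivalence $\mathcal{H}(k) \simeq W_2^r(\mathbb{R}^d)$ then yields $\|g\|_{W_2^r(\mathbb{R}^d)} \leq C_1 \|g\|_{\mathcal{H}(k)}$. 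Since $r > (s+2) + d/2$, the Sobolev embedding theorem \citep[][Theorem~4.12]{adams2003sobolev} gives $\|g\|_{BC^{s+2}(\mathbb{R}^d)} \leq C_2 \|g\|_{W_2^r(\mathbb{R}^d)}$, where $BC^{s+2}(\mathbb{R}^d)$ denotes the space of functions with bounded continuous derivatives of all orders up to $s+2$.

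The next step is to bound $\|\mathcal{L}g\|_{W_2^s(D)}$ in terms of $\|g\|_{BC^{s+2}(\mathbb{R}^d)}$. Because $\log p \in C^{s+1}(\mathbb{R}^d)$ and $\overline{D}$ is compact, all partial derivatives of $\nabla \log p$ up to order $s$ are uniformly bounded on $\overline{D}$; call this bound $M_D$. For each multi-index $\b{\alpha}$ with $|\b{\alpha}| \leq s$, Leibniz's rule applied to $\partial^{\b{\alpha}}(\Delta g + \nabla g \cdot \nabla \log p)$ produces a finite sum whose summands are dominated pointwise on $\overline{D}$ by a constant multiple of $M_D \|g\|_{BC^{s+2}(\mathbb{R}^d)}$. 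Since $D$ has finite Lebesgue measure, passing to $L^2(D)$ norms yields $\|\mathcal{L}g\|_{W_2^s(D)} \leq C_3(D,\log p)\,\|g\|_{BC^{s+2}(\mathbb{R}^d)}$. Chaining these estimates and letting $\varepsilon \downarrow 0$ gives $\|h'\|_{W_2^s(D)} \leq C\|h'\|_{\mathcal{H}|_D}$, which is exactly the claimed continuous embedding.

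The main obstacle I expect is purely one of bookkeeping: the linear-transform norm defining $\mathcal{H}$ and the restriction norm on $\mathcal{H}|_D$ are both given by infima that need not be attained, so one must carry $\varepsilon$-slack through the whole argument and verify at the end that it can be driven to zero. A secondary point to check carefully is that the Leibniz expansion of $\partial^{\b{\alpha}}(\nabla g \cdot \nabla \log p)$ calls for derivatives of $g$ of order at most $s+1$ (covered by $BC^{s+2}$) and derivatives of $\log p$ of order at most $s+1$ (covered by $C^{s+1}$), so that the hypotheses of the lemma match precisely what the argument consumes; beyond this, the remaining ingredients (Sobolev embedding, boundedness of $\overline{D}$) are entirely routine.
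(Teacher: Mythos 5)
Your proposal is correct, but it proves the embedding by a genuinely different route from the paper. You work on the ``preimage side'': you pull $h$ back to some $g \in \mathcal{H}(k)$ with $\mathcal{L}g = h$ and near-optimal norm (licensed by Lemma~\ref{lem: linear transform rkhs}), push $g$ through the norm equivalence $\mathcal{H}(k) \simeq W_2^r(\mathbb{R}^d)$ and the Sobolev embedding $W_2^r(\mathbb{R}^d) \hookrightarrow C_B^{s+2}(\mathbb{R}^d)$ (valid precisely because $r > s+2+\tfrac{d}{2}$), and then control $\|\mathcal{L}g\|_{W_2^s(D)}$ by Leibniz, using that derivatives of $\log p$ up to order $s+1$ are bounded on the compact set $\overline{D}$. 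The paper instead works entirely on the $k_0$ side: it observes that the restricted kernel $\kappa|_D$ is $s$-times continuously differentiable, invokes the derivative-reproducing property (Lemma~4.34 and Corollary~4.36 of \cite{Steinwart2008}) to bound $|\partial^{\b{\alpha}} f(\b{x})|$ by $\|f\|_{\mathcal{H}|_D}$ times $\big(\partial_{\b{x}}^{\b{\alpha}}\partial_{\b{y}}^{\b{\alpha}}\kappa|_D(\b{x},\b{y})|_{\b{y}=\b{x}}\big)^{1/2}$, and then checks via the explicit formula for $\kappa(\b{x},\b{x})$ that the diagonal is $C^s$ and hence has finite $W_2^s(D)$ norm. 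Your argument is more elementary (Leibniz plus Sobolev embedding, no derivative-reproducing machinery) and makes it completely transparent where each hypothesis is consumed; it does, however, lean on the surjectivity of $\mathcal{L}\colon \mathcal{H}(k) \to \mathcal{H}(k_0)$ implicit in Lemma~\ref{lem: linear transform rkhs}, whereas the paper's argument never leaves $\mathcal{H}(k_0)$ and only needs properties of the kernel $k_0$ itself. The $\varepsilon$-slack you carry is harmless but actually unnecessary, since both infima in the norm definitions are attained (the preimage set is a closed affine subspace, so a minimal-norm element exists); your accounting of which derivative orders of $g$ and of $\log p$ the Leibniz expansion requires is exactly right.
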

\begin{proof}
Under our assumptions, the kernel $\kappa|_D : D \times D \rightarrow \mathbb{R}$ from Lemma~\ref{lem: linear transform rkhs} is $s$-times continuously differentiable in the sense of Definition~4.35 of \cite{Steinwart2008}.
It follows from~Lemma~4.34 of \cite{Steinwart2008} that $\partial_{\b{x}}^{\b{\alpha}} \kappa|_D(\cdot,\b{x}) \in \mathcal{H}|_D$ for all $\b{x} \in D$ and $|\b{\alpha}| \leq s$.
From the reproducing property in $\mathcal{H}|_D$ and the Cauchy--Schwarz inequality we have, for $|\b{\alpha}| \leq s$,
\begin{align*}
|\partial^{\b{\alpha}} f(\b{x})| = \big| \langle f , \partial^{\b{\alpha}} \kappa|_D(\cdot, \b{x}) \rangle_{\mathcal{H}|_D} \big| \leq \|f\|_{\mathcal{H}|_D} \big\| \partial^{\b{\alpha}} \kappa|_D(\cdot, \b{x}) \big\|_{\mathcal{H}|_D} = \|f\|_{\mathcal{H}|_D} \left( \partial_{\b{x}}^{\b{\alpha}} \partial_{\b{y}}^{\b{\alpha}} \kappa|_D(\b{x},\b{y})|_{\b{y} = \b{x}} \right)^{1/2} .
\end{align*}
See also Corollary 4.36 of \cite{Steinwart2008}.
Thus it follows from the definition of $W_2^s(D)$ and the reproducing property that
\begin{align*}
\|f\|_{W_2^s(D)}^2 = \sum_{|\b{\alpha}| \leq s} \| \partial^{\b{\alpha}} f \|_{L_2(D)}^2 & \leq \| f \|_{\mathcal{H}|_D}^2 \sum_{|\b{\alpha}| \leq s} \big\| \b{x} \mapsto \partial_{\b{x}}^{\b{\alpha}} \partial_{\b{y}}^{\b{\alpha}} \kappa|_D(\b{x},\b{y})|_{\b{y} = \b{x}} \big\|_{L^2(D)}^2 \\
& = \| f \|_{\mathcal{H}|_D}^2 \big\| \b{x} \mapsto \kappa|_D(\b{x},\b{x}) \big\|_{W_2^s(D)}^2 .
\end{align*}
Now, from the definition of $\kappa$ and using the fact that $k$ is symmetric, we have
\begin{align*}
\kappa(\b{x},\b{x}) = \Delta_{\b{x}} \Delta_{\b{y}} k(\b{x},\b{y})|_{\b{y} = \b{x}} + 2 \b{u}(\b{x})^\top \nabla_{\b{x}} \Delta_{\b{y}} k(\b{x},\b{y})|_{\b{y} = \b{x}} + \b{u}(\b{x})^\top \big[ \nabla_{\b{x}} \nabla_{\b{y}}^\top k(\b{x},\b{y})|_{\b{y} = \b{x}} \big] \b{u}(\b{x}).
\end{align*}
Our assumption that $\mathcal{H}(k) \simeq W_2^r(\mathbb{R}^d)$ with $r > s + 2 + \frac{d}{2}$ implies that each of the functions $\b{x} \mapsto \Delta_{\b{x}} \Delta_{\b{y}} k(\b{x},\b{y})|_{\b{y} = \b{x}}$, $\nabla_{\b{x}} \Delta_{\b{y}} k(\b{x},\b{y})|_{\b{y} = \b{x}}$ and $\nabla_{\b{x}} \nabla_{\b{y}}^\top k(\b{x},\b{y})|_{\b{y} = \b{x}}$, are $C^s(\mathbb{R}^d)$.
In addition, our assumption that $\log p \in C^{s+1}(\mathbb{R}^d)$ implies that $\b{x} \mapsto \b{u}(\b{x}) \in C^s(\mathbb{R}^d)$.
Thus $\b{x} \mapsto \kappa(\b{x},\b{x})$ is $C^s(\mathbb{R}^d)$ and in particular the boundedness of $D$ implies that $\|\b{x} \mapsto \kappa|_D(\b{x},\b{x})\|_{W_2^s(D)} < \infty$ as required.
\end{proof}

\section{Proof of Lemma \ref{lem:boundary-kernel}}\label{app:ProofZeroKernel}

\begin{proof}

In what follows $C$ is a generic positive constant, independent of $\b{x}$ but possibly dependant on $\b{y}$, whose value can differ each time it is instantiated.
The aim of this proof is to apply Lemma \ref{lemma:assum} to the function $g(\b{x})=\mathcal{L}_{\b{y}} k(\b{x}, \b{y})$. 
Our task is to verify the pre-condition $\| \nabla_{\b{x}} g(\b{x}) \| \leq C \| \b{x} \|^{-\delta} p(\b{x})^{-1}$ for some \sloppy{${\delta > d-1}$}.
It will then follow from the conclusion of Lemma \ref{lemma:assum} that $\int k_0(\b{x}, \b{y}) p(\b{x}) \dif \b{x} = 0$ as required.
To this end, expanding the term $\| \nabla_{\b{x}} g(\b{x}) \|^2$, we have 
\begin{align}
    \| \nabla_{\b{x}} g(\b{x}) \|^2 ={}& \| \nabla_{\b{x}} \mathcal{L}_{\b{y}} k(\b{x}, \b{y}) \|^2  \nonumber  \\
     ={}& \big\| \nabla_{\b{x}} \Delta_{\b{y}} k(\b{x}, \b{y}) + \nabla_{\b{x}}[\nabla_{\b{y}}\log p(\b{y})\cdot\nabla_{\b{y}}k(\b{x},\b{y})] \big\|^2 \nonumber \\
     ={}& \|  \nabla_{\b{x}} \Delta_{\b{y}} k(\b{x}, \b{y}) \|^2 + 2\nabla_{\b{x}} \big[ \nabla_{\b{y}}\log p(\b{y})\cdot\nabla_{\b{y}}k(\b{x},\b{y}) \big]^\top \nabla_{\b{x}} \Delta_{\b{y}} k(\b{x}, \b{y}) \nonumber \\
     &+ \big\|   \nabla_{\b{x}} \big[ \nabla_{\b{y}}\log p(\b{y})\cdot\nabla_{\b{y}}k(\b{x},\b{y}) \big] \big\|^2 \nonumber \\
     ={}& \|  \nabla_{\b{x}} \Delta_{\b{y}} k(\b{x}, \b{y}) \|^2 + 2 \big\{ [\nabla_{\b{x}}\nabla_{\b{y}}^\top k(\b{x},\b{y})]^\top\nabla_{\b{y}}\log p(\b{y}) \big\}^\top \nabla_{\b{x}} \Delta_{\b{y}} k(\b{x}, \b{y})  \nonumber  \\
     &+ \big\|   [\nabla_{\b{x}}\nabla_{\b{y}}^\top k(\b{x},\b{y})]\nabla_{\b{y}}\log p(\b{y}) \big\|^2 \nonumber \\
     \begin{split}
     \leq{}& \|  \nabla_{\b{x}} \Delta_{\b{y}} k(\b{x}, \b{y}) \|^2 + 2 \big \|[\nabla_{\b{x}}\nabla_{\b{y}}^\top k(\b{x},\b{y})]^\top\nabla_{\b{y}}\log p(\b{y}) \big\| \| \nabla_{\b{x}} \Delta_{\b{y}} k(\b{x}, \b{y})\|  \\
     &+ \big\|   [\nabla_{\b{x}}\nabla_{\b{y}}^\top k(\b{x},\b{y})]\nabla_{\b{y}}\log p(\b{y}) \big\|^2 \label{eq: various bounds applied} 
     \end{split}
     \\
     \begin{split}
     \leq{}& \|  \nabla_{\b{x}} \Delta_{\b{y}} k(\b{x}, \b{y}) \|^2 + 2\|\nabla_{\b{x}}\nabla_{\b{y}}^\top k(\b{x},\b{y})\|_{\text{OP}}\|\nabla_{\b{y}}\log p(\b{y})\| \| \nabla_{\b{x}} \Delta_{\b{y}} k(\b{x}, \b{y})\| \\
     &+ \|   \nabla_{\b{x}}\nabla_{\b{y}}^\top k(\b{x},\b{y})\|_{\text{OP}}^2 \|\nabla_{\b{y}}\log p(\b{y})\|^2  \label{eq: various bounds applied 2} 
     \end{split}
     \\
     \begin{split}
     \leq{}& \big[ C \| \b{x} \|^{-\delta}p(\b{x})^{-1} \big]^2 + 2 \big[ C \| \b{x} \|^{-\delta}p(\b{x})^{-1} \big] \| \nabla_{\b{y}} \log p(\b{y}) \| \big[ C \| \b{x} \|^{-\delta}p(\b{x})^{-1} \big]  \\
     &+ \big[ C \| \b{x} \|^{-\delta}p(\b{x})^{-1} \big]^2 \|\nabla_{\b{y}} \log p(\b{y}) \|^2 \label{eqn: use bounds} 
     \end{split}
     \\
     \leq{}& C \| \b{x} \|^{-2\delta} p(\b{x})^{-2}  \nonumber 
\end{align}
as required.
Here \eqref{eq: various bounds applied} follows from the Cauchy--Schwarz inequality applied to the second term, \eqref{eq: various bounds applied 2} follows from the definition of the operator norm $\|\cdot\|_{\text{OP}}$ and \eqref{eqn: use bounds} employs the pre-conditions that we have assumed.
\end{proof}

\section{Proof of Lemma \ref{lem: polyexact}} \label{app: proof of polyexact}

\begin{proof}

Our first task is to establish that it is sufficient to prove the result in just the particular case $\hat{\b{x}}_N = \b{0}$ and $N^{-1} I(\hat{\b{x}}_N)^{-1} = \b{I}$, where $\b{I}$ is the $d$-dimensional identity matrix.
Indeed, if $\hat{\b{x}}_N\neq \b{0}$ or $N^{-1} I(\hat{\b{x}}_N)^{-1} \neq \b{I}$, then let $\b{t}(\b{x}) = \b{W}(\b{x}-\hat{\b{x}}_N)$ where $\b{W}$ is a non-singular matrix satisfying $\b{W}^\top \b{W} = N I(\hat{\b{x}}_N)$ so that \sloppy{${\b{t}(\b{x}) \sim \mathcal{N}(\b{0},\b{I})}$}. 
Under the same co-ordinate transformation the polynomial subspace 
$$
A=\mathcal{P}_0^r=\mathrm{span} \{ \b{x}^{\b{\alpha}} : \, \b{\alpha} \in \mathbb{N}_0^d, \, 0 \leq \abs[0]{\b{\alpha}} \leq r \}
$$
becomes $B=\mathrm{span} \{ \b{t}(\b{x})^{\b{\alpha}} : \, \b{\alpha} \in \mathbb{N}_0^d, \, 0 \leq \abs[0]{\b{\alpha}} \leq r \}$.
Exact integration of functions in $A$ with respect to $\mathcal{N}(\hat{\b{x}}_N , \b{I})$ corresponds to exact integration of functions in $B$ with respect to $\mathcal{N}(\b{0},\b{I})$.
Thus our first task is to establish that $B = A$.
Clearly $B$ is a linear subspace of $A$, since elements of $B$ can be expanded out into monomials and monomials generate $A$, so it remains to argue that $B$ is all of $A$. 
In what follows we will show that $\text{dim}(B) = \text{dim}(A)$ and this will complete the first part of the argument.

The co-ordinate transform $\b{t}$ is an invertible affine map on $\mathbb{R}^d$. 
The action of such a map $\b{t}$ on a set $S$ of functions on $\mathbb{R}^d$ can be defined as $\b{t}(S) = \{ \b{x} \rightarrow s(\b{t}(\b{x})) : s \in S \}$. Thus $B = \b{t}(A)$.
Let \sloppy{${\b{t}^*(\b{x}) = \b{W}^{-1}\b{x} + \hat{\b{x}}_n}$} and notice that this is also an invertible affine map on $\mathbb{R}^d$ with $\b{t}^*(\b{t}(\b{x})) = \b{x}$ being the identity map on $\mathbb{R}^d$. 
The composition of invertible affine maps on $\mathbb{R}^d$ is again an invertible affine map and thus $\b{t}^*\b{t}$ is also an invertible affine map on $\mathbb{R}^d$ and its action on a set is well-defined.
Considering the action of $\b{t}^*\b{t}$ on the set $A$ gives that $\b{t}^*(\b{t}(A)) = A$ and therefore $\b{t}(A)$ must have the same dimension as $A$. 
Thus $\text{dim}(A) = \text{dim}(t(A)) = \text{dim}(B)$ as claimed.

Our second task is to show that, in the case where $p$ is the density of $\mathcal{N}(\b{0},\b{I})$ and thus \sloppy{${\nabla_{\b{x}} \log p(\b{x}) = - \b{x}}$}, the set $\mathcal{F} = \text{span}\{1\} \oplus \mathcal{L}\mathcal{P}^r$ on which $I_{\text{CV}}$ is exact is equal to $\mathcal{P}_0^r$. 
Our proof proceeds by induction on the maximal degree $r$ of the polynomial.
For the base case we take $r = 1$:
\begin{align*}
    \text{span}\{1\} \oplus \mathcal{L}\mathcal{P}^1 &= \text{span}\{1\} \oplus \text{span} \big\{ \mathcal{L} x_j : j = 1,\ldots,d \big\} \\
    &= \text{span}\{1\} \oplus \text{span} \big\{ \Delta_{\b{x}} x_j + \nabla_{\b{x}} \log p(\b{x}) \cdot \nabla_{\b{x}}(x_j) : j = 1,\ldots,d \big\} \\
    &= \text{span}\{1\} \oplus \text{span} \big\{ 0-\b{x} \cdot \b{e}_j : j = 1,\ldots,d \big\} \\ 
    &= \text{span}\{1\} \oplus \text{span} \big\{ -x_j : j = 1,\ldots,d \big\} \\ 
    &= \mathcal{P}_0^1.
\end{align*}
For the inductive step we assume that $\text{span}\{1\} \oplus \mathcal{L}\mathcal{P}^{r-1}  = \mathcal{P}_0^{r-1}$ holds for a given $r \geq 2$ and aim to show that $\text{span}\{1\} \oplus \mathcal{L}\mathcal{P}^r = \mathcal{P}_0^r$.
Note that the action of $\mathcal{L}$ on a polynomial of order $r$ will return a polynomial of order at most $r$, so that $\text{span}\{1\} \oplus \mathcal{L}\mathcal{P}^r \subseteq \mathcal{P}_0^r$ and thus we need to show that $\mathcal{P}_0^r \subseteq \text{span}\{1\} \oplus \mathcal{L}\mathcal{P}^{r}$. 
Under the inductive assumption we have 
\begin{align*}
\text{span}\{1\} \oplus \mathcal{L}\mathcal{P}^{r} &= \text{span}\{1\} \oplus \left( \mathcal{L} \mathcal{P}^{r-1} \oplus \text{span}\big\{ \mathcal{L}\b{x}^{\b{\alpha}}: \, \b{\alpha} \in \mathbb{N}_0^d, \, \abs[0]{\b{\alpha}} = r \big\} \right) \nonumber \\
&= \left( \text{span}\{1\} \oplus \mathcal{L} \mathcal{P}^{r-1} \right) \oplus \text{span}\big\{ \mathcal{L}\b{x}^{\b{\alpha}}: \, \b{\alpha} \in \mathbb{N}_0^d, \, \abs[0]{\b{\alpha}} = r \big\}  \nonumber  \\
 & = \mathcal{P}_0^{r-1} \oplus \text{span}\big\{ \mathcal{L}\b{x}^{\b{\alpha}}: \, \b{\alpha} \in \mathbb{N}_0^d, \, \abs[0]{\b{\alpha}} = r \big\}  \nonumber \\
 & = \mathcal{P}_0^{r-1} \oplus \text{span} \big\{ \Delta_{\b{x}}\b{x}^{\b{\alpha}} + \nabla_{\b{x}}\b{x}^{\b{\alpha}} \cdot \nabla_{\b{x}}\log p(\b{x}): \, \b{\alpha} \in \mathbb{N}_0^d, \, \abs[0]{\b{\alpha}} = r \big\}  \nonumber \\
 & = \mathcal{P}_0^{r-1} \oplus \underbrace{ \text{span} \Bigg\{ \sum_{j=1}^d
    \alpha_j(\alpha_j-1)x_j^{\alpha_j-2}\prod_{k\neq j} x_k^{\alpha_k}
    - \sum_{j=1}^d\alpha_j\b{x}^{\b{\alpha}}: \, \b{\alpha} \in \mathbb{N}_0^d, \, \abs[0]{\b{\alpha}} = r \Bigg\}  }_{=: \mathcal{Q}^r}
\end{align*}
To complete the inductive step we must therefore show that, for each $\b{\alpha} \in \mathbb{N}_0^d$ with $|\b{\alpha}| = r$, we have $\b{x}^{\b{\alpha}} \in \text{span}\{1\} \oplus \mathcal{L} \mathcal{P}^r$.
Fix any $\b{\alpha} \in \mathbb{N}_0^d$ such that $\abs[0]{\b{\alpha}} = r$.
Then
$$
\phi(\b{x}) = \sum_{j=1}^d \alpha_j(\alpha_j-1)x_j^{\alpha_j-2}\prod_{k\neq j} x_k^{\alpha_k} - \sum_{j=1}^d\alpha_j \b{x}^{\b{\alpha}} \in \mathcal{Q}^r.
$$
and
$$
\varphi(\b{x}) =  \frac{1}{\b{1}^\top\b{\alpha}}\sum_{j=1}^d \alpha_j(\alpha_j-1)x_j^{\alpha_j-2}\prod_{k\neq j}x_k^{\alpha_k} \in \mathcal{P}_0^{r-1}
$$
because this polynomial is of order less than $r$.
Since $\varphi - (\b{1}^\top \b{\alpha})^{-1} \phi \in \mathcal{P}_0^{r-1} \oplus \mathcal{Q}^r = \text{span}\{1\} \oplus \mathcal{L} \mathcal{P}^r$ and
\begin{equation*}
    \varphi(\b{x}) - \frac{1}{\b{1}^\top \b{\alpha}} \phi(\b{x}) = \frac{\sum_{j=1}^d\alpha_j}{\b{1}^\top\b{\alpha}} \b{x}^{\b{\alpha}} = \b{x}^{\b{\alpha}},
\end{equation*}
we conclude that $\b{x}^{\b{\alpha}} \in \text{span}\{1\} \oplus \mathcal{L} \mathcal{P}^r$.
Thus we have shown that $\{\b{x}^{\b{\alpha}}  : \, \b{\alpha} \in \mathbb{N}_0^d, \, \abs[0]{\b{\alpha}} = r \} \subset \text{span}\{1\} \oplus \mathcal{L}\mathcal{P}^{r}$ and this completes the argument.
\end{proof}

\section{Proof of Lemma \ref{lem: comput etc}}
\label{app:Computation proof}

\begin{proof}

The assumptions that the $\b{x}^{(i)}$ are distinct and that $k_0$ is a positive-definite kernel imply that the matrix $\b{K}_0$ is positive-definite and thus non-singular.
Likewise, the assumption that the $\b{x}^{(i)}$ are $\mathcal{F}$-unisolvent implies that the matrix $\b{P}$ has full rank.
It follows that the block matrix in~\eqref{eq:block-system} is non-singular.
The interpolation and semi-exactness conditions in Section~\ref{ssec:proposedBSS} can be written in matrix form as
\begin{enumerate}
    \item $\b{K}_0 \, \b{a} + \b{P} \b{b} = \b{f}$ (interpolation);
    \item $\b{P}^\top \b{a} = \b{0}$ (semi-exact).
\end{enumerate}
The first of these is merely~\eqref{eq:interpolant} in matrix form. 
To see how $\b{P}^\top \b{a} = \b{0}$ is related to the semi-exactness requirement ($f_n = f$ whenever $f \in \mathcal{F}$), observe that for $f \in \mathcal{F}$ we have $\b{f} = \b{P} \b{c}$ for some $\b{c} \in \mathbb{R}^m$. Consequently, the interpolation condition should yield $\b{b} = \b{c}$ and $\b{a} = \b{0}$. The condition $\b{P}^\top \b{a} = \b{0}$ enforces that $\b{a} = \b{0}$ in this case: multiplication of the interpolation equation with $\b{a}^\top$ yields $\b{a}^\top \b{K}_0 \b{a} + \b{a}^\top \b{P} \b{b} = \b{a}^\top \b{P} \b{c}$, which is then equivalent to $\b{a}^\top \b{K}_0 \b{a} = \b{0}$. Because $\b{K}_0$ is positive-definite, the only possible $\b{a} \in \mathbb{R}^n$ is $\b{a} = \b{0}$ and $\b{P}$ having full rank implies that $\b{b} = \b{c}$.
Thus the coefficients $\b{a}$ and $\b{b}$ can be cast as the solution to the linear system
\begin{equation*}
    \begin{bmatrix} \b{K}_0 & \b{P} \\ \b{P}^\top & \b{0} \end{bmatrix} \begin{bmatrix} \b{a} \\ \b{b} \end{bmatrix} = \begin{bmatrix} \b{f} \\ \b{0} \end{bmatrix}.
\end{equation*}
From~\eqref{eq:block-system} we get
\begin{equation*}
    \b{b} = ( \b{P}^\top \b{K}_0^{-1} \b{P} )^{-1} \b{P}^\top \b{K}_0^{-1} \b{f},
\end{equation*}
where $\b{P}^\top \b{K}_0^{-1} \b{P}$ is non-singular because $\b{K}_0$ is non-singular and $\b{P}$ has full rank.
Recognising that \sloppy{${b_1 = \b{e}_1^\top \b{b}}$} for $\b{e}_1 = (1, 0, \ldots, 0) \in \mathbb{R}^m$ completes the argument.
\end{proof}

\section{Nystr\"{o}m Approximation and Conjugate Gradient}  \label{app: Nystrom}

In this appendix we describe how a Nystr\"{o}m approximation and the conjugate gradient method can be used to provide an approximation to the proposed method with reduced computational cost.
To this end we consider a function of the form
\begin{equation} 
    \tilde{f}_{n_0}(\b{x}) = \tilde{b}_1 + \sum_{i=1}^{m-1} \tilde{b}_{i+1} \mathcal{L} \phi_i (\b{x}) + \sum_{i=1}^{n_0} \tilde{a}_i k_0(\b{x}, \b{x}^{(i)}), \label{eq: Nystrom regression}
\end{equation}
where $n_0 \ll n$ represents a small subset of the $n$ points in the dataset. 
Strategies for selection of a suitable subset are numerous \citep[e.g.,][]{Alaoui2015,rudi2015less} but for simplicity in this work a uniform random subset was selected. 
Without loss of generality we denote this subset by the first $n_0$ indices in the dataset.
The coefficients $\b{a}$ and $\b{b}$ in the proposed method \eqref{eq:interpolant} can be characterized as the solution to a kernel least-squares problem, the details of which are reserved for Appendix \ref{app: kernel least squares}.
From this perspective it is natural to define the reduced coefficients $\tilde{\b{a}}$ and $\tilde{\b{b}}$ in \eqref{eq: Nystrom regression} also as the solution to a kernel least-squares problem, the details of which are reserved for Appendix \ref{app: Nystrom least squares}.
In taking this approach, the $(n+m)$-dimensional linear system in \eqref{eq:block-system} becomes the $(n_0+m)$-dimensional linear system
\begin{equation} \label{eq: Nystrom linear system}
\left[ \begin{array}{cc} \b{K}_{0,n_0,n}\b{K}_{0,n,n_0} + \b{P}_{n_0}\b{P}_{n_0}^\top & \b{K}_{0,n_0,n}\b{P} \\ \b{P}^\top\b{K}_{0,n,n_0} & \b{P}^\top\b{P}\end{array} \right] \left[ \begin{array}{c} \tilde{\b{a}} \\ \tilde{\b{b}} \end{array} \right] = \left[ \begin{array}{c} \b{K}_{0,n_0,n} \b{f}\\ \b{P}^\top \b{f} \end{array} \right].
\end{equation}
Here $\b{K}_{0,r,s}$ denotes the matrix formed by the first $r$ rows and the first $s$ columns of $\b{K}_0$.
Similarly $\b{P}_r$ denotes the first $r$ rows of $\b{P}$.
It can be verified that there is no approximation error when $n_0 = n$, with $\tilde{\b{a}} = \b{a}$ and $\tilde{\b{b}} = \b{b}$.
This is a simple instance of a Nystr\"{o}m approximation and it can be viewed as a random projection method \citep{smola2000sparse,williams2001using}.

The computational complexity of computing this approximation to the proposed method is 
$$
O(n n_0^2 + n m^2 + n_0^3 + m^3),
$$
which could still be quite high. 
For this reason, we now consider iterative, as opposed to direct, linear solvers for~\eqref{eq: Nystrom linear system}.
In particular, we employ the conjugate gradient method to approximately solve this linear system.
The performance of the conjugate gradient method is determined by the condition number of the linear system, and for this reason a preconditioner should be employed\footnote{A linear system $\b{A} \b{x} = \b{b}$ can be \emph{preconditioned} by an invertible matrix $\b{P}$ to produce $\b{P}^\top \b{A} \b{P} \b{z} = \b{P}^\top \b{b}$. The solution $\b{z}$ is related to $\b{x}$ via $\b{x} = \b{P} \b{z}$.}.
In this work we considered the preconditioner
\begin{equation*}
\left[ \begin{array}{cc} \b{B}_1 & \b{0} \\ \b{0} & \b{B}_2 \end{array} \right].
\end{equation*}
Following \cite{rudi2017falkon}, $\b{B}_1$ is the lower-triangular matrix resulting from a Cholesky decomposition
\begin{equation*}
\b{B}_1 \b{B}_1^\top = \left( \frac{n}{n_0} \b{K}_{0,n_0,n_0}^2 + \b{P}_{n_0} \b{P}_{n_0}^\top \right)^{-1} ,
\end{equation*}
the latter being an approximation to the inverse of $\b{K}_{0,n_0,n} \b{K}_{0,n,n_0} + \b{P}_{n_0} \b{P}_{n_0}^\top$ and obtained at \sloppy{${O(n_0^3 + m n_0^2)}$} cost. The matrix $\b{B}_2$ is
\begin{equation*}
\b{B}_2 \b{B}_2^\top = \big( \b{P}^\top\b{P}\big)^{-1} ,
\end{equation*}
which uses the pre-computed matrix $\b{P}^\top\b{P}$ and is of $O(m^3)$ complexity. 
Thus we obtain a preconditioned linear system
$$
\left[ \begin{array}{cc} \b{B}_1^\top ( \b{K}_{0,n_0,n} \b{K}_{0,n,n_0} + \b{P}_{n_0} \b{P}_{n_0}^\top ) \b{B}_1 & \b{B}_1^\top \b{K}_{0,n_0,n} \b{P}\b{B}_2 \\ \b{B}_2^\top \b{P}^\top \b{K}_{0,n,n_0} \b{B}_1 & \b{I} \end{array} \right] \left[ \begin{array}{c} \tilde{\tilde{\b{a}}} \\ \tilde{\tilde{\b{b}}} \end{array} \right] = \left[ \begin{array}{c} \b{B}_1^\top \b{K}_{0,n_0,n} \b{f} \\ \b{B}_2^\top \b{P}^\top \b{f} \end{array} \right] .
$$
The coefficients $\tilde{\b{a}}$ and $\tilde{\b{b}}$ of $\tilde{f}_{n_0}$ are related to the solution $(\tilde{\tilde{\b{a}}}, \tilde{\tilde{\b{b}}})$ of this preconditioner linear system via $\tilde{\tilde{\b{a}}} = \b{B}_1^{-1} \tilde{\b{a}}$ and $\tilde{\tilde{\b{b}}} =\b{B}_2^{-1}  \tilde{\b{b}}$, which is an upper-triangular linear system solved at quadratic cost.

The above procedure leads to a more computationally (time and space) efficient procedure, and we denote the resulting estimator as $I_{\text{ASECF}}(f) = \tilde{b}_1$.
Further extensions could be considered; for example non-uniform sampling for the random projection via leverage scores \citep{rudi2015less}.

For the examples in Section~\ref{sec: Empirical}, we consider $n_0 = \lceil\sqrt{n}\,\rceil$ where $\lceil \cdot \rceil$ denotes the ceiling function. We use the \verb+R+ package \verb+Rlinsolve+ to perform conjugate gradient, where we specify the tolerance to be~$10^{-5}$. 
The initial value for the conjugate gradient procedure was the choice of $\tilde{\tilde{\b{a}}}$ and $\tilde{\tilde{\b{b}}}$ that leads to the Monte Carlo estimate, $\tilde{\tilde{\b{a}}} = \b{0}$ and $\tilde{\tilde{\b{b}}} = \b{B}_2^{-1} \b{e}_{1}\frac{1}{n}\sum_{i=1}^n f(\b{x}^{(i)})$. In our examples, we did not see a computational speed up from the use of conjugate gradient, likely due to the relatively small values of $n$ involved.

\subsection{Kernel Least-Squares Characterization} \label{app: kernel least squares}

Here we explain how the interpolant $f_n$ in \eqref{eq:interpolant} can be characterized as the solution to the constrained kernel least-squares problem
\begin{equation*}
\argmin_{\b{a},\b{b}} \frac{1}{n}\sum_{i=1}^n \big[ f(\b{x}^{(i)}) - f_n(\b{x}^{(i)}) \big]^2 \quad \text{ s.t. } \quad f_n = f \quad \text{ for all } \quad f \in \mathcal{F} .
\end{equation*}
To see this, note that similar reasoning to that in Appendix \ref{app:Computation proof} allows us to formulate the problem using matrices as
\begin{eqnarray}
\argmin_{\b{a},\b{b}} \|\b{f} - \b{K}_0 \b{a} - \b{P} \b{b} \|^2 \quad \text{ s.t. } \quad \b{P}^\top \b{a} = \b{0} . \label{eq: least squares formulation of original}
\end{eqnarray}
This is a quadratic minimization problem subject to the constraint $\b{P}^\top \b{a} = \b{0}$ and therefore the solution is given by the Karush--Kuhn--Tucker matrix equation
\begin{eqnarray}
\left[ \begin{array}{ccc} \b{K}_0^2 & \b{K}_0 \b{P} & \b{P} \\ \b{P}^\top \b{K}_0 & \b{P}^\top \b{P} & \b{0} \\ \b{P}^\top & \b{0} & \b{0} \end{array} \right] \left[ \begin{array}{c} \b{a} \\ \b{b} \\ \b{c} \end{array} \right] = \left[ \begin{array}{c} \b{K} \b{f} \\ \b{P}^\top \b{f} \\ \b{0} \end{array} \right]  . \label{eq: KKT first}
\end{eqnarray}
Now, we are free to add a multiple, $\b{P}$, of the third row to the first row, which produces
\begin{eqnarray*}
\left[ \begin{array}{ccc} \b{K}_0^2 + \b{P} \b{P}^\top & \b{K}_0 \b{P} & \b{P} \\ \b{P}^\top \b{K}_0 & \b{P}^\top \b{P} & \b{0} \\ \b{P}^\top & \b{0} & \b{0} \end{array} \right] \left[ \begin{array}{c} \b{a} \\ \b{b} \\ \b{c} \end{array} \right] = \left[ \begin{array}{c} \b{K} \b{f} \\ \b{P}^\top \b{f} \\ \b{0} \end{array} \right] .
\end{eqnarray*}
Next, we make the \emph{ansatz} that $\b{c} = \b{0}$ and seek a solution to the reduced linear system
\begin{eqnarray*}
\left[ \begin{array}{cc} \b{K}_0^2 + \b{P} \b{P}^\top & \b{K}_0 \b{P} \\ \b{P}^\top \b{K}_0 & \b{P}^\top \b{P} \end{array} \right] \left[ \begin{array}{c} \b{a} \\ \b{b} \end{array} \right] = \left[ \begin{array}{c} \b{K} \b{f} \\ \b{P}^\top \b{f} \end{array} \right] .
\end{eqnarray*}
This is the same as
\begin{eqnarray*}
\left[ \begin{array}{ccc} \b{K}_0 & \b{P} \\ \b{P}^\top & \b{0} \end{array} \right] \left[ \begin{array}{ccc} \b{K}_0 & \b{P} \\ \b{P}^\top & \b{0} \end{array} \right] = \left[ \begin{array}{ccc} \b{K}_0 & \b{P} \\ \b{P}^\top & \b{0} \end{array} \right] \left[ \begin{array}{c} \b{f} \\ \b{0} \end{array} \right]
\end{eqnarray*}
and thus, if the block matrix can be inverted, we have 
\begin{eqnarray}
\left[ \begin{array}{ccc} \b{K}_0 & \b{P} \\ \b{P}^\top & \b{0} \end{array} \right] = \left[ \begin{array}{c} \b{f} \\ \b{0} \end{array} \right] \label{eq: KKT final}
\end{eqnarray}
as claimed.
Existence of a solution to \eqref{eq: KKT final} establishes a solution to the original system \eqref{eq: KKT first} and justifies the \emph{ansatz}.
Moreover, the fact that a solution to \eqref{eq: KKT final} exists was established in Lemma \ref{lem: comput etc}.

\subsection{Nystr\"{o}m Approximation} \label{app: Nystrom least squares}

To develop a Nystr\"{o}m approximation, our starting point is the kernel least-squares characterization of the proposed estimator in~\eqref{eq: least squares formulation of original}.
In particular, the same least-squares problem can be considered for the Nystr\"{o}m approximation in~\eqref{eq: Nystrom regression}:
\begin{eqnarray*}
\argmin_{\tilde{\b{a}},\tilde{\b{b}}} \|\b{f} - \b{K}_{0,n,n_0} \tilde{\b{a}} - \b{P} \tilde{\b{b}} \|_2^2 \quad \text{ s.t. } \quad \b{P}_{n_0}^\top \tilde{\b{a}} = \b{0} .
\end{eqnarray*}
This least-squares problem can be formulated as
\begin{align*}
&\argmin_{\tilde{\b{a}},\tilde{\b{b}}} (\b{f} - \b{K}_{0,n,n_0} \tilde{\b{a}} - \b{P} \tilde{\b{b}})^\top (\b{f} - \b{K}_{0,n,n_0} \tilde{\b{a}} - \b{P} \tilde{\b{b}}) \\
&=\argmin_{\tilde{\b{a}},\tilde{\b{b}}}
\left[ \b{f}^{\top}\b{f} 
- \b{f}^{\top}\b{K}_{0,n,n_0}\tilde{\b{a}}
- \b{f}^{\top}\b{P}\tilde{\b{b}}
- \tilde{\b{a}}^{\top}\b{K}_{0,n_0,n}\b{f}
+ \tilde{\b{a}}^{\top}\b{K}_{0,n_0,n}\b{K}_{0,n,n_0}\tilde{\b{a}} \right. \\
&\phantom{=\argmin_{\tilde{\b{a}},\tilde{\b{b}}}[ }\left.
+ \tilde{\b{a}}^{\top}\b{K}_{0,n_0,n}\b{P}\tilde{\b{b}}
- \tilde{\b{b}}^{\top}\b{P}^{\top}\b{f}
- \tilde{\b{b}}^{\top}\b{P}^{\top}\b{K}_{0,n,n_0}\tilde{\b{a}}
- \tilde{\b{b}}^{\top}\b{P}^{\top}\b{P}\tilde{\b{b}} \right]\\
&=\argmin_{\tilde{\b{a}},\tilde{\b{b}}}
\left[\begin{array}{c} \tilde{\b{a}} \\ \tilde{\b{b}} \end{array}\right]^{\top} 
\left[ \begin{array}{cc} \b{K}_{0,n_0,n}\b{K}_{0,n,n_0} & \b{K}_{0,n_0,n}\b{P} \\ \b{P}^\top\b{K}_{0,n,n_0} & \b{P}^\top\b{P} \end{array} \right]
\left[\begin{array}{c} \tilde{\b{a}} \\ \tilde{\b{b}} \end{array}\right]
- 2 \left[ \begin{array}{c} \b{K}_{0,n_0,n} \b{f} \\ \b{P}^\top \b{f} \end{array} \right]
\left[\begin{array}{c} \tilde{\b{a}} \\ \tilde{\b{b}} \end{array}\right]
+ \b{f}^{\top}\b{f}
\end{align*}
This is a quadratic minimization problem subject to the constraint $\b{P}_{n_0}^\top \tilde{\b{a}} = \b{0}$ and so the solution is given by the Karush--Kuhn--Tucker matrix equation
\begin{eqnarray}
\left[ \begin{array}{ccc} \b{K}_{0,n_0,n}\b{K}_{0,n,n_0} & \b{K}_{0,n_0,n}\b{P} & \b{P}_{n_0} \\ \b{P}^\top\b{K}_{0,n,n_0} & \b{P}^\top\b{P} & \b{0} \\ \b{P}_{n_0}^\top & \b{0} & \b{0} \end{array} \right] \left[ \begin{array}{c} \tilde{\b{a}} \\ \tilde{\b{b}} \\ \tilde{\b{c}} \end{array} \right] = \left[ \begin{array}{c} \b{K}_{0,n_0,n} \b{f}\\ \b{P}^\top \b{f} \\ \b{0} \end{array} \right] . \label{eq: KKT first Nystrom}
\end{eqnarray}
Following an identical argument to that in Appendix \ref{app: kernel least squares}, we first add $\b{P}_{n_0}$ times the third row to the first row to obtain
\begin{eqnarray*}
\left[ \begin{array}{ccc} \b{K}_{0,n_0,n}\b{K}_{0,n,n_0} + \b{P}_{n_0}\b{P}_{n_0}^\top & \b{K}_{0,n_0,n}\b{P} & \b{P}_{n_0} \\ \b{P}^\top\b{K}_{0,n,n_0} & \b{P}^\top\b{P} & \b{0} \\ \b{P}_{n_0}^\top & \b{0} & \b{0} \end{array} \right] \left[ \begin{array}{c} \tilde{\b{a}} \\ \tilde{\b{b}} \\ \tilde{\b{c}} \end{array} \right] = \left[ \begin{array}{c} \b{K}_{0,n_0,n} \b{f}\\ \b{P}^\top \b{f} \\ \b{0} \end{array} \right] .
\end{eqnarray*}
Taking again the \emph{ansatz} that $\tilde{\b{c}} = \b{0}$ requires us to solve the reduced linear system
\begin{eqnarray}
\left[ \begin{array}{cc} \b{K}_{0,n_0,n}\b{K}_{0,n,n_0} + \b{P}_{n_0}\b{P}_{n_0}^\top & \b{K}_{0,n_0,n}\b{P} \\ \b{P}^\top\b{K}_{0,n,n_0} & \b{P}^\top\b{P}\end{array} \right] \left[ \begin{array}{c} \tilde{\b{a}} \\ \tilde{\b{b}} \end{array} \right] = \left[ \begin{array}{c} \b{K}_{0,n_0,n} \b{f}\\ \b{P}^\top \b{f} \end{array} \right] . \label{eq: KTT Nystrom}
\end{eqnarray}
As in Appendix \ref{app: kernel least squares}, the existence of a solution to \eqref{eq: KTT Nystrom} implies a solution to \eqref{eq: KKT first Nystrom} and justifies the \emph{ansatz}.

\section{Sensitivity to the Choice of Kernel} \label{app: effect of the kernel}

In this appendix we investigate the sensitivity of kernel-based methods ($I_{\text{CF}}$, $I_{\text{SECF}}$ and $I_{\text{ASECF}}$) to the kernel and its parameter using the Gaussian example of Section \ref{sec: gaussian assessment}. Specifically we compare the three kernels described in Appendix \ref{appendix:kernels}, the Gaussian, Mat\'{e}rn and rational quadratic kernels, when the parameter,~$\lambda$, is chosen using either cross-validation or the median heuristic \citep{Garreau2017}. For the Mat\'{e}rn kernel, we fix the smoothness parameter at $\nu = 4.5$.

In the cross-validation approach,
\begin{eqnarray}
\lambda_\text{CV} \in \argmin \sum_{i=1}^5 \sum_{j=1}^{ n_5 } \big[ f( \b{x}^{(i,j)}) - f_{i,\lambda}( \b{x}^{(i,j)}) \big]^2, \label{eq: CV error}
\end{eqnarray}
where $n_5 := \lfloor n / 5 \rfloor$, $f_{i,\lambda}$ denotes an interpolant of the form \eqref{eq:interpolant} to $f$ at the points $\{\b{x}^{(i,j)} : j = 1,\dots, n_5 \big\}$ with kernel parameter $\lambda$, and $\b{x}^{(i,j)}$ is the $j$th point in the $i$th fold.  
In general~\eqref{eq: CV error} is an intractable optimization problem and we therefore perform a grid-based search. Here we consider $\lambda \in 10^{\{-1.5,-1,-0.5,0,0.5,1\}}$.

The median heuristic described in \citet{Garreau2017} is the choice of the bandwidth
\begin{equation*}
\tilde{\lambda} = \sqrt{ \frac{1}{2} \text{Med}\Big\{ \| \b{x}^{(i)} - \b{x}^{(j)} \|^2  \: : \: 1\leq i < j \leq n \Big\} }
\end{equation*}
for functions of the form $k(\b{x},\b{y}) = \varphi(\| \b{x} - \b{y} \|/\lambda)$, where $\text{Med}$ is the empirical median. This heuristic can be used for the Gaussian, Mat\'{e}rn and rational quadratic kernels, which all fit into this framework. 

Figures~\ref{fig:Gaussian_compare_N1000} and~\ref{fig:Gaussian_compare_d4} show the statistical efficiency of each combination of kernel and tuning approach for $n=1000$ and $d=4$, respectively. The outcome that the performance of $I_{\text{SECF}}$ and $I_{\text{ASECF}}$ are less sensitive to the kernel choice than $I_{\text{CF}}$ is intuitive when considering the fact that semi-exact control functionals enforce exactness on $f \in \mathcal{F}$.

\begin{figure}[!h]
\centering
\includegraphics[width=0.8\textwidth]{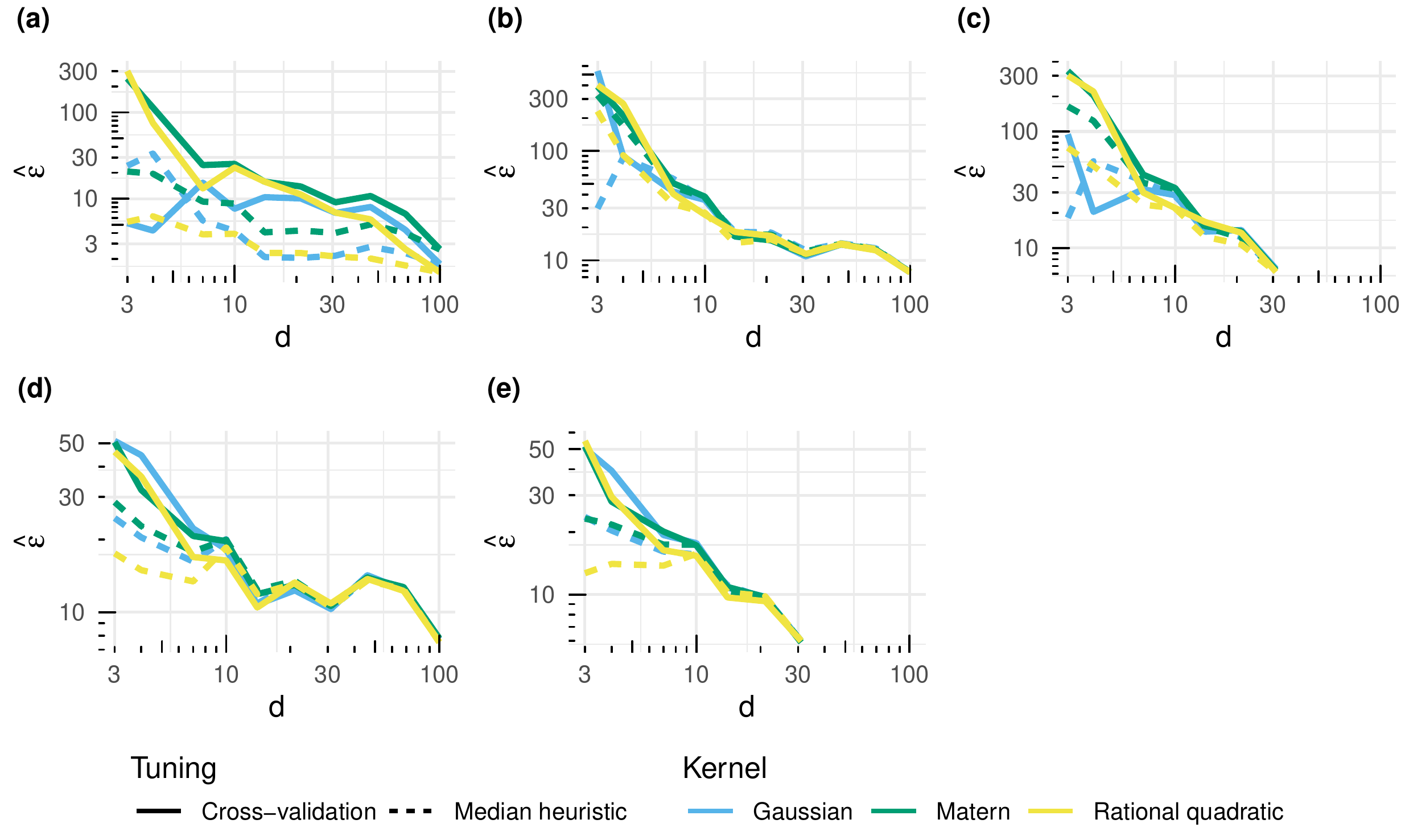}
\caption{Gaussian example, estimated statistical efficiency for $N=1000$ using different kernels and tuning approaches. The estimators are (a) $I_{\text{CF}}$, (b) $I_{\text{SECF}}$ with polynomial order $r=1$, (c) $I_{\text{SECF}}$ with $r=2$, (d) $I_{\text{ASECF}}$ with $r=1$ and (e) $I_{\text{ASECF}}$ with $r=2$.}
\label{fig:Gaussian_compare_N1000}
\end{figure}

\begin{figure}[!h]
\centering
\includegraphics[width=0.8\textwidth]{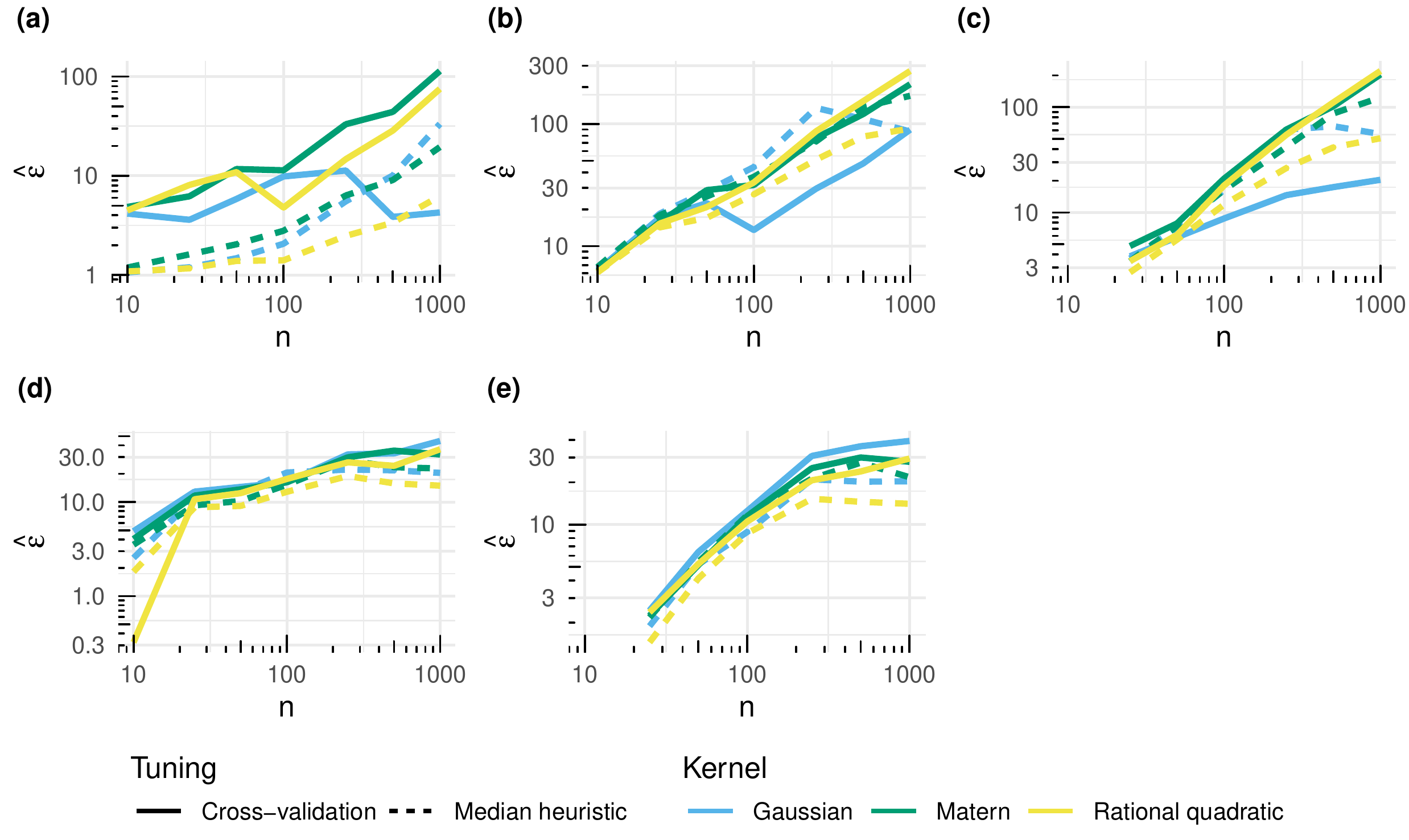}
\caption{Gaussian example, estimated statistical efficiency for $d=4$ using different kernels and tuning approaches. The estimators are (a) $I_{\text{CF}}$, (b) $I_{\text{SECF}}$ with polynomial order $r=1$, (c) $I_{\text{SECF}}$ with $r=2$, (d) $I_{\text{ASECF}}$ with $r=1$ and (e) $I_{\text{ASECF}}$ with $r=2$.
}
\label{fig:Gaussian_compare_d4}
\end{figure}

\FloatBarrier

\section{Results for the Unadjusted Langevin Algorithm}\label{app: ULA results}

Recall that the proposed method does not require that the $\b{x}^{(i)}$ form an empirical approximation to $p$.
It is therefore interesting to investigate the behaviour of the method when the $(\b{x}^{(i)})_{i=1}^\infty$ arise as a Markov chain that does not leave $p$ invariant.
Figures \ref{fig:Recapture_ULA} and \ref{fig:Sonar_ULA} show results when the unadjusted Langevin algorithm is used rather than the Metropolis-adjusted Langevin algorithm which is behind Figures \ref{fig:Recapture} and \ref{fig:Sonar} of the main text. The benefit of the proposed method for samplers that do not leave $p$ invariant is evident through its reduced bias compared to $I_{\text{ZV}}$ and $I_{\text{MC}}$ in Figure \ref{fig:Recapture_marginal1}.
Recall that the unadjusted Langevin algorithm \citep{Parisi1981,Ermak1975} is defined by
\begin{equation*}
\b{x}^{(i+1)} = \b{x}^{(i)} + \frac{h^2}{2}\b{\Sigma}\nabla_{\b{x}} \log P_{\b{x} \mid \b{y}}(\b{x}^{(i)} \mid \b{y}) + \epsilon_{i+1},
\end{equation*}
for $i=1,\ldots,n-1$ where $\b{x}^{(1)}$ is a fixed point with high posterior support and $\epsilon_{i+1}\sim \mathcal{N}(\b{0},h^2\b{\Sigma})$. Step sizes of $h=0.9$ for the sonar example and $h=1.1$ for the capture-recapture example were selected.

\begin{figure}[h!]
  \centering
\includegraphics[width=0.8\textwidth]{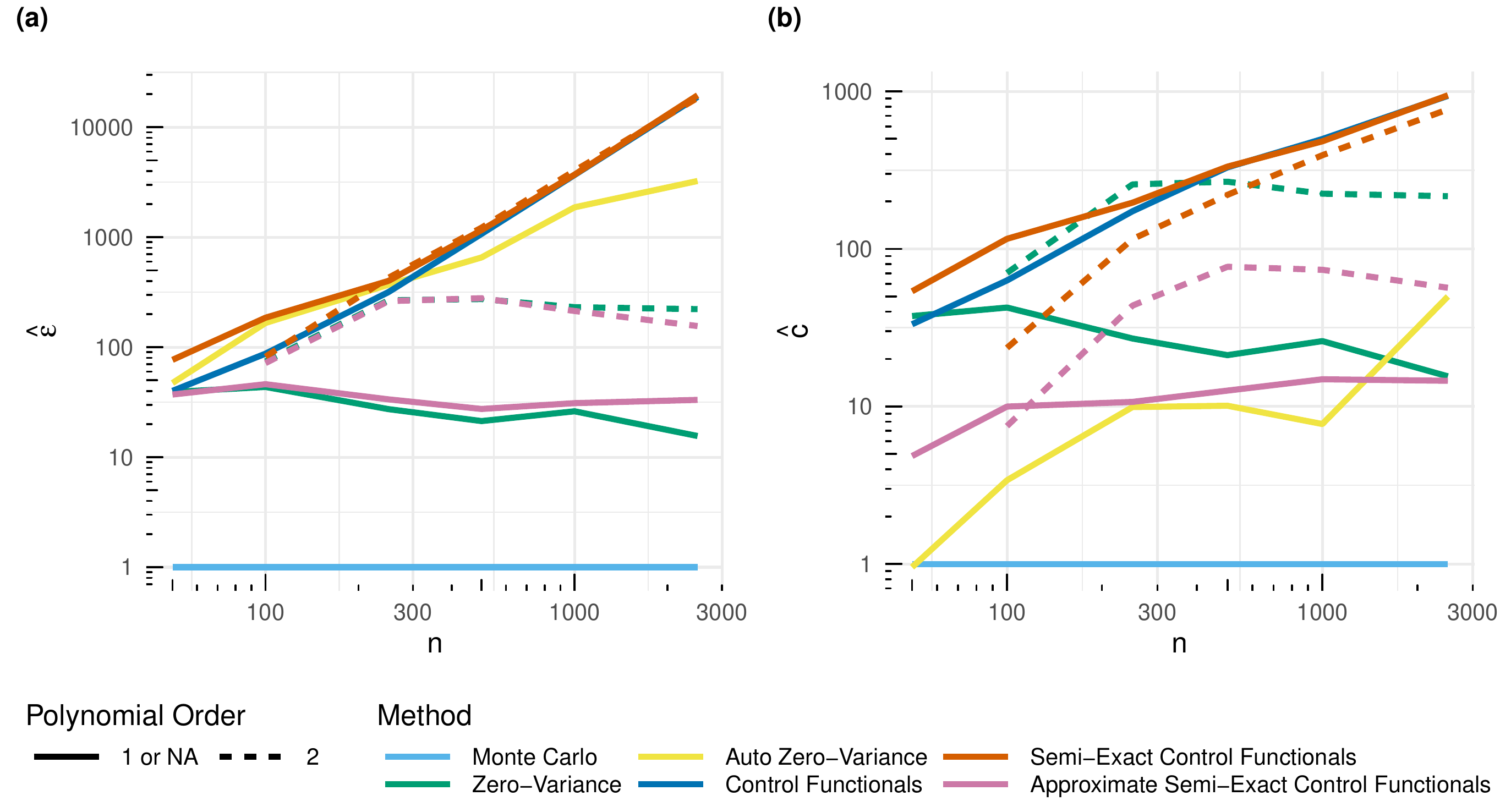}
 \caption{Recapture example (a) estimated \textit{statistical efficiency} and (b) estimated \textit{computational efficiency} when the unadjusted Langevin algorithm is used in place of the Metropolis-adjusted Langevin algorithm. Efficiency here is reported as an average over the 11 expectations of interest.
 }
    \label{fig:Recapture_ULA}
\end{figure}

\begin{figure}
  \centering
\includegraphics[width=0.8\textwidth]{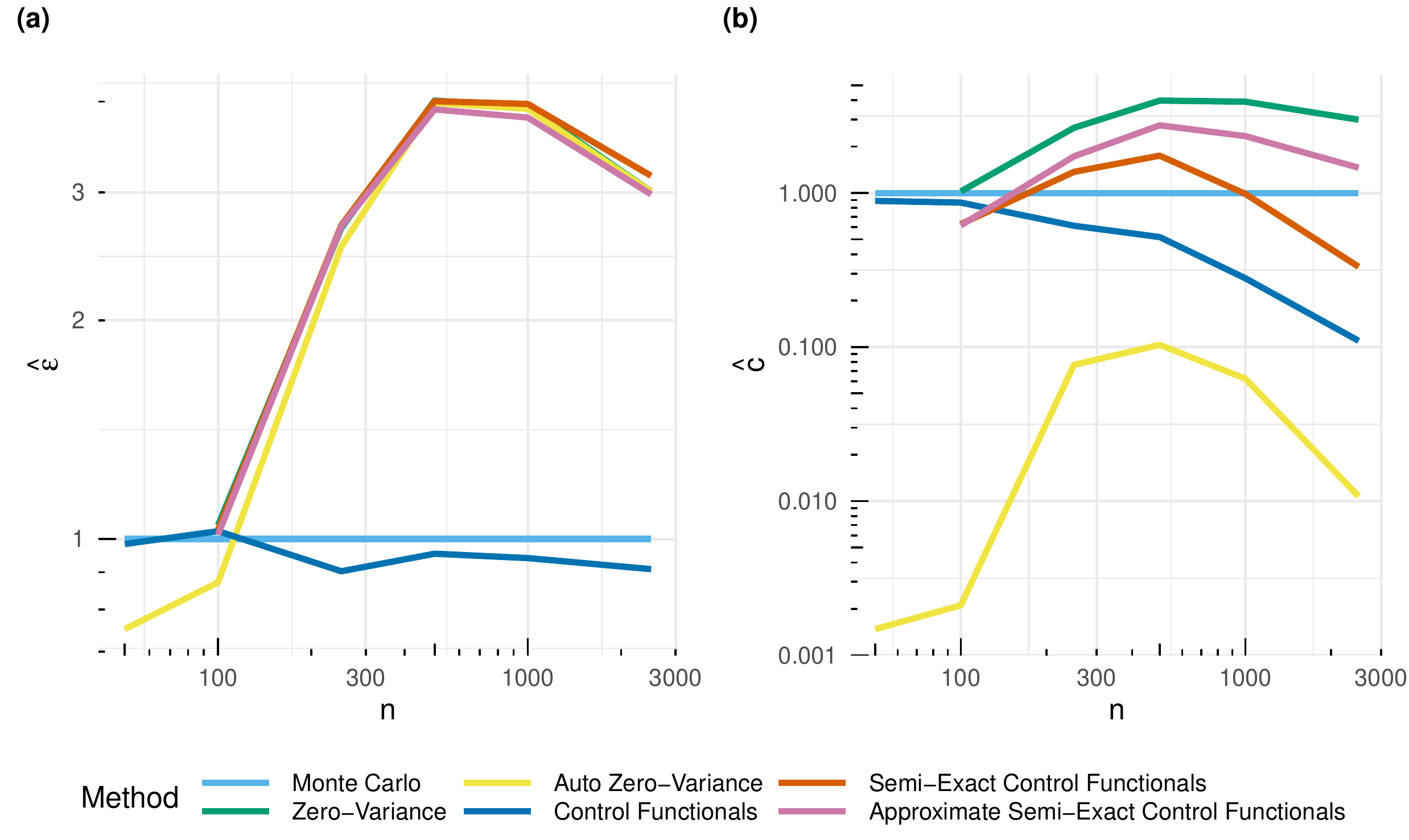}
\caption{Sonar example (a) estimated \textit{statistical efficiency} and (b) estimated \textit{computational efficiency} when the unadjusted Langevin algorithm is used in place of the Metropolis-adjusted Langevin algorithm.
}
\label{fig:Sonar_ULA}
\end{figure}

\begin{figure}
  \centering
\includegraphics[width=0.8\textwidth]{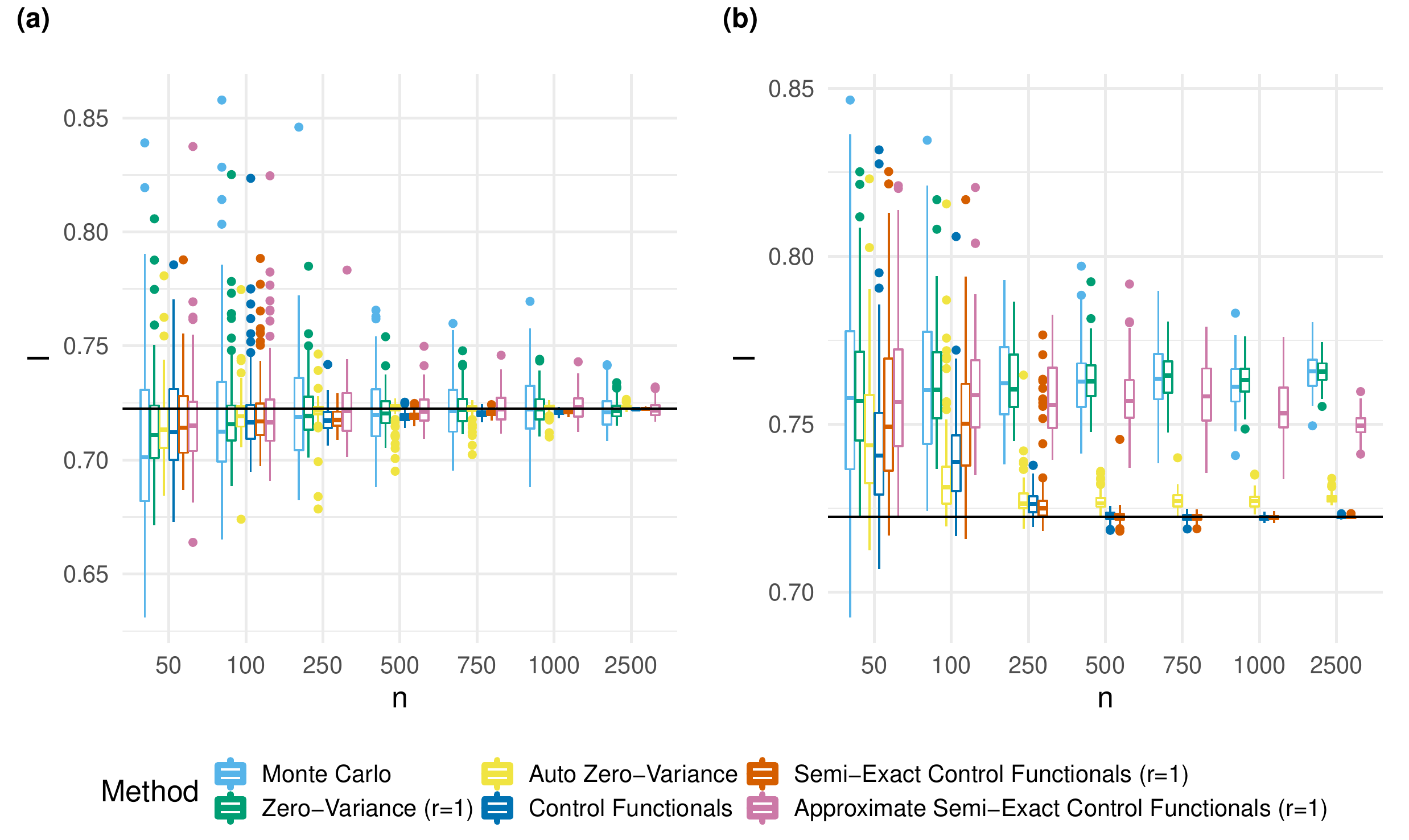}
 \caption{Recapture example (a) boxplots of 100 estimates of $\int x_1 P_{\b{x} \mid \b{y}} \mathrm{d} \b{x}$ when the Metropolis-adjusted Langevin algorithm is used for sampling and (b) boxplots of 100 estimates of $\int x_1 P_{\b{x} \mid \b{y}} \mathrm{d} \b{x}$ when the unadjusted Langevin algorithm is used for sampling. The black horizontal line represents the gold standard of approximation.
 }
    \label{fig:Recapture_marginal1}
\end{figure}

\FloatBarrier

\section{Reproducing Kernels and Worst-Case Error} \label{app: wce}

The purpose of this section is to review some basic results about  worst-case error analysis in a reproducing kernel Hilbert space context.
In Appendices~\ref{app: finite bound} and~\ref{ap: consistency proof} these results are used to prove Proposition~\ref{lem: KSD bound} and Theorem~\ref{thm: consistency}.

Let $k \colon \mathbb{R}^d \times \mathbb{R}^d \to \mathbb{R}$ be a positive-definite kernel such that $\int \lvert k(\b{x}, \b{y}) \rvert p(\b{x}) \dif \b{x} < \infty$ for every $\b{y} \in \mathbb{R}^d$ and $\mathcal{H}(k)$ the reproducing kernel Hilbert space of $k$.
The \emph{worst-case error} in $\mathcal{H}(k)$ of any weights $\b{v} = (v_1, \ldots, v_n) \in \mathbb{R}^n$ and any distinct points $\{\b{x}^{(i)}\}_{i=1}^n \subset \mathbb{R}^d$ is defined as
\begin{equation} \label{eq:wce-decomposition}
    e_{\mathcal{H}(k)}(\b{v} ; \{\b{x}^{(i)}\}_{i=1}^n) := \sup_{ \norm{h}_{\mathcal{H}(k)} \leq 1 } \, \bigg\rvert \int h(\b{x}) p(\b{x}) \dif \b{x} - \sum_{i=1}^n v_i h(\b{x}^{(i)}) \bigg\lvert.
\end{equation}
In this appendix we consider a fixed set of points $\{\b{x}^{(i)}\}_{i=1}^n$ and employ the shorthand $e_{\mathcal{H}(k)}(\b{v})$ for $e_{\mathcal{H}(k)}(\b{v} ; \{\b{x}^{(i)}\}_{i=1}^n)$.
Then a standard result \citep[see, for example, Section~10.2 in][]{NovakWozniakowski2010} is that the worst-case error admits a closed form
\begin{equation} \label{eq:wce}
e_{\mathcal{H}(k)}(\b{v}) = \bigg( \int \int k(\b{x}, \b{y}) p(\b{x}) p(\b{y}) \dif \b{x} \dif \b{y} - 2 \sum_{i=1}^n v_i \int k(\b{x}, \b{x}^{(i)}) p(\b{x}) \dif \b{x} + \b{v}^\top \b{K} \b{v} \bigg)^{1/2},
\end{equation}
where $\b{K}$ is the $n \times n$ matrix with entries $[\b{K}]_{i,j} = k(\b{x}^{(i)}, \b{x}^{(j)})$, and
\begin{equation} \label{eq:wce-decomposition}
    \bigg\lvert \int h(\b{x}) p(\b{x}) \dif \b{x} - \sum_{i=1}^n v_i h(\b{x}^{(i)}) \bigg\rvert \leq \norm{h}_{\mathcal{H}(k)} e_{\mathcal{H}(k)}(\b{v})
\end{equation}
for any $h \in \mathcal{H}(k)$.
Because the worst-case error in~\eqref{eq:wce} can be written as the quadratic form
\begin{equation*}
    e_{\mathcal{H}(k)}(\b{v}) = ( k_{pp} - 2\b{v}^\top \b{k}_p + \b{v}^\top \b{K} \b{v} )^{1/2},
\end{equation*}
where $k_{pp} =  \int \int k(\b{x}, \b{y}) p(\b{x}) p(\b{y}) \dif \b{x} \dif \b{y}$ and $[\b{k}_p]_i = \int k(\b{x}, \b{x}^{(i)}) p(\b{x}) \dif \b{x}$, the weights $\b{v}$ which minimise it take an explicit closed form:
\begin{equation*}
    \b{v}_\text{opt} = \argmin_{\b{v} \in \mathbb{R}^n} e_{\mathcal{H}(k)}(\b{v}) = \b{K}^{-1} \b{k}_p
\end{equation*}

Let $\Psi = \{ \psi_0, \ldots, \psi_{m-1} \}$ be a collection of $m \leq n$ basis functions for which the generalised Vandermonde matrix
\begin{equation*}
    \b{P}_\Psi = \begin{bmatrix} \psi_0(\b{x}^{(1)}) & \cdots & \psi_{m-1}(\b{x}^{(1)}) \\ \vdots & \ddots & \vdots \\ \psi_0(\b{x}^{(n)}) & \cdots & \psi_{m-1}(\b{x}^{(n)}) \end{bmatrix},
\end{equation*}
has full rank. In this paper we are interested in weights which satisfy the semi-exactness conditions $\sum_{i=1}^n v_i \psi(\b{x}^{(i)}) = \int \psi(\b{x}) p(\b{x}) \dif \b{x}$ for every $\psi \in \Psi$. 
Minimising the worst-case error under these constraints gives rise to the weights
\begin{equation} \label{eq:sard-weight-def}
    \b{v}_\text{opt}^\Psi = \argmin_{ \b{v} \in \mathbb{R}^n } e_{\mathcal{H}(k)}(\b{v}) \quad \text{ s.t. } \quad \sum_{i=1}^n v_i \psi(\b{x}^{(i)}) = \int \psi(\b{x}) p(\b{x}) \dif \b{x} \quad \text{ for every } \psi \in \Psi.
\end{equation}
These weights can be solved from the linear system~\citep[Theorem~2.7 and Remark~D.1]{Karvonen2018}
\begin{equation*}
    \begin{bmatrix} \b{K} & \b{P}_\Psi \\ \b{P}_\Psi^\top & \b{0} \end{bmatrix} \begin{bmatrix} \b{v}_\text{opt}^\Psi \\ \b{a} \end{bmatrix} = \begin{bmatrix} \b{k}_p \\ \b{\psi}_p \end{bmatrix},
\end{equation*}
where $\b{a} \in \mathbb{R}^q$ is a nuisance vector and the $i$th element of $\b{\psi}_p$ is $\int \psi_{i-1}(\b{x}) p(\b{x}) \dif \b{x}$. 
Note that~\eqref{eq:sard-weight-def} is merely a quadratic programming problem under the linear equality constraint $\b{P}_\Psi^\top \b{v} = \b{\psi}_p$.

These facts will be used in Appendices~\ref{app: finite bound} and~\ref{ap: consistency proof} to prove Proposition~\ref{lem: KSD bound} and Theorem~\ref{thm: consistency}.
Their relevance derives from the fact that $e_{\mathcal{H}(k_0)}(\b{v} ; \{\b{x}^{(i)}\}_{i=1}^n)$ coincides with the kernel Stein discrepancy between $p$ and the discrete measure $\sum_{i=1}^n v_i \delta(\b{x}^{(i)})$.

\section{Proof of Proposition \ref{lem: KSD bound}} \label{app: finite bound}

The following proof relies on the results reviewed in Appendix~\ref{app: wce}.

\begin{proof}[of Proposition \ref{lem: KSD bound}]
Applying the results reviewed in Appendix~\ref{app: wce} with $k = k_0$ and $\psi_j = \mathcal{L} \phi_j$, for which $\b{k}_p = \b{0}$ and $\b{\psi}_p = \b{e}_1$ from~\eqref{eq: kernel ints to 0}, we see that the solution to the optimisation problem
\begin{equation*}
    \b{v}_\text{opt}^\mathcal{F} = \argmin_{ \b{v} \in \mathbb{R}^n } e_{\mathcal{H}(k_0)}(\b{v} ; \{\b{x}^{(i)}\}_{i=1}^n) \quad \text{ s.t. } \quad \sum_{i=1}^n v_i h(\b{x}^{(i)}) = \int h(\b{x}) p(\b{x}) \dif \b{x} \quad \text{ for every } h \in \mathcal{F}
\end{equation*}
can be obtained by solving the linear system
\begin{equation*}
    \begin{bmatrix} \b{K}_0 & \b{P} \\ \b{P}^\top & \b{0} \end{bmatrix} \begin{bmatrix} \b{v}_\text{opt}^\mathcal{F} \\ \b{a} \end{bmatrix} = \begin{bmatrix} \b{0} \\ \b{e}_1 \end{bmatrix}.
\end{equation*}
A straightforward application of the block matrix inversion formula then gives
\begin{equation*}
    \b{v}_\text{opt}^\mathcal{F} = \b{K}_0^{-1} \b{P} ( \b{P}^\top \b{K}_0^{-1} \b{P} )^{-1} \b{e}_1 = \b{w} ,
\end{equation*}
where in the final equality we have recognised this expression as being identical to the weights $\b{w}$ used in our semi-exact control functional method, i.e. $I_\textup{SECF}(f) = \sum_{i=1}^n w_i f(\b{x}^{(i)})$ by~\eqref{eqn:BSS} and~\eqref{eq: weights}.
By~\eqref{eq: kernel ints to 0} the only non-zero element on the right-hand side of~\eqref{eq:wce} is $\b{v}^\top \b{K}_0 \b{v}$.
Thus we have characterised the weights $\bm{w}$ in the semi-exact control functional method as the solution to the problem
\begin{equation} \label{eq:integral-semi-exactness}
     \b{w} = \argmin_{ \b{v} \in \mathbb{R}^n } (\b{v}^\top \b{K}_0 \b{v})^{1/2} \quad \text{ s.t. } \quad \sum_{i=1}^n v_i h(\b{x}^{(i)}) = \int h(\b{x}) p(\b{x}) \dif \b{x} \quad \text{ for every } h \in \mathcal{F}.
\end{equation}
If $f = h + g$ with $h \in \mathcal{F}$ and $g \in \mathcal{H}(k_0)$, then it follows from the integral semi-exactness property~\eqref{eq:integral-semi-exactness} that
\begin{equation*}
    \lvert I(f) - I_{\text{SECF}}(f) \rvert = \lvert I(g) - I_{\text{SECF}}(g) + I(h) - I_{\text{SECF}}(h) \rvert = \lvert
    I(g) - I_{\text{SECF}}(g) \rvert.
\end{equation*}
Applying~\eqref{eq:wce-decomposition} and~\eqref{eq:integral-semi-exactness} yields
\begin{equation*}
    \lvert I(f) - I_{\text{SECF}}(f) \leq \norm{g}_{\mathcal{H}(k_0)} e_{\mathcal{H}(k_0)} (\b{w} ; \{\b{x}^{(i)}\}_{i=1}^n) = \norm{g}_{\mathcal{H}(k_0)} (\b{w}^\top \b{K}_0 \b{w})^{1/2}.
\end{equation*}
Since this bound is valid for any decomposition $f = h + g$ with $h \in \mathcal{F}$ and $g \in \mathcal{H}(k_0)$ we have
\begin{align*}
\lvert I(f) - I_{\text{SECF}}(f) \rvert \; \leq \; \inf_{\substack{f = h + g \\ h \in \mathcal{F}, \, g \in \mathcal{H}(k_0)}} \|g\|_{\mathcal{H}(k_0)} (\b{w}^\top \b{K}_0 \b{w})^{1/2} \; = \; |f|_{k_0,\mathcal{F}} \, (\b{w}^\top \b{K}_0 \b{w})^{1/2}
\end{align*}
as claimed.
\end{proof}

\section{Proof of Theorem \ref{thm: consistency}} \label{ap: consistency proof}

The following proof relies on the worst-case error results reviewed in Appendix~\ref{app: wce}, together with the following result, due to \citet{Hodgkinson2020}, which studies the convergence of the worst-case error (i.e. the kernel Stein discrepancy) of a weighted combination of the states $\{\b{x}^{(i)}\}_{i=1}^n$, where the weights $\tilde{\b{w}}$ are obtained by minimising the worst-case error subject to a non-negativity constraint:

\begin{theorem} \label{thm: hodgkinson}
Let $p$ be a probability density on $\mathbb{R}^d$ and $k_0 : \mathbb{R}^d \times \mathbb{R}^d \rightarrow \mathbb{R}$ a reproducing kernel which satisfies
\begin{equation*}
    \int k_0(\b{x}, \b{y}) p(\b{x}) \dif \b{x} = 0
\end{equation*}
for every $\b{y} \in \mathbb{R}^d$.
Let $q$ be a probability density with $p/q > 0$ on $\mathbb{R}^d$ and consider a $q$-invariant Markov chain $(\bm{x}^{(i)})_{i=1}^n$, assumed to be $V$-uniformly ergodic for some $V : \mathbb{R}^d \rightarrow [1,\infty)$, such that 
$$
\sup_{\b{x} \in \mathbb{R}^d}  \; V(\b{x})^{-r} \; \left( \frac{p(\bm{x})}{q(\bm{x})} \right)^4 \; k_0(\b{x},\b{x})^2   < \infty 
$$ 
for some $0 < r < 1$.
Let
\begin{equation} \label{eq: hodgkinson-weights}
    \tilde{\b{w}} = \argmin_{ \b{v} \in \mathbb{R}^n} e_{\mathcal{H}(k_0) }(\b{v} ; \{\b{x}^{(i)}\}_{i=1}^n) \quad \text{s.t.} \quad \sum_{i=1}^n v_i = 1 \quad \text{and} \quad \b{v} \geq \b{0}.
\end{equation}
Then  $e_{\mathcal{H}(k_0) }(\tilde{\b{w}} ; \{\b{x}^{(i)}\}_{i=1}^n) = O_P(n^{-1/2})$.
\end{theorem}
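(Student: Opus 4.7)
The plan is to exploit the $\argmin$ characterisation of $\tilde{\b{w}}$ by bounding its objective above via a convenient feasible point, then reducing the resulting quantity to a mean-zero empirical process calculation in $\mathcal{H}(k_0)$. Define the self-normalised importance weights $w_i^{\textup{IS}} := \rho(\b{x}^{(i)})/\sum_j \rho(\b{x}^{(j)})$ where $\rho := p/q$. Since $p/q > 0$ these weights are non-negative and sum to one, so they are feasible for the programme~\eqref{eq: hodgkinson-weights} and therefore $e_{\mathcal{H}(k_0)}(\tilde{\b{w}}) \leq e_{\mathcal{H}(k_0)}(\b{w}^{\textup{IS}})$; it suffices to establish the $O_P(n^{-1/2})$ rate for the right-hand side.

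Next, the hypothesis $\int k_0(\b{x},\b{y}) p(\b{x}) \dif \b{x} = 0$ annihilates the first two terms in~\eqref{eq:wce}, so $e_{\mathcal{H}(k_0)}(\b{v})^2 = \b{v}^\top \b{K}_0 \b{v}$ for any feasible $\b{v}$. Setting $\mu_n := n^{-1}\sum_i \rho(\b{x}^{(i)}) k_0(\cdot, \b{x}^{(i)}) \in \mathcal{H}(k_0)$ and $S_n := n^{-1}\sum_i \rho(\b{x}^{(i)})$, a direct calculation gives $e_{\mathcal{H}(k_0)}(\b{w}^{\textup{IS}})^2 = \|\mu_n\|_{\mathcal{H}(k_0)}^2 / S_n^2$. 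The crucial observation is that $\mu_n$ is the empirical mean of an $\mathcal{H}(k_0)$-valued random element with zero mean under $q$: in the Bochner sense, $\mathbb{E}_q[\rho(X) k_0(\cdot, X)] = \int k_0(\cdot, \b{x}) p(\b{x}) \dif \b{x} = 0$ by the kernel's zero-mean hypothesis. Meanwhile $S_n \to \int \rho \cdot q \, \dif \b{x} = 1$ in probability by the $V$-uniformly ergodic law of large numbers applied to the scalar function $\rho$, giving $S_n^2 = 1 + o_P(1)$.

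It then remains to show $\mathbb{E}\|\mu_n\|_{\mathcal{H}(k_0)}^2 = O(n^{-1})$, which via Markov's inequality yields $\|\mu_n\|^2 = O_P(n^{-1})$ and hence the theorem. This is the main obstacle and is where the moment hypothesis enters essentially. The Cauchy--Schwarz inequality for reproducing kernels gives $|\rho(\b{x})\rho(\b{y})k_0(\b{x},\b{y})| \leq \sqrt{\rho(\b{x})^2 k_0(\b{x},\b{x})} \sqrt{\rho(\b{y})^2 k_0(\b{y},\b{y})}$, while the hypothesis $\sup_{\b{x}} V(\b{x})^{-r} \rho(\b{x})^4 k_0(\b{x},\b{x})^2 < \infty$ forces $\rho^2 k_0(\cdot,\cdot) \leq C V^{r/2}$ pointwise. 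Under $V$-uniform ergodicity one has $\mathbb{E}_q[V^{r/2}]<\infty$ and, more importantly, lag-$k$ covariances of any $V^{r/2}$-integrable functional decay geometrically in $k$ \citep[see][Chapter~17]{meyn2012markov}. Expanding the squared norm as $\|\mu_n\|^2 = n^{-2} \sum_{i,j} \rho(\b{x}^{(i)})\rho(\b{x}^{(j)}) k_0(\b{x}^{(i)},\b{x}^{(j)})$ and using the zero-mean property to recognise every summand as a centred covariance in the Hilbert space, the geometric decay gives $\sum_{i,j=1}^n |\textup{Cov}|_{\mathcal{H}(k_0)} = O(n)$ and hence $\mathbb{E}\|\mu_n\|^2 = O(n^{-1})$ as required. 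The main technical delicacy is lifting the standard scalar Markov-chain covariance bound to a Hilbert-space-valued functional, but this extends verbatim once one verifies the scalar moment $\mathbb{E}_q \|\rho(X)k_0(\cdot,X)\|^2_{\mathcal{H}(k_0)} = \mathbb{E}_q[\rho(X)^2 k_0(X,X)] \leq C\,\mathbb{E}_q[V^{r/2}] < \infty$, which was already established above.
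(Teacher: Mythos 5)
Your argument is essentially correct, but it is genuinely different from what the paper does: the paper does not prove this statement at all, its entire proof being the single line ``A special case of Theorem~1 in Hodgkinson et al.\ (2020).'' You have instead supplied a self-contained argument, and it is the right one --- comparing the constrained minimiser $\tilde{\b{w}}$ against the feasible self-normalised importance weights $w_i^{\textup{IS}} = \rho(\b{x}^{(i)})/\sum_j \rho(\b{x}^{(j)})$, reducing $e_{\mathcal{H}(k_0)}(\b{w}^{\textup{IS}})^2$ to $\|\mu_n\|_{\mathcal{H}(k_0)}^2/S_n^2$ with $\mu_n$ a mean-zero (in the Bochner sense) $\mathcal{H}(k_0)$-valued empirical average, and invoking geometric covariance decay to get $\mathbb{E}\|\mu_n\|^2 = O(n^{-1})$. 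This is in fact the strategy behind the cited external result (which itself descends from the black-box importance sampling analysis of Liu and Lee), so you have effectively reconstructed the outsourced proof rather than found a shortcut; what your route buys is transparency about exactly where the hypothesis $\sup_{\b{x}} V^{-r}\rho^4 k_0(\b{x},\b{x})^2 < \infty$ enters, namely as the bound $\|\rho(\b{x})k_0(\cdot,\b{x})\|_{\mathcal{H}(k_0)} \leq C V(\b{x})^{r/4}$ needed for the moment and mixing estimates. Two steps deserve more care than ``extends verbatim.'' First, the Hilbert-space covariance decay is not literally the scalar theorem: for $i<j$ you should write $\mathbb{E}\langle h(\b{x}^{(i)}), h(\b{x}^{(j)})\rangle = \mathbb{E}\langle h(\b{x}^{(i)}), (P^{j-i}h)(\b{x}^{(i)})\rangle$ and bound $\|(P^{j-i}h)(\b{x})\|_{\mathcal{H}(k_0)} = \sup_{\|u\|\leq 1} |P^{j-i}\langle u, h\rangle(\b{x})| \leq C V(\b{x})^{r/4} R^{j-i}$, using that the $V^{r/4}$-uniform ergodicity bound is uniform over the class $\{g : |g| \leq V^{r/4}\}$ and hence over the unit ball of test vectors $u$; this is short but should be said. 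Second, your covariance computation implicitly assumes the chain is started in stationarity; for an arbitrary $V$-integrable initial condition one picks up additional geometrically decaying bias terms $\mathbb{E}[h(\b{x}^{(i)})] \neq 0$, which are summable and do not change the rate but must be accounted for. With those two points made explicit, your proof stands on its own.
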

\begin{proof}
A special case of Theorem~1 in \citet{Hodgkinson2020}.
\end{proof}

\noindent The sense in which Theorem \ref{thm: hodgkinson} will be used is captured in the following corollary, which follows from the observation that removal of the non-negativity constraint in \eqref{eq: hodgkinson-weights} does not increase the worst-case error:

\begin{corollary} \label{cor: apply Hodge}
Under the same hypotheses as Theorem \ref{thm: hodgkinson}, let 
\begin{equation} 
    \bar{\b{w}} = \argmin_{ \b{v} \in \mathbb{R}^n} e_{\mathcal{H}(k_0) }(\b{v} ; \{\b{x}^{(i)}\}_{i=1}^n) \quad \text{s.t.} \quad \sum_{i=1}^n v_i = 1 .
\end{equation}
Then $e_{\mathcal{H}(k_0) }(\bar{\b{w}} ; \{\b{x}^{(i)}\}_{i=1}^n) \leq e_{\mathcal{H}(k_0) }(\tilde{\b{w}} ; \{\b{x}^{(i)}\}_{i=1}^n) = O_P(n^{-1/2})$.
\end{corollary}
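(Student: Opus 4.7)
The plan is essentially a one-line containment argument followed by direct invocation of Theorem \ref{thm: hodgkinson}. The key observation is that the feasible set defining $\bar{\b{w}}$, namely $\mathcal{V} = \{\b{v}\in\mathbb{R}^n : \sum_{i=1}^n v_i = 1\}$, strictly contains the feasible set defining $\tilde{\b{w}}$, namely $\mathcal{V}_+ = \{\b{v}\in\mathbb{R}^n : \sum_{i=1}^n v_i = 1,\ \b{v}\geq\b{0}\}$. Both optimisation problems minimise the same non-negative objective $\b{v} \mapsto e_{\mathcal{H}(k_0)}(\b{v}; \{\b{x}^{(i)}\}_{i=1}^n)$, which by \eqref{eq:wce} is a well-defined quadratic form in $\b{v}$, so the two minimisers are guaranteed to exist (uniqueness is not needed for the argument).

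First, I would note that since $\tilde{\b{w}} \in \mathcal{V}_+ \subset \mathcal{V}$, the point $\tilde{\b{w}}$ is feasible for the problem defining $\bar{\b{w}}$. Because $\bar{\b{w}}$ minimises $e_{\mathcal{H}(k_0)}(\cdot ; \{\b{x}^{(i)}\}_{i=1}^n)$ over all of $\mathcal{V}$, this gives immediately
\begin{equation*}
e_{\mathcal{H}(k_0)}(\bar{\b{w}}; \{\b{x}^{(i)}\}_{i=1}^n) \leq e_{\mathcal{H}(k_0)}(\tilde{\b{w}}; \{\b{x}^{(i)}\}_{i=1}^n),
\end{equation*}
which is the first half of the claimed statement.

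Second, I would invoke Theorem \ref{thm: hodgkinson}, whose hypotheses coincide with the hypotheses inherited in the corollary statement ($V$-uniform ergodicity, the density ratio / kernel moment bound involving $r$, and the zero-integration property of $k_0$), to conclude $e_{\mathcal{H}(k_0)}(\tilde{\b{w}}; \{\b{x}^{(i)}\}_{i=1}^n) = O_P(n^{-1/2})$. Combining the two displays yields the claim. There is essentially no obstacle here: the whole content of the corollary is the trivial observation that dropping the inequality constraint $\b{v} \geq \b{0}$ can only enlarge the feasible region and hence can only decrease the optimal worst-case error, so that the rate obtained by \cite{Hodgkinson2020} for the non-negatively constrained minimiser transfers verbatim to the unconstrained-in-sign minimiser $\bar{\b{w}}$.
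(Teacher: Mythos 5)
Your proposal is correct and matches the paper's own justification exactly: the paper states that the corollary ``follows from the observation that removal of the non-negativity constraint in \eqref{eq: hodgkinson-weights} does not increase the worst-case error,'' which is precisely your feasible-set containment argument, followed by invoking Theorem \ref{thm: hodgkinson} for the $O_P(n^{-1/2})$ rate. Nothing further is needed.
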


Now the proof of Theorem \ref{thm: consistency} can be presented:

\begin{proof}[Proof of Theorem \ref{thm: consistency}]

From Assumption A2 and Lemma \ref{lem: comput etc} we have $(\bm{P}^\top \bm{K}_0^{-1} \bm{P})^{-1}$ is almost surely well-defined.
In the proof of Proposition~\ref{lem: KSD bound} we saw that
\begin{equation*}
    e_{\mathcal{H}(k_0)}( \b{w} ; \{\b{x}^{(i)}\}_{i=1}^n )^2 = \b{w}^\top \b{K}_0 \b{w}
\end{equation*}
and
\begin{align*}
    ( I(f) - I_{\textsc{SECF}}(f) )^2 & \leq |f|_{k_0,\mathcal{F}}^2 \; \b{w}^\top \b{K}_0 \b{w}
\end{align*}
Plugging in the expression for $\b{w} = \b{K}_0^{-1} \b{P} ( \b{P}^\top \b{K}_0^{-1} \b{P} )^{-1} \b{e}_1$ in \eqref{eq: weights}, we obtain
\begin{equation} \label{eq: wce-11-element}
    e_{\mathcal{H}(k_0)}( \b{w} ; \{\b{x}^{(i)}\}_{i=1}^n )^2 = [(\b{P}^\top \b{K}_0^{-1} \b{P})^{-1}]_{1,1}
\end{equation}
and
\begin{align*}
    ( I(f) - I_{\textsc{SECF}}(f) )^2 & \leq |f|_{k_0,\mathcal{F}}^2 [(\b{P}^\top \b{K}_0^{-1} \b{P})^{-1}]_{1,1} .
\end{align*}
It therefore suffices to consider the stochastic fluctuations of $[(\b{P}^\top \b{K}_0^{-1} \b{P})^{-1}]_{11}$ as $n \rightarrow \infty$.
To this end, let $[\bm{\Psi}]_{i,j} := \mathcal{L} \phi_j(\bm{x}^{(i)})$ and consider the block matrix
\begin{align}
\frac{\bm{P}^\top \bm{K}_0^{-1} \bm{P}}{\bm{1}^\top \bm{K}_0^{-1} \bm{1}} = \left[ \begin{array}{cc} 1 & \frac{\bm{1}^\top \bm{K}_0^{-1} \bm{\Psi}}{\bm{1}^\top \bm{K}_0^{-1} \bm{1}} \\
\frac{\bm{\Psi}^\top \bm{K}_0^{-1} \bm{1}}{\bm{1}^\top \bm{K}_0^{-1} \bm{1}} & \frac{\bm{\Psi}^\top \bm{K}_0^{-1} \bm{\Psi}}{\bm{1}^\top \bm{K}_0^{-1} \bm{1}} . \end{array} \right] \label{eq: troublesome matrix}
\end{align}
From the block matrix inversion formula we have
\begin{align}
\left[ \left( \frac{\bm{P}^\top \bm{K}_0^{-1} \bm{P}}{\bm{1}^\top \bm{K}_0^{-1} \bm{1}} \right)^{-1} \right]_{1,1} & = \left[ 1 -  \frac{\bm{1}^\top \bm{K}_0^{-1} \bm{\Psi} ( \bm{\Psi}^\top \bm{K}_0^{-1} \bm{\Psi} )^{-1} \bm{\Psi}^\top \bm{K}_0^{-1} \bm{1}}{\bm{1}^\top \bm{K}_0^{-1} \bm{1}}  \right]^{-1} \nonumber \\
& = \left[ 1 - \frac{\langle \bm{1} , \Pi \bm{1} \rangle_n }{\langle \bm{1} , \bm{1} \rangle_n } \right]^{-1} . \label{eq: big 11}
\end{align}
Since $\Pi = \bm{\Psi} (\bm{\Psi}^\top \bm{K}_0^{-1} \bm{\Psi})^{-1} \bm{\Psi}^\top \bm{K}_0^{-1}$ and
\begin{align*}
    \| \Pi \bm{1} \|_n^2 & = \bm{1}^\top \Pi^\top \bm{K}_0^{-1} \Pi \bm{1} \\
    & = \bm{1}^\top \bm{K}_0^{-1} \bm{\Psi} (\bm{\Psi}^\top \bm{K}_0^{-1} \bm{\Psi})^{-1} \bm{\Psi}^\top \bm{K}_{0}^{-1} \bm{\Psi} (\bm{\Psi}^\top \bm{K}_0^{-1} \bm{\Psi})^{-1} \bm{\Psi}^\top \bm{K}_0^{-1} \bm{1} \\
    & = \bm{1}^\top \bm{K}_0^{-1} \bm{\Psi} (\bm{\Psi}^\top \bm{K}_0^{-1} \bm{\Psi})^{-1} \bm{\Psi}^\top \bm{K}_0^{-1} \bm{1} \\
    & = \bm{1}^\top \bm{K}_0^{-1} \Pi \bm{1} \\
    & = \langle \bm{1} , \Pi \bm{1} \rangle_n ,
\end{align*}
our Assumption A3 implies that \eqref{eq: big 11} is almost surely asymptotically bounded, say by a constant $C \in [0,\infty)$.
In other words, it almost surely holds that
$$
[(\b{P}^\top \b{K}_0^{-1} \b{P})^{-1}]_{1,1} \; \leq \; C (\bm{1}^\top \bm{K}_0^{-1} \bm{1})^{-1} 
$$
for all sufficiently large $n$.

To complete the proof we evoke Corollary~\ref{cor: apply Hodge}, noting that the weights $\bar{\b{w}}$ defined in Corollary~\ref{cor: apply Hodge} satisfy $e_{\mathcal{H}(k_0) }(\bar{\b{w}} ; \{\b{x}^{(i)}\}_{i=1}^n) = (\b{1}^\top \b{K}_0^{-1} \b{1})^{-1/2}$, which follows from \eqref{eq: wce-11-element} with $\b{P}= \b{1}$.
Thus from Corollary~\ref{cor: apply Hodge} we conclude
\begin{equation*}
    [(\b{1}^\top \b{K}_0^{-1} \b{1})^{-1}]_{1,1}^{1/2} = e_{\mathcal{H}(k_0)}(\bar{\b{w}} ; \{\b{x}^{(i)}\}_{i=1}^n) \leq e_{\mathcal{H}(k_0)}(\tilde{\b{w}} ; \{\b{x}^{(i)}\}_{i=1}^n) = O_P(n^{-1/2}) ,
\end{equation*}
as required.

\end{proof}

\end{document}